\tiny\color{gray},        
\newtheorem{theorem}{Theorem}[section]
\newtheorem{corollary}[theorem]{Corollary}
\newtheorem{proposition}[theorem]{Proposition}
\newtheorem{lemma}[theorem]{Lemma}
\theoremstyle{definition}
\newtheorem{definition}[theorem]{Definition}
\DeclareMathOperator*{\argmax}{arg\,max}
\DeclareMathOperator*{\argmin}{arg\,min}
\newcommand*\circled[1]{\tikz[baseline=(char.base)]{
            \node[shape=circle,draw,inner sep=2pt] (char) {#1};}}
\newcommand{\gp}[1]
\title{\bf Temporal Fair Division of Indivisible Items}
\author[1,2]{Edith Elkind}
\author[3]{Alexander Lam}
\author[4]{Mohamad Latifian}
\author[5]{Tzeh Yuan Neoh}
\author[1]{Nicholas Teh}
\affil[1]{University of Oxford, UK}
\affil[2]{Alan Turing Institute, UK}
\affil[3]{City University of Hong Kong, Hong Kong SAR}
\affil[4]{University of Edinburgh, UK}
\affil[5]{Agency for Science, Technology and Research, Singapore}
\date{\vspace{-10mm}}
\begin{document}
\maketitle

\begin{abstract}
    We study a fair division model where indivisible items arrive sequentially, and must be allocated immediately and irrevocably. 
    Previous work on online fair division has shown impossibility results in achieving approximate envy-freeness under these constraints. 
    In contrast, we consider an informed setting where the algorithm has complete knowledge of future items, and aim to ensure that the cumulative allocation at each round satisfies approximate envy-freeness--which we define as \emph{temporal envy-freeness up to one item} (TEF1).
    We focus on settings where items can be exclusively goods or exclusively chores.
    For goods, while TEF1 allocations may not always exist, we identify several special cases where they do---two agents, two item types, generalized binary valuations, unimodal preferences---and provide polynomial-time algorithms for these cases. We also prove that determining the existence of a TEF1 allocation is NP-hard. For chores, we establish analogous results for the special cases, but present a slightly weaker intractability result.
    We also establish the incompatibility between TEF1 and Pareto-optimality, with the implication that it is intractable to find a TEF1 allocation that maximizes any $p$-mean welfare, even for two agents.
\end{abstract}

\section{Introduction}
Fair division, a topic at the intersection of economics and computer science, has been extensively studied over the years, with applications spanning from divorce settlements and inheritance disputes to load balancing \citep{BramsTa96,Moulin03}. A typical fair division problem consists of a set of agents and a set of items, and the goal is to obtain a fair allocation of items to agents. In our work, we study the model where these items are \emph{indivisible}, so each must be wholly allocated to an agent. Moreover, the items can either provide positive utility (also called \emph{goods}) or negative utility (also called \emph{chores} or \emph{tasks}). When allocating indivisible items, a desirable and widely-studied fairness notion is \emph{envy-freeness up to one item (EF1)}, a natural relaxation of \emph{envy-freeness (EF)}. In an envy-free allocation, each agent values their own received bundle of items at least as much as every other agent's bundle, but this is not always possible for indivisible items.\footnote{Consider the example of two agents and a single item which must be allocated.} 
Accordingly, in an EF1 allocation, any envy that an agent has towards another agent can be eliminated by removing a single item from the latter or former agent's bundle, depending on whether the items in question are goods or chores, respectively.

Most prior research studies the problem in the \emph{offline} setting, assuming that all of the items are immediately available and ready to be allocated. However, there are various applications where the items arrive and need to be allocated on the spot in a sequential manner. For example, when the university administration places an order for lab equipment, or when a company orders new machines for its franchises, the items may not arrive at the same time, and instead arrive over time due to their availabilities and delivery logistics.  In the case of chores, one may consider the division of tasks over time in collaborative project management.
For a variety of reasons, these items may need to be immediately allocated; there may not be any storage space to keep any unallocated goods,
or the central decision maker may desire a non-wasteful allocation in the sense that items or tasks should not sit idle for periods of time.

These applications can be captured by an \emph{online fair division} model, in which items arrive over time and must be immediately and irrevocably allocated, though it is assumed that each item's valuation is not known until its arrival. Prior research has found that a complete EF1 allocation  of goods cannot be guaranteed under the online fair division model\footnote{In fact, the maximum pairwise envy is $\Omega(\sqrt{t})$ after $t$ rounds, worsening over time.} \citep{BKP+24}. However, this result relies on the assumption that the algorithm has no information about the future. In contrast, in our examples, the delivery services could provide the estimated delivery dates, and there may be a pre-planned timeline for the tasks. Motivated by these nuances, our work studies the \emph{informed online fair division setting}, assuming that the algorithm can access the items' valuations and arrival order upfront.

Note that the assumption of complete information about the future trivially leads to a complete EF1 allocation at the \emph{end} of the allocation period: simply treat the instance as an offline problem and apply any algorithm known to satisfy EF1 (e.g., \citep{LMMS04,AzizCaIg22chores,CaragiannisKuMo19}). However, this approach ignores the cumulative bundles of the items throughout the allocation period, and, consequently, agents may feel that their partial allocations are unfair for extended periods of time. Inspired by this issue, we propose \emph{temporal EF1 (TEF1)}, an extension of EF1 to the informed online fair division setting which requires that at each round, the cumulative allocation satisfies EF1.

The main focus of our work is on achieving TEF1, so for the informed online fair division of indivisible goods or chores, we aim to answer the following existence and computational questions:
\begin{quote}
    \itshape
    Which restricted settings guarantee the existence of a TEF1 allocation, and can we compute such an allocation in polynomial time in these settings? Is it computationally tractable to determine the existence\footnote{Prior work by \citet{he2019fairerfuturepast} has shown that a TEF1 allocation is not guaranteed to exist for goods in the general setting with three or more agents.} of a TEF1 allocation? In terms of existence and tractability, is TEF1 compatible with natural notions of efficiency?
\end{quote}

\subsection{Related Work} 
Our work is closely related to \emph{online fair division}, whereby items arrive over time and must be irrevocably allocated to agents.
The key difference is that in the standard online setting, the algorithm has completely no information on future items, whereas we assume complete future information.
Moreover, the goal in online fair division models is typically
to guarantee a fair allocation to agents at the end of the time horizon, rather than at every round.

As we focus on EF1, papers satisfying envy-based notions in online allocations are particularly relevant. 
\citet{AAGW15} consider envy-freeness from both ex-ante and ex-post standpoints, giving a best-of-both-worlds style result by designing an algorithm for goods which is envy-free in expectation and guarantees a bounded level of envy-freeness. Additionally, \citet{BKP+24} find that allocating goods uniformly at random leads to maximum pairwise envy which is sublinear in the number of rounds. For further reading, we refer the reader to the surveys by \citet{AlWa20} and \citet{AAB+23}. 

There has also been work on online fair division with \textit{partial information} on future items. \citet{BKP+24} study the extent to which approximations of envy-freeness and Pareto efficiency can be simultaneously satisfied under a spectrum of information settings, ranging from identical agents and i.i.d. valuations to zero future information.
An emerging line of work on \emph{learning-augmented} online algorithms has an alternate approach to partial future information: the algorithms are aided by (possibly inaccurate) predictions, typically from a machine-learning algorithm. The focus is to design algorithms which perform \emph{consistently} well with accurate predictions, and are \emph{robust} under inaccurate predictions. 
These predictions could be of each agent's total utility for the entire item set \citep{banerjee2022nash_predictions,BGH+23}, or for a random subset of $k$ incoming items \citep{BePe24}.

Unlike the aforementioned papers, our work considers a completely \emph{informed} variant of the online fair division setting, which has been studied by \citet{he2019fairerfuturepast} for the allocation of goods. Similar to our paper, their objective is to ensure that EF1 is satisfied at each round, but they allow agents to swap their bundles. When multiple goods may arrive at each round, our setting also generalizes the \emph{repeated} fair division setting, in which the same set of goods arrives at each round. For this model, \citet{igarashi2023repeatedfairallocation} give results on the existence of allocations which are envy-free and Pareto optimal in the end, with the items in each round being allocated in an EF1 manner. However, they do not analyse whether the \emph{cumulative} allocation at each round can satisfy some fairness constraint, which is the focus of our paper.
\citet{caragiannis2024repeatedmatching} also consider a model where the same set of items appear at each round, but each agent gets exactly one item per round.

When the valuations are known upfront for the allocation of chores, the model is similar to the field of work on job scheduling. There have been numerous papers studying fair scheduling, but the fairness is typically represented by an objective function which the algorithm aims to minimize or approximate \citep{ScYa00,ImMo20,Baru95}. On the other hand, there is little work on satisfying envy-based notions in scheduling problems, but \citet{LLZ21} study the compatibility of EF1 and Pareto optimality in various settings.
While we consider separately the cases of goods allocation and chores allocation, to the best of our knowledge, there is no prior work which studies an online fair division model with both goods and chores in the same instance under any information assumption.

Temporal models that study concepts of achieving fairness over time have also been recently considered in the social choice literature \citep{chandak2024proportional,elkind2022temporalslot,elkind2024temporalsurvey,elkind2024temporalelections,mackenzie2023takingturns,lackner2020perpetual,zech2024multiwinnerchange}.

\subsection{Our Contributions}
We outline our paper's answers to these key questions as follows.

In Section \ref{sec:existence}, we show the existence of TEF1 allocations (for goods or chores) in restricted settings, such as the case of two agents, when there are two types of items, when agents have generalized binary valuations, or when they have unimodal preferences.
For each of these cases, we provide an accompanying polynomial-time algorithm.
For the allocation of goods, we show that determining whether there exists a TEF1 allocation is NP-hard; whereas for chores, we show that given a partial TEF1 allocation, it is NP-hard to determine if there exists a TEF1 allocation that allocates all remaining chores.

In Section \ref{sec:efficiency}, we investigate the compatibility of TEF1 and Pareto-optimality (PO).
We show that even in the case of two agents, while a TEF1 allocation is known to exist and can be computed in polynomial time (for both goods and chores), existence is no longer guaranteed if we mandate PO as well.
Moreover, we show that in this same setting, determining the existence of TEF1 and PO allocations is NP-hard.
Our result also directly implies the computational intractability of determining whether there exists a TEF1 allocation that maximizes any $p$-mean welfare objective (which subsumes most popular social welfare objectives).

Finally, in Section \ref{sec:multigoods}, we consider the special case where the same set of items arrive at each round, and show that even determining whether repeating a particular allocation in two consecutive rounds can result in a TEF1 allocation is NP-hard.
We complement this with a polynomial-time algorithm for computing a TEF1 allocation in this case when there are just two rounds.

\section{Preliminaries}
For each positive integer $k$, let $[k] := \{1,\dots,k\}$. 
We consider the problem of fairly allocating indivisible items to agents over multiple rounds. 
Let an instance of the informed online fair division problem be denoted by $\mathcal{I} = \langle N,T,\{O_t\}_{t\in [T]}, \mathbf{v}=(v_1,\dots,v_n)\rangle$, in which we have a set of agents $N = [n]$ and a set $O$ of $m$ items which arrive over $T$ rounds, and are to be allocated to the agents. For each $t\in [T]$, we denote the set of items that arrive at round $t$ by $O_t$, and define the cumulative set of items that arrived in rounds $1,\dots,t$ by $O^{t} := \bigcup_{\ell\in [t]} O_\ell$. Note that $O = O^{T}$. 

We assume that each agent $i \in N$ has an additive \emph{valuation function} $v_i: 2^O \to \mathbb{R}$ over the items, i.e., for $S \subseteq O$, $v_i(S) = \sum_{o \in S} v_i(\{o\})$. For notational convenience, we write $v_i(o)$ instead of $v_i(\{o\})$ for a single item $o \in O$, and $v$ instead of $v_i$ when valuation functions are identical. Denote $\mathbf{v} = (v_1,\ldots,v_n)$ as the \emph{valuation profile}. In this work, we consider two cases: \emph{goods allocation} where for each $i\in N$ and $o\in O$, $v_i(o)\geq 0$, and \emph{chores allocation} where for each $i\in N$ and $o\in O$, $v_i(o)\leq 0$. For clarity, we use $g$ instead of $o$ and refer to items as goods when explicitly referring to the goods setting, and $c$ instead of $o$ and refer to the items as chores when considering the chores setting.

An allocation $\mathcal{A} = (A_1, \ldots, A_n)$ of items in $O$ to the agents is an ordered partition of $O$, i.e. for $i, j \in N$, $A_i \cap A_j = \varnothing$ and $\bigcup_{i \in N} A_i = O$. In addition, for $t \in [T]$ we denote the allocation after round $t$ by $\mathcal{A}^{t} = (A^{t}_1, \ldots, A^{t}_n)$ where $A^{t}_i = A_i \cap O^{t}$.
For $t < T$, we sometimes refer to $\mathcal{A}^{t}$ as a \emph{partial allocation}. Note that $\mathcal{A} = \mathcal{A}^{T}$.

Our goal is to find an allocation that is fair after each round. The main fairness notion that we consider is \emph{envy-freeness up to one item (EF1)}, which is a well-studied notion in the problem of allocating indivisible items.

\begin{definition}
In a goods (resp. chores) allocation instance, an allocation $\mathcal{A} = (A_1, \ldots, A_n)$ is said to be EF1 if for all pairs of agents $i,j \in N$, 
there exists a good $g \in A_j$ (resp. chore $c \in A_i$) such that $v_i(A_i) \geq v_i(A_j \setminus \{g\})$ (resp. $v_i(A_i\setminus \{c\}) \geq v_i(A_j )$). 
\end{definition}
In this work we target fairness in a \emph{cumulative} sense, introducing the notion of \emph{temporal envy-freeness up to one item (TEF1)} which requires that at every round prefix, the cumulative allocation of items that have arrived so far satisfies EF1.
\begin{definition}[Temporal EF1]
    For any $t \in [T]$, an allocation $\mathcal{A}^t = (A^t_1, \dots, A^t_n)$ is said to be \emph{temporal envy-free up to one item (TEF1)} if for all $t' \leq t$, the allocation $\mathcal{A}^{t'}$ is EF1. 
\end{definition}
It is important to note the distinction between TEF1 and EF1: while the EF1 property applies only to a single allocation, TEF1 extends this requirement by ensuring that the cumulative allocation at every prior round also satisfies EF1.

However, the possible non-existence of TEF1 allocations has been shown in the general goods allocation setting, as
\citet[Thm. 4.2]{he2019fairerfuturepast} illustrated using a counterexample with $3$ agents and $23$ items, which can be generalized to $n>3$ agents. For completeness, we include this counterexample along with an intuitive explanation in Appendix~\ref{app:counterexample}.
We remark that this counterexample cannot be modified to show a similar non-existence result in the case of chores.
As such, while we conjecture that this non-existence result should also hold for chores, this remains an open question.

We assume that the reader is familiar with basic notions of classic complexity theory~\citep{papadimitriou_computational_2007}.
All omitted proofs can be found in Appendices \ref{app:omitted_existence} and \ref{app:omitted_efficiency}.

\section{On the Existence of TEF1 Allocations} \label{sec:existence}
We know that TEF1 does not always exist in general instances. 
However, there are some restricted classes of instances under which a TEF1 allocation is guaranteed to exist. 
In this section we identify several such settings, but before doing so, we first demonstrate that it typically suffices to prove results for instances where only one item appears at each round (i.e., $T=m$ and $|O_i| = 1$). 
Since having a single item per round is a special case of the setting with multiple items per round, any impossibility result for the former directly extends to the latter. 
Conversely, as we will show in the following lemma, any positive existence result for the single-item setting also implies a positive result for the general case with multiple items per round. 

\begin{lemma}\label{lem:transform}
    Let $\mathcal{I}$ be an instance with $T$ rounds and a total of $m$ items, where multiple items can appear in a single round. We can transform $\mathcal{I}$ into an equivalent instance $\mathcal{I}^{=1}$ with exactly $m$ rounds, such that each round has exactly one item.
    Then, if a \emph{TEF1} allocation exists for $\mathcal{I}^{=1}$, a \emph{TEF1} allocation also exists for $\mathcal{I}$. 
\end{lemma}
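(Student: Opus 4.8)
The plan is to realize $\mathcal{I}^{=1}$ by simply \emph{splitting} each round into a block of single-item rounds. Concretely, for each $t \in [T]$ fix an arbitrary ordering $o^t_1, \dots, o^t_{k_t}$ of the items in $O_t$, where $k_t := |O_t|$, and let $\mathcal{I}^{=1}$ be the instance with the same agent set $N$ and the same valuation profile $\mathbf{v}$, whose $m$ rounds present the items in the concatenated order
\[
o^1_1, \dots, o^1_{k_1}, \; o^2_1, \dots, o^2_{k_2}, \; \dots, \; o^T_1, \dots, o^T_{k_T},
\]
one per round. Since the agents, the items, and all valuations are unchanged, an allocation of $O$ in $\mathcal{I}$ is literally the same object as an allocation of $O$ in $\mathcal{I}^{=1}$; only the list of ``checkpoints'' at which EF1 is required differs.

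Next I would record the correspondence between these checkpoints. Set $\tau(0) := 0$ and $\tau(t) := \sum_{\ell=1}^{t} k_\ell$ for $t \in [T]$, so $\tau(T) = m$. In $\mathcal{I}^{=1}$, the set of items that have arrived after the first $\tau(t)$ rounds is exactly $O^t$, because rounds $\tau(t-1)+1, \dots, \tau(t)$ of $\mathcal{I}^{=1}$ present precisely the items of $O_t$. Hence, for any allocation $\mathcal{A}$, its round-$t$ partial allocation in $\mathcal{I}$, namely $\mathcal{A}^t = (A_i \cap O^t)_{i \in N}$, coincides with its round-$\tau(t)$ partial allocation in $\mathcal{I}^{=1}$. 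In other words, the finite family of prefix allocations that TEF1 constrains in $\mathcal{I}$ is a sub-list of the prefix allocations that TEF1 constrains in $\mathcal{I}^{=1}$.

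Finally, suppose $\mathcal{A}$ is a TEF1 allocation for $\mathcal{I}^{=1}$. By definition every prefix allocation of $\mathcal{A}$ in $\mathcal{I}^{=1}$ is EF1; in particular the prefixes at rounds $\tau(1) < \tau(2) < \dots < \tau(T)$ are EF1. By the correspondence above, these are exactly the partial allocations $\mathcal{A}^1, \dots, \mathcal{A}^T$ of $\mathcal{A}$ viewed as an allocation for $\mathcal{I}$, so $\mathcal{A}$ is TEF1 for $\mathcal{I}$, as claimed. (Empty rounds, if allowed, contribute no new checkpoints and can be ignored.)

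I do not expect a genuine obstacle here: the content is just the observation that refining the rounds only \emph{adds} EF1 constraints, so feasibility of the refined instance is the strictly more demanding condition. The one point worth flagging is why the implication is one-directional --- a TEF1 allocation for $\mathcal{I}$ need not be TEF1 for $\mathcal{I}^{=1}$, since EF1 can fail at an intermediate checkpoint inside some block $O_t$ --- and, relatedly, that the particular within-round ordering chosen for $\mathcal{I}^{=1}$ is immaterial for the direction we need. It is also convenient to note that the transformation is computable in polynomial time and increases the instance size only linearly, so algorithmic (not merely existential) results transfer along it.
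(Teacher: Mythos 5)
Your proposal is correct and follows essentially the same route as the paper: both construct $\mathcal{I}^{=1}$ by serializing each round's items into consecutive single-item rounds, observe that every cumulative set $O^t$ of $\mathcal{I}$ appears as some prefix $\widetilde{O}^{t'}$ of $\mathcal{I}^{=1}$, and conclude that the EF1 checkpoints of $\mathcal{I}$ form a subfamily of those of $\mathcal{I}^{=1}$. Your explicit indexing via $\tau(t)$ and the remarks on one-directionality are just slightly more detailed presentations of the same argument.
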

\begin{proof}
    Consider an arbitrary instance $\mathcal{I} = \langle N, T, \{O_t\}_{t\in [T]}, \mathbf{v} = (v_1,\dots, \allowbreak v_n)\rangle$.
    Without loss of generality, we assume that items are labelled in a non-decreasing fashion (with respect to the rounds), i.e., for any two rounds $r,r' \in [T]$ where $ r< r'$, if $o_j \in O_r$ and $o_{j'} \in O_{r'}$, then $j < j'$.  
    We first transform $\mathcal{I}$ into another instance $\mathcal{I}^{=1} = \langle N,m, \{\widetilde{O}_t\}_{t\in [m]}, \mathbf{v}\rangle$ where $\widetilde{O}_t = \{o_t\}$ for each $t \in [m]$.
    Suppose there exists a TEF1 allocation $\mathcal{A}$ for $\mathcal{I}^{=1}$, i.e., for each $t \in [m]$, the partial allocation $\mathcal{A}^t = (A^t_1,\dots,A^t_n)$ (where $A^t_i = A_i \cap \widetilde{O}^t$ for each $i \in N$) is EF1.
    We construct an allocation $\mathcal{B}$ for instance $\mathcal{I}$.
    For each $t \in [T]$, let $t' \in [m]$ be the round such that $O^t = \widetilde{O}^{t'}$, and set $\mathcal{B}^t = \mathcal{A}^{t'}$. 
    Note that $\{O^t\}_{t \in [T]} \subseteq \{\widetilde{O}^t\}_{t\in [m]}$, so since $\mathcal{A}$ is a TEF1 allocation for $\mathcal{I}^{=1}$, $\mathcal{B}$ is a TEF1 allocation for $\mathcal{I}$. 
\end{proof}
As we will show later, the above result cannot be immediately applied to settings where preferences evolve over time (e.g., for unimodal preferences), but this is due to the inconsistency of these settings between the single and multiple items cases. For the remainder of the paper, we simplify the notation based on the transformation in the proof of Lemma~\ref{lem:transform}: when a single item arrives at each round, the item subscript indicates the round in which it arrives (i.e. the single item in $O_i$ is denoted by $o_i$ or $g_i$ or $c_i$).

\subsection{Two Agents} \label{sec:twoagents}

\citet[Thm. 3.4]{he2019fairerfuturepast} showed the existence of a TEF1 allocation for goods when $n=2$, by presenting a polynomial-time algorithm that returns such an allocation.
We begin by extending this result to the case of chores.

Intuitively, in each round the algorithm greedily allocates the single chore that arrives in that round to an agent that does not envy the other agent in the current (partial) allocation.
A counter $s$ is used to denote the last round after which $\mathcal{A}^s$ was envy-free, and if the allocation of a chore $c_t$ (for some round $t \in [m]$) results in both agents envying each other in $\mathcal{A}^t \setminus \mathcal{A}^s$, then the agents' bundles in $\mathcal{A}^t \setminus \mathcal{A}^s$ are swapped.

\begin{algorithm}[h!]
   \caption{Returns a TEF1 allocation for chores when $n=2$}
   \label{alg:twoagents_chores}
   \begin{flushleft} \textbf{Input}: Set of agents $N=\{1,\dots,n\}$, set of chores $O=\{c_1,\dots,c_m\}$, and valuation profile $\mathbf{v} = (v_1, v_2)$ \\
   \textbf{Output}: TEF1 allocation $\mathcal{A}$ of chores in $O$ to agents in $N$ 
   \end{flushleft}
   \begin{algorithmic}[1]
       \STATE Initialize $s\leftarrow 0$ and  $\mathcal{A}^0 \leftarrow (\varnothing,\varnothing)$
       \FOR{$t = 1,2,\dots,m$}
       \IF{$v_1(A^{t-1}_1 \setminus A^s_1)\geq v_1(A^{t-1}_2 \setminus A^s_2)$}
           \STATE $\mathcal{A}^t\leftarrow (A_1^{t-1}\cup \{c_t\}, A_2^{t-1})$
       \ELSE
           \STATE $\mathcal{A}^t\leftarrow (A_1^{t-1}, A_2^{t-1}\cup \{c_t\})$
       \ENDIF
       \IF{$v_1(A^t_1 \setminus A^s_1) < v_1(A^t_2 \setminus A^s_2)$ and $v_2(A^t_2 \setminus A^s_2) < v_2(A^t_1 \setminus A^t_1)$}
           \STATE $\mathcal{A}^t\leftarrow (A_1^s\cup A_2^t \setminus A^s_2, A_2^s\cup A_1^t \setminus A_1^s)$
       \ENDIF
       \IF{$v_1(A^t_1 \setminus A^s_1)\geq v_1(A^t_2 \setminus A^s_2)$ and $v_2(A^t_2 \setminus A^s_2)\geq v_2(A^t_1 \setminus A^s_1)$}
           \STATE $s\leftarrow t$
       \ENDIF
       \ENDFOR
       \STATE \textbf{return} $\mathcal{A} = (A_1^m, A_2^m)$
   \end{algorithmic}
\end{algorithm}

\begin{theorem}\label{thm:2agents}
   For $n=2$, \Cref{alg:twoagents_chores} returns a \emph{TEF1} allocation for chores, and it runs in polynomial time.
\end{theorem}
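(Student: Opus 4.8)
The plan is to dispose of the running time at once --- the loop runs $m$ times and each iteration only evaluates additive valuations of a constant number of bundles (each of size at most $m$) and performs at most one swap of two bundles, so the total work is $O(m^2)$, which is polynomial --- and then to prove the TEF1 guarantee by maintaining a strengthened invariant on the portion of the allocation built since the last update of the counter $s$. To that end, let $0=s_0<s_1<\dots<s_k\le m$ be the successive values the counter $s$ takes; call each $s_\ell$ a \emph{checkpoint} and each interval $(s_\ell,s_{\ell+1}]$ an \emph{epoch} (during which $s=s_\ell$), and for a round $t$ of that epoch write $B^t_i:=A^t_i\setminus A^{s_\ell}_i$ for the \emph{incremental bundle} of agent $i$ (the obvious typo in the swap condition is read as $v_2(A^t_1\setminus A^s_1)$, i.e., ``each agent envies the other on the incremental bundles''). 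Two structural facts are central. First, a swap always forces the reset that follows it, since afterward each agent strictly prefers the bundle it now holds; hence swaps occur only at checkpoints, and once round $s_\ell$ is processed no later swap touches $O^{s_\ell}$, so $A_i\cap O^{s_\ell}=A^{s_\ell}_i$ in the returned allocation. Second, the swap rewrites the \emph{entire} incremental part $\mathcal{A}^t\setminus\mathcal{A}^{s_\ell}$, so in the returned allocation the prefix at a round $t$ of an epoch consists of $A^{s_\ell}_i$ extended by $B^t_i$ if that epoch ended with a plain reset but by $B^t_{3-i}$ (incremental bundles exchanged) if it ended with a swap. Since TEF1 concerns prefixes of the returned allocation, I must guarantee that the incremental allocation of every round is EF1 \emph{regardless of which agent holds which of the two incremental bundles}.

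The core claim, proved by induction on $t$ inside a fixed epoch, is that the unordered pair $\{B^t_1,B^t_2\}$ is \emph{bi-EF1}: both ways of assigning these two bundles to agents $1$ and $2$ form an EF1 (chores) allocation. The base case is a singleton $\{\{c_t\},\varnothing\}$ (or the empty pair at a checkpoint), which is bi-EF1 since deleting $c_t$ empties the nonempty bundle. For the step, observe that outside a checkpoint round exactly one agent is the envier in the current (unswapped) incremental assignment --- the reset test failed so someone envies, and the swap test would have fired had both envied --- and since $c_t$ is handed to agent $1$ precisely when $v_1(B^{t-1}_1)\ge v_1(B^{t-1}_2)$, whichever agent receives $c_t$ it is the \emph{other} agent that was the unique envier; this fixes the direction of both agents' comparisons at round $t-1$. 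A short case analysis (on who receives $c_t$ and on strictness of the relevant inequalities) then verifies all four EF1 requirements: in each case the envy is killed by deleting either the freshly arrived $c_t$ or the chore furnished by the inductive hypothesis, the inequalities needed at round $t-1$ being available exactly because of the single-envier structure and the assignment rule, and for chores enlarging the opponent's bundle with $c_t$ only helps ($v_i(B^{t-1}_j\cup\{c_t\})\le v_i(B^{t-1}_j)$). Finally, when an epoch ends with a plain reset the unswapped incremental at $s_{\ell+1}$ is envy-free, and when it ends with a swap the swapped incremental at $s_{\ell+1}$ is envy-free.

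To conclude, an easy induction over checkpoints shows each $\mathcal{A}^{s_\ell}$ is envy-free ($\mathcal{A}^{s_0}=(\varnothing,\varnothing)$; and $\mathcal{A}^{s_{\ell+1}}$ is the envy-free $\mathcal{A}^{s_\ell}$ extended by an envy-free incremental allocation, hence envy-free by additivity). Then for any round $t$: if $t$ is a checkpoint the prefix is $\mathcal{A}^{s_\ell}$, envy-free hence EF1; otherwise $t$ is in some epoch, and the prefix of the returned allocation at $t$ is $\mathcal{A}^{s_\ell}$ extended by $(B^t_1,B^t_2)$ or by $(B^t_2,B^t_1)$, which is EF1 by combining the bi-EF1 claim with the envy-freeness of $\mathcal{A}^{s_\ell}$ via additivity ($v_i(A^t_i\setminus\{c\})=v_i(A^{s_\ell}_i)+v_i(B^t_i\setminus\{c\})$ and $v_i(A^t_j)=v_i(A^{s_\ell}_j)+v_i(B^t_j)$). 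Hence every prefix of the output is EF1, i.e., the output is TEF1. The main obstacle is recognizing that plain EF1 of the incremental allocation is too weak an invariant: because the swap can retroactively reassign a whole epoch, prefixes of the returned allocation may see the incremental bundles with the two agents exchanged, which forces the symmetric ``bi-EF1'' statement; closing its induction --- extracting from the single envier at round $t-1$ and the agent-$1$-driven assignment rule enough control over both agents' envy after round $t$ in both orientations --- is the technical heart, while the additivity, checkpoint bookkeeping, and ``swaps happen only at checkpoints'' are routine.
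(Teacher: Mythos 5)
Your proof is correct and follows essentially the same route as the paper's: decompose the run into epochs delimited by the counter $s$, observe that a swap retroactively reassigns the whole incremental part of an epoch (so prefixes of the returned allocation may see the incremental bundles exchanged), and induct within each epoch using the single-envier structure and the assignment rule, finishing by adding the envy-free checkpoint allocation. The only difference is packaging: you prove both orientations EF1 unconditionally (your ``bi-EF1'' invariant), whereas the paper splits into two cases according to whether the epoch in question ends with a swap and proves EF1 only for the orientation that survives into the final allocation.
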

\begin{proof}
    The polynomial runtime of \Cref{alg:twoagents_chores} is easy to verify, given there is only one \textbf{for} loop which runs in $\mathcal{O}(m)$, and the other operations within run $\mathcal{O}(m)$ time.
    Thus, we focus on proving correctness.
    
    For any $t \in [m]$, let $r_t < t$ be defined as the latest round before $t$ such that $\mathcal{A}^{r_t}$ is EF.
    This implies that if $\mathcal{A}^{t'} \setminus \mathcal{A}^{r_t}$ is EF1 for all $t'\in \{r_t,r_t+1,\dots,t\}$, then $\mathcal{A}^t$ is also TEF1. Therefore, it suffices to show that for every $t\in [m]$, $\mathcal{A}^t \setminus \mathcal{A}^{r_t}$ is EF1, which we will prove by induction.
    For the base case where $t=r_t+1$, $\mathcal{A}^t \setminus \mathcal{A}^{r_t}$ only consists of a single allocated chore $c_{r_t+1}$, and is trivially EF1.

    Next, we prove the inductive step. 
    Assume that for some $t > r_t+1$, $\mathcal{A}^{t-1} \setminus \mathcal{A}^{r_t}$ is EF1.
    We will show that $\mathcal{A}^{t} \setminus \mathcal{A}^{r_t}$ is also EF1. Let $r_t'$ be the earliest round ahead of $r_t$ such that $\mathcal{A}^{r_t'} \setminus \mathcal{A}^{r_t}$ is EF (if such a round exists). We divide the remainder of the proof into two cases depending on whether a partial bundle swap (as in line 9 of the algorithm) occurs at round $r_t'$.
    \begin{description}
        \item[Case 1: Round $r_t'$ does not exist or no swap at round $r'_t$.] 
        Suppose without loss of generality that $v_1(A^{t-1}_1 \setminus A^{r_t}_1)< v_1(A^{t-1}_2 \setminus A^{r_t}_2)$, i.e., agent $1$ envies agent $2$ in $\mathcal{A}^{t-1} \setminus \mathcal{A}^{r_t}$. Then agent $2$ does not envy agent $1$ (otherwise we would swap the bundles, contradicting the definition of $r_t$), and consequently receives $c_t$. If agent $2$ envies agent $1$ after receiving $c_t$, this envy can be removed by removing $c_t$. We also know that $\mathcal{A}^{t}\setminus \mathcal{A}^{r_t}$ is EF1 w.r.t. agent $1$ (who did not receive a chore in round $t$) because by our inductive assumption, $\mathcal{A}^{t-1} \setminus \mathcal{A}^{r_t}$ is EF1, concluding the proof of this case.
        \item[Case 2: Swap occurs at round $r_t'$.]
        We assume that $r_t'>t$, because if $r_t'=t$, then $\mathcal{A}^t\setminus \mathcal{A}^{r_t}$ is EF and therefore EF1. For each $i\in \{t-1,t\}$, let $\mathcal{B}^i\setminus \mathcal{A}^s$ refer to the algorithm's allocation of the chores $O^i\setminus O^s$ \emph{before} the bundle swap, and suppose without loss of generality that $v_1(B^{t-1}_1\setminus B^{r_t}_1)< v_1(B^{t-1}_2\setminus B^{r_t}_2)$. We therefore must have $v_2(B^{t-1}_2\setminus B^{r_t}_2)\geq v_2(B^{t-1}_1\setminus B^{r_t}_1)$ to avoid contradicting the definition of $r_t$. Since agent $2$ is not envied by agent $1$ in round $t-1$, it receives chore $c_t$, so we have $\mathcal{B}^t\setminus \mathcal{B}^{r_t}=(B^{t-1}_1\setminus B^{r_t}_1, (B^{t-1}_2\setminus B^{r_t}_2) \cup \{c_t\})$. This means that after the bundle swap is executed, we have $\mathcal{A}^t\setminus \mathcal{A}^{r_t}=((B^{t-1}_2\setminus B^{r_t}_2) \cup \{c_t\}, B^{t-1}_1\setminus B^{r_t}_1)$. Recall that $v_1(B^{t-1}_2\setminus B^{r_t}_2)>v_1(B^{t-1}_1\setminus B^{r_t}_1)$, so $c_t$ can be removed from agent $1$'s bundle to eliminate their envy towards agent $2$. Also, by the inductive assumption, there exists a chore $c\in A^{t-1}_2\setminus A^{r_t}_2$ such that $v_2((A^{t-1}_2\setminus A^{r_t}_2)\setminus \{c\})\geq v_2(A^{t-1}_1\setminus A^{r_t}_1)$. Observe that $A^{t}_2\setminus A^{r_t}_2=A^{t-1}_2\setminus A^{r_t}_2$ and $A^{t}_1\setminus A^{r_t}_1=(A^{t-1}_1\setminus A^{r_t}_1)\cup \{c_t\}$. Combining this with the inductive assumption, we have that there exists a chore $c\in A_2^t\setminus A_2^{r_t}$ such that
    \begin{align*}
    v_2((A^{t}_2\setminus A^{r_t}_2)\setminus \{c\})&=v_2((A^{t-1}_2\setminus A^{r_t}_2)\setminus \{c\})\\
    &\geq v_2(A^{t-1}_1\setminus A^{r_t}_1)\\
    &\geq v_2((A^{t-1}_1\setminus A^{r_t}_1)\cup \{c_t\})\\
    &=v_2(A^{t}_1\setminus A^{r_t}_1).
    \end{align*}
    Therefore, $\mathcal{A}^{t} \setminus \mathcal{A}^{r_t}$ is EF1 in this case.
    \end{description}
    We have shown that $\mathcal{A}^{t} \setminus \mathcal{A}^{r_t}$ is EF1 regardless of whether the allocation has undergone a bundle swap, so by induction, \Cref{alg:twoagents_chores} returns a TEF1 allocation for chores.
\end{proof}
Next, we consider \emph{temporal envy-freeness up to any good (TEFX)}, the temporal variant of the stronger \emph{envy-freeness up to any good (EFX)} notion (which is the same as EF1, except that we drop \emph{any} good instead of \emph{some} good).

\begin{definition}
    In a goods (resp. chores) allocation instance, an allocation $\mathcal{A} = (A_1, \ldots, A_n)$ is said to be EFX if for all pairs of agents $i,j \in N$, 
and all goods $g \in A_j$ (resp. chores $c \in A_i$) we have $v_i(A_i) \geq v_i(A_j \setminus \{g\})$ (resp. $v_i(A_i\setminus \{c\}) \geq v_i(A_j )$). 
\end{definition}

\begin{definition}[Temporal EFX]
    For any $t \in [T]$, an allocation $\mathcal{A}^t = (A^t_1, \dots, A^t_n)$ is said to be \emph{temporal envy-free up to any item (TEFX)} if for all $t' \leq t$, the allocation $\mathcal{A}^{t'}$ is EFX. 
\end{definition}

We then show that TEFX allocations (for goods or chores) may not exist, even for two agents with identical valuations, and when there are only two types of items.

\begin{proposition} \label{prop:tefx_n=2_notexist}
    A \emph{TEFX} allocation for goods or chores may not exist, even for $n=2$ with identical valuations and two types of items.
\end{proposition}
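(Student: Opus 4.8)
The plan is to produce a single fixed instance — two agents with a common additive valuation, five items, and only two distinct item values — and show that \emph{every} allocation of it fails to be TEFX. By Lemma~\ref{lem:transform} it suffices to argue in the single-item-per-round model, so I specify the instance by the sequence of per-round item values. For goods I would take $v(g_1),\dots,v(g_5) = 1,2,2,1,2$, and for chores the mirror sequence $v(c_1),\dots,v(c_5) = -1,-2,-2,-1,-2$; each uses exactly two item types (value $1$ vs.\ $2$, resp.\ $-1$ vs.\ $-2$).

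The workhorse is a simple characterization: for two agents with identical valuations, a cumulative allocation $(A_1,A_2)$ is EFX iff $\bigl|\,|v(A_1)|-|v(A_2)|\,\bigr| \le \min_{o\in A_\star}|v(o)|$, where $A_\star$ is whichever bundle maximizes $|v(\cdot)|$ (and the condition holds vacuously when $|v(A_1)|=|v(A_2)|$). The point is that the ``poorer'' agent in the goods case (resp.\ the ``overloaded'' agent in the chores case) is the only one who can violate EFX, and for that agent the EFX test collapses to this one inequality. This both reduces checking EFX at each round to arithmetic and makes the goods and chores arguments literally the same up to a sign, so I would present goods in detail and then invoke symmetry for chores.

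For the goods instance I would then trace the five rounds, showing the assignment is forced at each step. Up to renaming agents, $g_1$ goes to agent~$1$. Assigning $g_2$ to agent~$1$ gives bundle value $3$ against $0$ with smallest good $g_1=1$, violating EFX, so $g_2$ goes to agent~$2$. At round~$3$, assigning $g_3$ to agent~$2$ gives $\{g_2,g_3\}$ of value $4$ against $\{g_1\}$, a gap of $3$ exceeding the smallest good $2$; so $g_3$ goes to agent~$1$, and $\{g_1,g_3\}$ (value $3$) versus $\{g_2\}$ (value $2$) is EFX because the gap equals the smallest good $g_1=1$. At round~$4$, giving $g_4$ to agent~$1$ gives value $4$ against $2$ with smallest good still $1$, not EFX, so $g_4$ goes to agent~$2$, equalizing the bundles at value $3$. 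Finally, at round~$5$: giving $g_5$ to agent~$1$ yields bundles $5$ and $3$ whose gap $2$ exceeds the smallest good $g_1=1$; giving $g_5$ to agent~$2$ yields bundles $3$ and $5$ whose gap $2$ exceeds the smallest good $g_4=1$; both fail EFX. Hence the round-$5$ cumulative allocation cannot be EFX under any extension, so no TEFX allocation exists.

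The step I would be most careful with is justifying that each intermediate move is genuinely \emph{forced} rather than merely convenient: this rests on the round-$1$ symmetry reduction and on the single-inequality characterization above, and crucially on the fact that the value-$1$ good $g_1$ (and later $g_4$) remains the minimum-value item of the eventual ``larger'' bundle, which pins the EFX slack to $1$ and yields the deadlock. Everything else is routine arithmetic. For chores, because the EFX condition above is invariant under $x\mapsto -x$ (it is stated in terms of $|v(\cdot)|$, with ``larger'' meaning larger disutility), the same computation applied to $-1,-2,-2,-1,-2$ shows that round~$5$ admits no EFX completion, establishing non-existence of a TEFX allocation for chores as well.
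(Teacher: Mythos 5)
Your proof is correct and takes essentially the same approach as the paper: an explicit small instance with identical valuations and two item values, analysed by showing each allocation step is forced until a deadlock. The paper achieves this with only three items (values $1,1,2$ for goods, $-1,-1,-2$ for chores), which makes your five-item chain of forced moves and the accompanying EFX characterization lemma unnecessary overhead, though both are verifiably sound.
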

\begin{proof}
    We first prove the result for the case of goods.
    Consider the instance with two agents $N =  \{1,2\}$ and three goods $O = \{g_1,g_2,g_3\}$, where agents have identical valuations: $v(g_1) = v(g_2) = 1$ and $v(g_3) = 2$.
    In order for the partial allocation at the end of the second round to be TEFX, each agent must be allocated exactly one of $\{g_1,g_2\}$---suppose that agent~$1$ is allocated $g_1$ and agent~$2$ is allocated $g_2$.
    In the third round,
    without loss of generality, suppose that $g_3$ is allocated to agent~$1$.
    Then, agent~$2$ will still envy agent~$1$ even after dropping $g_1$ from agent~$1$'s bundle, as $v(A_2) = v(g_2) = 1 < v(A_1 \setminus \{g_1\}) = v(g_3) = 2$.

    Next, we prove the result for chores.
    Consider the instance with two agents $N =  \{1,2\}$ and three chores $O = \{c_1,c_2,c_3\}$, where agents have identical valuations: $v(c_1) = v(c_2) = -1$ and $v(c_3) = -2$.
    In order for the partial allocation at the end of the secound round to be TEFX, each agent must be allocated exactly one of $\{c_1,c_2\}$---suppose that agent~$1$ is allocated $c_1$ and agent~$2$ is allocated $c_2$.
    In the third round, without loss of generality, suppose that $c_3$ is allocated to agent~$1$.
    Then, agent~$1$ will still envy agent~$2$ even after dropping $c_1$ from her own bundle, as $v(A_1 \setminus \{c_1\}) = v(c_3) = -2 < v(A_2) = v(c_2) = -1$.
\end{proof}

\subsection{Other Restricted Settings}

The next natural question we ask is whether there are other special cases where EF1 allocation is guaranteed to exist.
We answer this question affirmatively by demonstrating the existence of EF1 allocations in three special cases, each supported by a polynomial-time algorithm that returns such an allocation.

\subsubsection{\textbf{Two Types of Items}}
The first setting we consider is one where items can be divided into two \emph{types}, and each agent values all items of a particular type equally.
Formally, let $S_1, S_2 \subseteq O$ be a partition of the set of items, so that $S_1 \cap S_2 = \varnothing$, and $S_1 \cup S_2 = O$. 
Then, for any $r \in \{1,2\}$, two items $o,o' \in S_r$, and agent $i \in N$, we have that $v_i(o) = v_i(o')$.

Settings with only two types of items/tasks arise naturally in various applications, such as distributing food and clothing donations from a charity, or allocating cleaning and cooking chores in a household.
This preference restriction has been studied for chores in offline settings \citep{ALRS23,GMQ24}, and we remark that agents may have distinct valuations for up to $2n$ different items, unlike the extensively studied \emph{bi-valued} preferences \citep{EPS22,GMQ22} which involve only two distinct item values.

We show that for this setting, a TEF1 allocation for goods or chores always exists and can be computed in polynomial time. 
Intuitively, the algorithm treats the two item types independently: items of the first type are allocated in a round-robin manner from agent $1$ to $n$, while items of the second type are allocated in reverse round-robin order from agent $n$ to $1$. 
Then, our result is as follows.
\begin{theorem} \label{thm:twotypes}
    When there are two types of items, 
    a \emph{TEF1} allocation for goods or chores exists and can be computed in polynomial time.
\end{theorem}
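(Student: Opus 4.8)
The plan is to exploit Lemma~\ref{lem:transform} so that it suffices to handle the single-item-per-round case, and then to give an explicit online algorithm driven by a round-robin over one item type and a reverse round-robin over the other. Concretely, fix a partition $O = S_1 \join S_2$ into the two types. Maintain two counters, one tracking how many items of type $S_1$ have arrived so far and one for type $S_2$. When the item arriving at round $t$ is of type $S_1$, give it to agent $1 + (k_1 \bmod n)$ where $k_1$ is the number of type-$S_1$ items seen before $t$; when it is of type $S_2$, give it to agent $n - (k_2 \bmod n)$. This is clearly polynomial time, so the work is in proving that the cumulative allocation $\mathcal{A}^t$ is EF1 after every round, for both the goods and the chores instantiations.

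The key structural observation is that, after any round $t$, the partial allocation restricted to type-$S_1$ items is exactly the prefix of a round-robin sequence $1, 2, \dots, n, 1, 2, \dots$ of length $k_1(t)$, and similarly the type-$S_2$ part is a prefix of the reversed sequence $n, n-1, \dots, 1, n, \dots$ of length $k_2(t)$. For a single run of round-robin over $n$ agents on identically-valued items, the standard fact is that for agents $i$ (earlier in the order) and $j$ (later), agent $i$ has weakly more type-$r$ items than agent $j$ and at most one more, while agent $j$ (who comes later) is envied by $i$ only up to one item; more precisely, within type $r$ we get $|A_i^t \cap S_r| \ge |A_j^t \cap S_r| \ge |A_i^t \cap S_r| - 1$ for the type whose order puts $i$ before $j$, and the reverse bound for the other type. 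Because the two types use opposite orders, for any pair $i,j$ the two "off-by-one" slacks point in opposite directions: in type $S_1$, if $i$ precedes $j$ then $i$ may have one extra $S_1$-item, but then in $S_2$ it is $j$ that precedes $i$, so $j$ may have one extra $S_2$-item. I would then combine these per-type bounds by additivity of $v_i$: for goods, agent $i$'s envy toward $j$ is at most $v_i$ of a single type-$S_2$ good (the one extra good $j$ holds), which can be removed from $A_j$; for chores the symmetric argument removes a single type-$S_1$ chore from $A_i$'s own bundle. One has to check the edge cases where $k_1 \bmod n$ or $k_2 \bmod n$ lands exactly on $i$ or $j$, but in each case the discrepancy in each type is at most one item and the removal of the appropriate single item restores the inequality.

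The main obstacle I anticipate is the bookkeeping in the combined bound: it is not enough that each type is individually EF1, since two separate off-by-one discrepancies could in principle add up to a two-item gap. The crux is therefore to argue that for each ordered pair $(i,j)$ the discrepancies in the two types cannot both favor $j$ simultaneously — this is exactly what the opposite orderings buy us, and it is why a naive "round-robin on everything" would fail. I would make this precise by a short case analysis on the relative position of $i$ and $j$ in the (single) round-robin order: in the type where $i$ comes first, $i$ has $\ge$ as many items as $j$, so that type contributes non-negative utility difference $v_i(A_i^t \cap S_r) - v_i(A_j^t \cap S_r) \ge 0$ for goods; only the other type can contribute negatively, and there the gap is at most one item's worth of value, which is the item we remove. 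Summing over the two types gives EF1. The chores case is handled by the same argument with signs reversed (remove a chore from the envier's own bundle rather than from the envied bundle), and both arguments hold at every prefix $t$ because the per-type prefix-of-round-robin structure is preserved round by round, yielding TEF1.
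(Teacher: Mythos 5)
Your proposal is correct and matches the paper's own proof essentially exactly: the same algorithm (forward round-robin on one type, reverse round-robin on the other) and the same key observation that the two opposite orderings prevent the per-type off-by-one discrepancies from both favoring the same agent, so a single item of the "bad" type can be removed to restore EF1 at every prefix. No gaps to report.
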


\subsubsection{\textbf{Generalized Binary Valuations}} 
The next setting we consider is one where agents have \emph{generalized binary valuations} (also known as \emph{restricted additive valuations} \citep{AkramiReSe22,CamachoFePe23}).
This class of valuation functions generalizes both identical and binary valuations, which are both widely studied in fair division 
\citep{halpern2020binary,plaut2020almost,Suksompong2022}.
Formally, we say that agents have \emph{generalized binary valuations} if for every agent $i\in N$ and item $o_j\in O$, $v_i(o_j)\in \{0,p_j\}$, where $p_j\in \mathbb{R}\setminus \{0\}$.

We show that for this setting, a TEF1 allocation can be computed efficiently, with the following result. We remark that the resulting allocation also satisfies \emph{Pareto-optimality} (\Cref{def:po}).
\begin{theorem} \label{thm:generalizedbinary}
    When agents have generalized binary valuations, a \emph{TEF1} allocation for goods or chores exists and can be computed in polynomial time.
\end{theorem}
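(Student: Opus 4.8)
The plan is to invoke \Cref{lem:transform} to reduce to the case where exactly one item arrives per round, and then exhibit, for goods and for chores separately, a one-pass greedy algorithm that processes the items in arrival order, never reassigns, and provably keeps every prefix EF1. For goods, when $g_t$ arrives let $W_t = \{i : v_i(g_t) = p_t\}$; if $W_t \neq \varnothing$ give $g_t$ to some $i^* \in \argmin_{i \in W_t} v_i(A^{t-1}_i)$, and otherwise ($g_t$ is worthless to everyone) give it to agent~$1$. For chores, when $c_t$ arrives: if some agent $i$ has $v_i(c_t) = 0$, give $c_t$ to such an agent; otherwise (every agent minds $c_t$) give it to some $i^* \in \argmax_i v_i(A^{t-1}_i)$. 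Both algorithms run in polynomial time, and since every item that some agent is interested in (resp.\ is indifferent to, in the chores case) is assigned to such an agent, the resulting allocation maximizes utilitarian welfare over all allocations and is therefore Pareto-optimal, giving the remark in the statement.

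The engine of the correctness proof is a structural invariant maintained at every round: in the goods case, every $g \in A^t_i$ satisfies $v_i(g) = p_g$ unless $g$ is worthless to all agents; in the chores case, every $c \in A^t_i$ satisfies $v_i(c) = 0$ unless $c$ is minded by \emph{every} agent. This is immediate from the allocation rules (a chore that anyone is indifferent to is never handed to an agent who minds it, and a good is only handed to an agent who values it or to no-one-cares goods). The invariant implies the key inequality $v_j(A^t_i) \le v_i(A^t_i)$ for all $i,j$ and $t$: no agent overvalues (for chores, under-burdens itself by taking) another agent's bundle relative to that bundle's owner, because each item in $A_i$ is worth the same to $i$ and $j$ when the invariant's first clause applies, and otherwise is worth weakly less to $j$ (goods) or weakly more of a burden to $j$ (chores).

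With this in hand, EF1 is preserved round by round by a short case analysis on the recipient $i^*$ of the round-$t$ item. Pairs of agents not involving $i^*$ are untouched. Agent $i^*$'s own value moves only in the safe direction (strictly up for goods; unchanged when $i^*$ is indifferent to a chore; and when $i^*$ takes a universally-minded chore, dropping exactly that chore from $i^*$'s own bundle restores $i^*$'s previous value). Finally, for the envy of another agent $j$ toward $i^*$: either the new item is invisible to $j$ (so the EF1 certificate from round $t-1$ still works), or we are in the goods case with $j \in W_t$ or the chores case with everyone minding $c_t$, and a two-step sandwich --- $v_j(A^{t-1}_{i^*}) \le v_{i^*}(A^{t-1}_{i^*}) \le v_j(A^{t-1}_j)$ for goods, and $v_{i^*}(A^{t-1}_j) \le v_j(A^{t-1}_j) \le v_{i^*}(A^{t-1}_{i^*})$ for chores --- shows the relevant pair was in fact envy-free one round earlier, so removing the freshly allocated item (from $i^*$'s bundle for goods, from $i^*$'s own bundle for chores, since there $i^*$ is the newly burdened agent) certifies EF1 at round $t$.

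I expect the main obstacle to be the chores case, and within it the subcase where every agent minds $c_t$: since EF1 for chores removes an item from the \emph{envier's} bundle, it is not enough that the post-assignment allocation be ``EF1 up to the new chore'' --- we genuinely need $i^*$ to be envy-free \emph{before} receiving $c_t$. The point to get right is that routing every other chore to an indifferent agent is not just harmless but is precisely what forces the invariant $v_j(A_i) \le v_i(A_i)$, which together with the $\argmax$ choice makes the current maximum-value agent fully envy-free and hence a safe recipient; without this observation one might wrongly believe the chores case needs swap machinery akin to \Cref{alg:twoagents_chores}. The remaining points are routine bookkeeping: items worthless to everyone (resp.\ minded by no one) contribute $0$ to every agent's valuation of every bundle, so placing them anywhere is EF1-neutral and preserves the invariant; and the transformation of \Cref{lem:transform} preserves the generalized-binary structure and is polynomial-time, so a polynomial-time single-item-per-round algorithm yields a polynomial-time algorithm in general.
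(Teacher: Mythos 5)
Your proposal is correct and matches the paper's proof essentially step for step: the same two greedy rules (goods to a minimum-value agent among those who value the item, chores to an indifferent agent if one exists and otherwise to a maximum-value agent), the same invariant $v_j(A_i^t)\le v_i(A_i^t)$ derived from the routing rules, and the same inductive argument that the recipient is unenvied (goods) or envy-free (chores) just before receiving the item. Your explicit split on whether the arriving item is visible to the other agent $j$ (versus $j\in W_t$, where the sandwich applies) is in fact slightly more careful than the paper's write-up, which applies the argmin/argmax comparison without noting that it only directly covers agents who value the item.
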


\subsubsection{\textbf{Unimodal Preferences}}
The last setting that we consider is the class of \emph{unimodal preferences}, which consists of the widely studied \emph{single-peaked} and \emph{single-dipped} preference structures in social choice \citep{black1948singlepeaked,arrow2012socialchoice} and cake cutting \citep{Thom94,BKO20}. 
We adapt these concepts for the online fair division setting with a single item at each timestep. 

\begin{definition}
A valuation profile $\mathbf{v}$ is \emph{single-peaked} if for each agent $i \in N$, there is an item $o_{i^*}$ where for each $j,k \in [m]$ such that $j<k<i^*$, $v_i(o_j)\leq v_i(o_k)\leq v_i(o_{i^*})$, and for each $j,k \in [m]$ such that $i^*<j<k$, $v_i(o_{i^*})\geq v_i(o_j)\geq v_i(o_k)$.
\end{definition}
\begin{definition}
A valuation profile $\mathbf{v}$ is  \emph{single-dipped} if for each agent $i \in N$, there is an item $o_{i^*}$ where for each $j,k \in [m]$ such that $j<k<i^*$, $v_i(o_j)\geq v_i(o_k)\geq v_i(o_{i^*})$, and for each $j,k \in [m]$ such that $i^*<j<k$, $v_i(o_{i^*})\leq v_i(o_j)\leq v_i(o_k)$.
\end{definition}

In other words, under single-peaked (resp. single-dipped) valuations, agents have a specific item $o_{i^*}$ that they prefer (resp. dislike) the most, and prefer (resp. dislike) items less as they arrive further away in time from $o_{i^*}$.

Note that this restricted preference structure is well-defined for the setting of a single item arriving per round, but may not be compatible with a generalization to multiple items per round as described in Lemma~\ref{lem:transform} (unless the items in each round are identically-valued by agents).\footnote{Specifically, in the multiple items per round case, if the bundles of items at each timestep are unimodally valued, the single-item per round transformation of the instance may not necessarily be unimodal.}

Unimodal preferences may arise in settings where agents place higher value on resources at the time surrounding specific events. For example, in disaster relief, the demand for food and essential supplies peaks as a natural disaster approaches, then declines once the immediate crisis passes. 
Similarly, in project management, the workload for team members intensifies (in terms of required time and effort) as the project nears its deadline, but significantly decreases during the final stages, such as editing and proofreading.

Unimodal preferences also generalizes other standard preference restrictions studied in fair division and voting models, such as settings where agents have \emph{monotonic valuations} \citep{elkind2024verifying} or \emph{identical rankings} \citep{plaut2020almost}.

We propose efficient algorithms for computing a TEF1 allocation for goods when agents have single-peaked valuations, and for chores when agents have single-dipped valuations.

\begin{theorem} \label{thm:singlepeaked_goods}
    When agents have single-peaked valuations, a \emph{TEF1} allocation for goods exists and can be computed in polynomial time. When agents have single-dipped valuations, a \emph{TEF1} allocation for chores exists and can be computed in polynomial time.
\end{theorem}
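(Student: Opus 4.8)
The plan is to show that the simple \emph{round-robin-by-arrival} rule already suffices: since unimodal preferences are defined only in the regime with one item per round, when item $o_t$ arrives we assign it to agent $((t-1)\bmod n)+1$. This clearly runs in linear time, so the entire content is correctness, i.e.\ that every prefix allocation $\mathcal{A}^t$ is EF1; and it is enough to verify, for each prefix $t$ and each ordered pair of agents $(k,j)$, that there is a good $g\in A^t_j$ with $v_k(A^t_k)\ge v_k(A^t_j\setminus\{g\})$.

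Fix $k$, $j$, and $t$. The rounds in which agent $i$ receives an item form an arithmetic progression with common difference $n$, so $A^t_k$ and $A^t_j$ interleave in arrival time, their sizes differ by at most one, and whichever of $k,j$ comes first in the round-robin order receives its $\ell$-th item strictly before the other agent's $\ell$-th item, while the other agent receives its $\ell$-th item before the first agent's $(\ell+1)$-th item (this uses $|k-j|<n$). I would then split the timeline at agent $k$'s peak $o_{k^*}$: by single-peakedness $v_k$ is non-decreasing on items arriving up to $o_{k^*}$ and non-increasing afterwards, so on the ``up'' side a later arrival is $v_k$-weakly-better and on the ``down'' side an earlier arrival is. Using the interleaving, on each side I would match the items of $A^t_k$ against those of $A^t_j$ so that each matched $k$-item dominates its partner under $v_k$: this succeeds one-for-one on the side where $k$'s items are better placed, and on the other side after setting aside a single $j$-item sitting immediately next to the peak. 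Since $\big||A^t_k|-|A^t_j|\big|\le 1$, at most one $j$-item is ever set aside; taking that item as $g$ and summing the per-pair inequalities over the two sides yields exactly $v_k(A^t_k)\ge v_k(A^t_j\setminus\{g\})$. Prefixes in which an agent holds at most one item are immediate, since dropping its unique good leaves the empty bundle and $v_k(\cdot)\ge 0$.

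For chores and single-dipped valuations the argument is the mirror image: single-dippedness of $v_i$ is single-peakedness of $-v_i$, and the chores form of EF1 requires a removable chore in the \emph{envier's} own bundle rather than in the envied bundle. Running the same round-robin-by-arrival rule, one splits the timeline at each agent's dip $o_{k^*}$, reverses every monotonicity comparison (nearer the dip is worse), and the ``set-aside $j$-item'' bookkeeping is replaced by removing the dip-adjacent chore from $A^t_k$.

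I expect the main obstacle to be the bookkeeping in the matching step. It breaks into sub-cases according to whether $k$ or $j$ comes first in the round-robin order, whether $|A^t_j|=|A^t_k|$ or $|A^t_j|=|A^t_k|+1$, and---most importantly---whether $v_k$'s peak falls ``in a $j$-slot'' or ``in a $k$-slot'' of the interleaving, since this is what determines on which side the surplus $j$-item lies and hence which good is dropped. Verifying the correct dominating matching in each sub-case is the real work; the monotonicity inequalities themselves are immediate from the definitions of single-peaked and single-dipped preferences.
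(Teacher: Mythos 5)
Your proposal is correct and takes essentially the same approach as the paper: the paper's algorithm is exactly round-robin by arrival order (agent $i$ gets goods $g_i, g_{i+n}, g_{i+2n},\dots$), and its correctness proof is precisely the block-by-block dominating matching you describe, shifted by one position at the block containing the peak (resp.\ dip), with the single set-aside item taken from the envied agent's bundle for goods and from the envier's own bundle for chores. The sub-case bookkeeping you flag as the remaining work (who precedes whom in the round-robin order, and whether the peak item lands before or after the relevant agent's pick in its block) is exactly the case analysis the paper carries out.
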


We note that while a simple greedy algorithm performs well in the case of single-peaked valuations for goods and single-dipped valuations for chores, it fails in the reverse scenario—single-dipped valuations for goods and single-peaked valuations for chores. This is due to the fact that, in the latter case, the position of the dip or peak becomes critical and significantly complicates the way we allocate the item. We leave the existence of polynomial-time algorithm(s) for the reverse scenario as an open question.

\subsection{Hardness Results for TEF1 Allocations}
The non-existence of TEF1 goods allocations for $n \geq 3$ prompts us to explore whether we can determine if a given instance admits a TEF1 allocation for goods. Unfortunately, we show that this problem is NP-hard, with the following result.
\begin{theorem}
    Given an instance of the temporal fair division problem with goods and $n \geq 3$, determining whether there exists a \emph{TEF1} allocation is \emph{NP}-hard.
\end{theorem}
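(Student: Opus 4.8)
I would reduce from a partition-type problem — concretely \textsc{Partition} (switching to its three-way variant, splitting a multiset into three equal-sum parts, if the closing gadget requires it). Given $a_1 \ge \dots \ge a_n$ with $\sum_i a_i = 2B$, I build a temporal goods instance with three identically-valued agents and one good per round that admits a TEF1 allocation iff some subset of the $a_i$ sums to $B$; we may assume $a_i \le B$ for all $i$, since otherwise the input is a trivial no-instance, which we map to a fixed no-instance of our problem.

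\textbf{Construction.} Phase 1 (forcing): rounds $1$--$3$ present goods $h_1,h_2,h_3$ of value $B$ each; EF1 after round $2$ forces $h_1,h_2$ to distinct agents (else the empty agent envies by two full goods), and EF1 after round $3$ forces $h_3$ to the third, so up to relabelling the agents the post-round-$3$ state is the envy-free one where each agent holds a single ``anchor'' of value $B$. Phase 2 (encoding): rounds $4$--$(n{+}3)$ present $a_1,\dots,a_n$ in decreasing order; since each anchor has value $B \ge a_i$, the anchor is always a largest good in a bundle, so EF1 at such a round holds exactly when the three running $a$-sums differ pairwise by at most $B$, and after round $n{+}3$ we reach a state with $a$-sums $(p,q,r)$, $p+q+r=2B$, $\max-\min \le B$. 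Phase 3 (amplification): I append a short sequence of ``heavy'' goods whose values grow well past $B$; each such good must go to a currently-poorest agent to keep its round EF1 (removing it from its recipient witnesses this), which both exposes the sorted order of $(p,q,r)$ and forces a chain of further EF1 inequalities, and I tune the heavy values so that this chain is jointly satisfiable only when $(p,q,r)$ is balanced — i.e.\ exactly when the $a_i$ split into two halves each summing to $B$.

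\textbf{Correctness.} For the forward direction, from a subset $I$ with $\sum_{i \in I} a_i = B$ one assigns the $a_i \in I$ to one anchor-holder, the rest to a second, nothing extra to the third; every encoding-round envy is repaired by discarding the $B$-anchor, and the heavy goods can be routed to keep every closing round EF1. For the backward direction, Phase 1 pins the post-round-$3$ state, Phase 2 bounds the imbalance, and Phase 3 eliminates every unbalanced split, so any TEF1 allocation yields a \textsc{Partition} solution.

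\textbf{Main obstacle.} The technical heart is Phase 3: because EF1 always permits discarding one good, the plain end-of-horizon EF1 constraint only forces $(p,q,r)$ to lie within $B$ of one another, which is much weaker than an exact partition; the work is to design the closing heavy-good sequence (its values and arrival order) so that the conjunction of the per-round EF1 conditions collapses this slack down to precisely the valid splits, and then to verify in both directions that no ``leaky'' allocation — EF1 at every round yet not encoding a partition — survives. Checking that the construction is polynomial-sized, that the Phase-1 forcing is airtight (including tie-breaking), and handling degenerate cases such as some $a_i = B$ are routine.
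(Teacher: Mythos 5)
There is a fatal gap, and it sits exactly where you placed the ``main obstacle'': with \emph{identical} additive valuations over goods, \emph{every} temporal instance admits a TEF1 allocation, so your reduction maps every \textsc{Partition} instance --- yes or no --- to a yes-instance. Concretely, the greedy rule ``give each arriving good to an agent whose current bundle has minimum value'' is EF1 after every round: if $g$ is the last good placed into $A_j$, then at the moment $g$ was assigned agent $j$ was poorest, so $v(A_j\setminus\{g\})$ equals $j$'s bundle value at that moment, which was at most $i$'s value at that moment, which is at most $v(A_i)$ now, since bundle values only grow. Run on your construction, greedy distributes the three anchors in Phase~1, then routes each $a_i$ and each heavy good to a currently poorest agent, and stays EF1 at every prefix regardless of whether the $a_i$ admit a balanced split. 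Hence no choice of heavy-good values in Phase~3 can ``collapse the slack'': the amplification you defer to tuning is provably impossible under identical valuations. This is consistent with the fact that the known three-agent non-existence example for TEF1 relies essentially on non-identical valuations.

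For comparison, the paper's proof also starts from an equal-split encoding with identical values on a prefix (it reduces from \textsc{1-in-3-SAT}, using powers of $5$ to encode variables and clauses so that the first $2n+2$ goods can be split into equal-value bundles iff the formula has a 1-in-3 satisfying assignment), but the closing gadget --- a fixed block of $21$ goods --- uses carefully chosen \emph{non-identical} valuations, and its key property (a TEF1 completion exists iff the prefix was split equally between two of the agents) is verified by an exhaustive computer search rather than a hand argument. That non-identical closing gadget is precisely the component your plan omits, and it is the genuinely hard part of the theorem; to salvage your approach you would need to replace Phase~3 with such a gadget and prove (or exhaustively verify) both directions of its behaviour.
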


\begin{proof}
    We reduce from the \textsc{1-in-3-SAT} problem which is NP-hard.
    An instance of this problem consists of a conjunctive normal form formula $F$ with three literals per clause; it is a yes instance if there exists a truth assignment to the variables such that each clause has exactly one \texttt{True} literal, and a no instance otherwise.

    Consider an instance of \textsc{1-in-3-SAT} given by the CNF $F$ which contains $n$ variables $\{x_1,\dots,x_n\}$ and $m$ clauses $\{C_1,\dots, C_m\}$.
We construct an instance $\mathcal{I}$ with three agents and $2n+2$ goods.
    For each $i \in [n]$, we introduce two goods $t_i,f_i$.
    We also introduce two additional goods $s$ and $r$.
    Let the agents' (identical) valuations be defined as follows:
    \begin{equation*}
        v(g) = 
        \begin{cases} 
            5^{m+n-i} + \sum_{j \, : \, x_i \in C_j} 5^{m-j}, & \text{if } g = t_i, \\ 
            5^{m+n-i} + \sum_{j \, : \, \neg x_i \in C_j}  5^{m-j}, & \text{if } g = f_i, \\
            \sum_{j \in[m]} 5^{j-1}, & \text{if } g = r,\\
            \sum_{i \in [n]} 5^{m+i-1} + 2 \times \sum_{j \in [m]} 5^{j-1}, & \text{if } g = s.
        \end{cases}
    \end{equation*}
    Intuitively, for each variable index $i \in [n]$, we associate with it a unique value $5^{m+n-i}$.
    For each clause index $j \in [m]$, we also associate with it a unique value $5^{m-j}$.
    Note that no two indices (regardless of whether its a variable or clause index) share the same value.
    Then, the value of each good $t_i$ comprises of the unique value associated with $i$, and the sum over all unique values of clauses $C_j$ which $x_i$ appears as a \emph{positive literal} in; whereas the value of each good $f_i$ comprises of the unique value associated with $i$, and the sum over all unique values of clauses $C_j$ which $x_i$ appears as a \emph{negative literal} in.
    We will utilize this in our analysis later.

    Then, we have the set of goods $O = \{s,t_1,f_1,t_2,f_2,\dots,t_n,f_n,r\}$.
    Note that
    \begin{equation*}
        v(O) = v(s) + v(r) + \sum_{i \in [n]} v(t_i) + \sum_{i \in [n]} v(f_i.)
    \end{equation*}
    Also observe that
        $\sum_{i \in [n]} 5^{m+n-i} = \sum_{i \in [n]} 5^{m+i-1}$.
    Now, as each clause contains exactly three literals, we have
    \begin{equation*}
        \sum_{i \in [n]}\sum_{j : x_i \in C_j} 5^{m-j} + \sum_{i \in [n]}\sum_{j : \neg x_i \in C_j} 5^{m-j} = 3 \times \sum_{j \in [m]} 5^{j-1}.
    \end{equation*}
    Then, combining the equations above, we get that
    \begin{equation} \label{eqn:tefhardness_lem_vO_goods}
        v(O) = 3 \times \sum_{i \in [n]} 5^{m+i-1} + 6 \times \sum_{j \in [m]} 5^{j-1}.
    \end{equation}
        
    Let the goods be in the following order: 
    \begin{equation*}
        s,t_1,f_1,t_2,f_2,\dots,t_n,f_n,r.
    \end{equation*}
    We first prove the following result.

    \begin{lemma}\label{lem:tef1goodshard}
        There exists a truth assignment $\alpha$ such that each clause in $F$ has exactly one \texttt{True} literal if and only if there exists an allocation $\mathcal{A}$ such that $v(A_1) = v(A_2) = v(A_3)$ for instance $\mathcal{I}$.
    \end{lemma}
    \begin{proof}
        For the `if' direction, consider an allocation $\mathcal{A}$ such that $v(A_1) = v(A_2) = v(A_3)$.
        Then, we have that $O = A_1 \cup A_2 \cup A_3$ and $v(A_1) = v(A_2) = v(A_3) = \frac{1}{3} v(O)$.
        Since agents have identical valuations, without loss of generality, let $s \in A_1$.
        Then, since $v(A_1^1) = v(s) = \frac{1}{3} v(O)$, agent~$1$ should not receive any more goods after $s$, and each remaining good should go to agent~$2$ or $3$.

        Again, without loss of generality, we let $r \in A_2$.
        Then since $v(A_2) = \frac{1}{3}v(O)$, we have that
        \begin{align*}
            v(A_2 \setminus \{r\}) & = \left( \sum_{i \in [n]} 5^{m + i -1} + 2 \times \sum_{j \in [m]} 5^{j - 1} \right) - \sum_{j \in [m]} 5^{j - 1}\\
            & = \sum_{i \in [n]} 5^{m + i -1} + \sum_{j \in [m]} 5^{j - 1}.
        \end{align*} 
        Note that this is only possible if for each $i \in [m]$, $t_i$ and $f_i$ are allocated to different agents.
        The reason is because the only way agent~$1$ can obtain the first term of the above bundle value (less good $r$) is if he is allocated exactly one good from each of $\{t_i,f_i\}$ for all $i \in [n]$.

        Then, from the goods that exist in bundle $A_2$, we can construct an assignment $\alpha$:
        for each $i \in [n]$, let $x_i= \texttt{True}$ if $t_i \in A_2$ and $x_i = \texttt{False}$ if $f_i \in A_2$.
        Then, from the second term in the expression of $v(A_1 \setminus\{r\})$ above, we can observe that each clause has exactly one \texttt{True} literal (because the sum is only obtainable if exactly one literal appears in each clause, and our assignment will set each of these literals to \texttt{True}). 

        For the `only if' direction, consider a truth assignment $\alpha$ such that each clause in $F$ has exactly one \texttt{True} literal.
        Then, for each $i \in [n]$, let 
        \begin{equation*}
            \ell_i = 
            \begin{cases} 
                t_i & \text{if } x_i = \texttt{True} \text{ under } \alpha, \\
                f_i  & \text{if } x_i = \texttt{False} \text{ under } \alpha. 
            \end{cases}
        \end{equation*}
        We construct the allocation $\mathcal{A} = (A_1,A_2,A_3)$ where 
        \begin{equation*}
            A_1 = \{s\}, \quad A_2 = \{\ell_1, \dots, \ell_n, r\}, \quad \text{and} \quad A_3 = O \setminus (A_1 \cup A_2).
        \end{equation*}
        Again, observe that $\sum_{i \in [n]} 5^{m+n-i} = \sum_{i \in [n]} 5^{m+i-1}$.
        Also note that $v(A_1) = \frac{1}{3} v(O)$.
        Then, as each clause has exactly one \texttt{True} literal, 
            $v(A_2) = \sum_{i \in [n]}  \allowbreak 
 5^{m + i -1} + 2 \times \sum_{j \in [m]} 5^{j - 1}$,
        and together with (\ref{eqn:tefhardness_lem_vO_goods}), we get that 
        $v(A_3) = \frac{2}{3} v(O) -v(A_1) = v(A_1)$
        and hence $v(A_1) = v(A_2) = v(A_3)$,
        as desired.
    \end{proof}
    Now consider another instance $\mathcal{I}'$ that is similar to $\mathcal{I}$, but with an additional $21$ goods $\{g_1,\dots,g_{21}\}$.
    Let agents' valuations over these new goods be defined as follows:
    \begin{center}
    \begin{tabular}{c||c c c c c c c c c c c c c} 
         $\mathbf{v}$& $g_1$ & $g_2$ & $g_3$ & $g_4$ & $g_5$ & $g_6$ &  $g_7$ & $g_8$ & $g_9$ & $g_{10}$ & $g_{11}$\\ \hline \hline
         $1$ & $90$ & $80$ & $70$ & $100$ & $100$ & $100$ & $15$ & $10000$ & $11000$ & $12000$ & $20000$\\ 
         $2$ & $90$ & $70$ & $80$ & $100$ & $100$ & $100$ & $95$ & $10000$ & $11000$ & $12000$ & $20000$\\ 
         $3$ & $80$ & $90$ & $70$ & $100$ & $100$ & $100$ & $25$ & $10000$ & $11000$ & $12000$ & $20000$\\
         \hline
         & $g_{12}$ & $g_{13}$ & $g_{14}$ & $g_{15}$ & $g_{16}$ & $g_{17}$ & $g_{18}$ & $g_{19}$ & $g_{20}$ & $g_{21}$\\ \hline \hline
         $1$  & $20000$ & $20000$ & $20000$ & $20000$ & $20000$ & $20000$ & $20000$ & $20000$ & $19010$ & $18005$\\ 
         $2$ & $20000$ & $20000$ & $20000$ & $20000$ & $20000$ & $20000$ & $12000$ & $12000$ & $19085$ & $14106$\\ 
         $3$ & $20000$ & $18500$ & $20000$ & $20000$ & $20000$ & $20000$ & $20000$ & $20000$ & $19010$ & $19496$\\
    \end{tabular}
    \end{center}
    Then, we have the set of goods $O' = O \cup \{g_1,\dots,g_{21}\}$.

    Let the goods be in the following order:
    \begin{equation*}
        s, t_1,f_1,t_2,f_2,\dots,t_n,f_n,r, g_1,\dots,g_{21}.
    \end{equation*}
    We now present the final lemma that will give us our result.
    \begin{lemma} \label{lemma-code}
        If there exists a partial allocation $\mathcal{A}^{2n+2}$ over the first $2n+2$ goods such that $v(A^{2n+2}_1) = v(A^{2n+2}_2)$, then there exists a \emph{TEF1} allocation $\mathcal{A}$.
        Conversely, if there does not exist a partial allocation $\mathcal{A}^{2n+2}$ over the first $2n+2$ goods such that $v(A^{2n+2}_1) = v(A^{2n+2}_2)$, then there does not exists a \emph{TEF1} allocation $\mathcal{A}$.
    \end{lemma}
    We use a program as a gadget to verify the lemma (the code can be found in Appendix \ref{app:code}), leveraging its output to support its correctness.
    Specifically, if there exists a partial allocation $\mathcal{A}^{2n+2}$ over the first $2n+2$ goods such that $v(A^{2n+2}_1) = v(A^{2n+2}_2)$, then our program will show the existence of a TEF1 allocation by returning all such TEF1 allocations.
    If there does not exist such a partial allocation, our program essentially does an exhaustive search to show that a TEF1 allocation does not exist.
    This lemma shows that there exists a TEF1 allocation over $O'$ if and only if $v(A^{2n+2}_1) \neq v(A^{2n+2}_2)$, and by Claim~\ref{lem:tef1goodshard}, this implies that a TEF1 allocation over $O'$ exists if and only if there is a truth assignment $\alpha$ such that each clause in $F$ has exactly one \texttt{True} literal.
\end{proof}
However, we note that the above approach cannot be extended to show hardness for the setting with chores.
Nevertheless, we are able to show a similar, though weaker, intractability result for the case of chores in general.
The key difference is that we assume that we can start from any partial TEF1 allocation instead of the empty allocation.
\begin{theorem}
    For every $t\in [T]$, given any partial \emph{TEF1} allocation $\mathcal{A}^t$ for chores, deciding if there exists an allocation $\mathcal{A}$ that is \emph{TEF1} is \emph{NP}-hard.
\end{theorem}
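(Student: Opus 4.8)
The plan is to reduce from an NP-hard problem whose structure mirrors the "balanced split" condition we saw in the goods case, but now packaged so that we exhibit a concrete partial TEF1 allocation $\mathcal{A}^t$ from which extendability to a full TEF1 allocation is equivalent to solvability of the source instance. A natural candidate is to reduce from \textsc{Partition} (or \textsc{3-Partition}, or again \textsc{1-in-3-SAT} as in the goods proof): given a set of integers, we create chores whose (identical across agents) disvalues encode those integers, and we arrange the round order so that the only way to keep every prefix EF1 forces a partition-type equality at the relevant round. The crucial modeling freedom the theorem grants us is that we do \emph{not} start from the empty allocation — we may fix $\mathcal{A}^t$ to be any partial allocation that is already TEF1. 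We would exploit this by "pre-loading" $\mathcal{A}^t$ with carefully chosen large chores (or a symmetric dummy structure) so that the partial allocation is EF-at-round-$t$ and EF1 at every earlier round by construction, yet the bundles entering round $t{+}1$ are poised so that any subsequent imbalance immediately breaks EF1 down the line.

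**Concretely, the steps in order would be:** (i) fix the source problem and describe the chore set, with disvalues of the form $M^{k} + (\text{small encoding terms})$ so that different "index buckets" cannot interfere, exactly paralleling the $5^{m+i-1}$ gadget in the goods reduction; (ii) specify the arrival order of the chores and the given partial allocation $\mathcal{A}^t$, and verify directly that $\mathcal{A}^t$ is TEF1 (this is a finite check by the symmetry built into it); (iii) prove the forward direction — if the source instance is a yes-instance, construct the completion of $\mathcal{A}^t$ to a full allocation and verify that every prefix from round $t{+}1$ through $T$ is EF1 (using that for chores, EF1 means some chore can be removed from the \emph{envier's} own bundle, which tends to be easier to satisfy when we allocate the next chore to whichever agent currently has the lexicographically smaller disutility); (iv) prove the converse — if a TEF1 completion exists, read off from the bundles at the critical round a solution to the source instance, arguing as in Lemma~\ref{lem:tef1goodshard} that the prefix-EF1 constraints at a handful of decisive rounds force the encoded equality and hence no two "partner" chores $t_i, f_i$ land with the same agent. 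As in the goods case, one may need a small fixed gadget of additional chores appended after the main construction to rigidly pin down the allocation at the critical round; if so, its correctness can be certified by an exhaustive finite search (a program gadget, as the authors already do for goods).

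**The main obstacle** is the asymmetry between goods and chores that the authors themselves flag: the goods hardness proof "cannot be extended" to chores because the EF1-relaxation acts on the \emph{envied} bundle for goods but on the \emph{envier's} bundle for chores, which changes which prefixes are binding and makes it much easier to repair envy by handing the next chore to the least-loaded agent. This is precisely why the theorem is stated in the weaker "given a partial TEF1 allocation" form rather than "starting from empty." So the real work is in designing $\mathcal{A}^t$ and the tail arrival order cleverly enough that the chores setting inherits a genuine combinatorial bottleneck: we must ensure that, \emph{despite} the extra slack chores enjoy, there is a round whose EF1 constraint is tight exactly when a balanced partition exists, and that no greedy reshuffling of later chores can rescue an unbalanced prefix. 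I expect step (iv), the converse, to carry the bulk of the difficulty — ruling out "clever" completions that violate the intended partition yet sneak past EF1 at every prefix — and this is likely where a verified exhaustive-search gadget over a constant-size tail becomes necessary, just as in the goods proof.
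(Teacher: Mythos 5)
Your high-level strategy is the right one --- reduce from a partition-type problem and exploit the freedom to hand the algorithm a pre-loaded partial TEF1 allocation that creates a bottleneck for the remaining chores --- and you correctly diagnose why the goods proof does not transfer (EF1 for chores is repaired on the envier's side). But the proposal stops exactly where the proof begins: it never specifies the mechanism by which the pre-loaded allocation actually constrains the completion, and your concrete guesses about what that mechanism would look like point in the wrong direction. You anticipate powers-of-$5$ value gadgets, partner chores $t_i,f_i$, a decisive critical round, and a program-verified constant-size tail appended after the main construction, with the converse direction being the hard part. None of that is needed, and the converse is in fact the easy part.

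The paper's construction uses four agents and reduces from \textsc{Partition}. The given partial allocation assigns four ``blocking'' chores $b_1,\dots,b_4$ diagonally ($b_i$ to agent $i$), with asymmetric $0/{-1}$ values: agent $1$ values $b_1$ at $-1$ but values agent $4$'s bundle at $0$ (and symmetrically for agent $4$), so agent $1$ is already at the EF1 limit --- receiving \emph{any} further chore leaves her with two chores she values at $-1$ each, and removing one still leaves $-1 < 0 = v_1(A_4)$. This permanently blocks agents $1$ and $4$. Agents $2$ and $3$ value every $b_i$ at $-1$ and value the partition chores by the (shifted, rescaled) integers summing to $-2$; EF1 against agent $1$ forces each of them to absorb chores of total value at least $-1$, hence exactly $-1$ each, which is precisely a balanced partition. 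Both directions then follow from two short lemmas with no exhaustive search. The missing idea in your proposal is this blocking gadget: using zero-valued chores in other agents' bundles to exhaust an agent's one-chore slack up front, which is what tames the ``hand the next chore to the least-loaded agent'' escape route you rightly worried about.
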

\begin{proof}
    We reduce from the NP-hard problem \textsc{Partition}.
    An instance of this problem consists of a multiset $S$ of positive integers; it is a yes-instance if $S$ can be partitioned into two subsets $S_1$ and $S_2$ such that the sum of the numbers in $S_1$ equals the sum of the numbers in $S_2$, and a no-instance otherwise.

    Consider an instance of \textsc{Partition} given by a multiset set $S = \{s_1,\dots, s_m\}$ of $m$ positive integers. Then, we construct a set $S' = \{s'_1,\dots s'_m\}$ such that for each $j \in [m]$, $s'_j = s_m - K$ where $K := \max\{s_1,\dots,s_m\} + \varepsilon$ for some small $\varepsilon > 0$.
    We then scale members of $S'$ such that they sum to $-2$, i.e., $\sum_{s' \in S'} s' = -2$.

    Next, we construct an instance with four agents and $m+4$ chores $O = \{b_1,b_2,b_3,b_4,c_1,\dots,c_m\}$, where agents have the following valuation profile $\mathbf{v}$ for $j \in \{1,\dots,m\}$:
    \begin{center}
    \begin{tabular}{c||c c c c c c c c c c} 
         $\mathbf{v}$& $b_1$ & $b_2$ &  $b_3$ & $b_4$ & $c_1$ & $\dots$ &  $c_j$ & $\dots$ & $c_m$ \\ \hline \hline
         $1$ & \circled{$-1$}&  $0$&  $0$& $0$ & $-1$ & $\dots$ &  $-1$ &  $\dots$ & $-1$\\  
         $2$&  $-1$& \circled{$-1$}& $-1$& $-1$ & $s'_1$ & $\dots$& $s'_j$ & $\dots$ & $s'_m$\\ 
         $3$&  $-1$&  $-1$& \circled{$-1$}& $-1$ & $s'_1$ &  $\dots$& $s'_j$ & $\dots$ & $s'_m$\\ 
         $4$& $0$ & $0$ & $0$  & \circled{$-1$} & $-1$ & $\dots$ & $-1$ & $\dots$ & $-1$\\
    \end{tabular}
    \end{center}
    Also, suppose we are given the partial allocation $\mathcal{A}^4$ where for each $i \in \{1,2,3,4\}$, chore $b_i$ is allocated to agent $i$, as illustrated in the table above.
    Note that the partial allocation $\mathcal{A}^4$ is TEF1.
    
    We first establish the following two lemmas.
    The first lemma states that after chores $b_1,b_2,b_3,b_4$ are allocated, in order to maintain TEF1, each remaining chore in $\{c_1,\dots,c_m\}$ cannot be allocated to either agent $1$ or agent $4$. The result is as follows.
    \begin{lemma} \label{lem:chores_ef1_hard_agents14}
        In any \emph{TEF1} allocation, agents~$1$ and $4$ cannot be allocated any chore in $\{c_1,\dots,c_m\}$.
    \end{lemma}
    \begin{proof}
        Consider any TEF1 allocation $\mathcal{A}$.
        Suppose for a contradiction that at least one of agent~$1$ and $4$ is allocated a chore in $\{c_1,\dots,c_m\}$.
        Assume without loss of generality that agent~$1$ was the first (if not only) agent that received such a chore.
        
        Consider the first round $j+4$ (for some $j \in [m]$) whereby agent~$1$ is allocated some chore $c_j \in \{c_1,\dots,c_m\}$.
        Then,
        \begin{equation*}
            v_1(A_1^{j+4} \setminus \{b_1\}) = -1 < 0 = v_1(A_4^{j+4}), 
        \end{equation*}
        a contradiction to $\mathcal{A}$ being TEF1.
    \end{proof}
    The second lemma states that in any TEF1 allocation, the sum of values that agents $2$ and $3$ obtain from the chores in $\{c_1,\dots,c_m\}$ that are allocated to them must be equal. We formalize it as follows.
    \begin{lemma}\label{lem:chores_ef1_hard_agents23}
        In any \emph{TEF1} allocation, let $C_2, C_3$ be the subsets of $\{c_1, \allowbreak \dots, c_m\}$ that were allocated to agents $2$ and $3$ respectively.
        Then, $v_2(C_2) = v_3(C_3)$.
    \end{lemma}
    \begin{proof}
        Consider any TEF1 allocation $\mathcal{A}$.
        Suppose for a contradiction that $v_2(C_2) \neq v_3(C_3)$.
        Since $v_2(C_2) + v_3(C_3) = \sum_{s' \in S'} s' = -2$, it means one of $\{v_2(C_2),v_3(C_3)\}$ is strictly less than $-1$, and the other is strictly more than $-1$.
        Without loss of generality, assume $v_2(C_2) > v_3(C_3)$, i.e., $v_3(C_3) < -1$.
        We get that
        \begin{equation*}
            v_3(A_3 \setminus \{b_3\}) = v_3(C_3) < -1 = v_3(A_1),
        \end{equation*}
        contradicting the fact that $\mathcal{A}$ is a TEF1 allocation.
    \end{proof}
    We will now prove that there exists an allocation $\mathcal{A}$ satisfying TEF1 if and only if the set $S$ can be partitioned into two subsets of equal sum.

    For the `if' direction, suppose $S = \{s_1,\dots,s_m\}$ can be partitioned into two subsets $S_1,S_2$ of equal sum.
    This means that $S' = \{s'_1,\dots,s'_m\}$ can be correspondingly partitioned into two subsets $S'_1,S'_2$ of equal sum (of $-1$ each).
    Let $C_1,C_2$ be the partition of chores in $\{c_1,\dots,c_m\}$ with values corresponding to the partitions $S'_1,S'_2$ respectively.
    Then we allocate all chores in $C_1$ to agent~$2$ and all chores in $C_2$ to agent~$3$.
    By \Cref{lem:chores_ef1_hard_agents14}, we have that agents~$1$ and $4$ cannot envy any other agent at any round.
    Also, for any round $t \in [T]$ and $i,j \in \{2,3\}$ where $i \neq j$,
        $v_i(A_i^t \setminus \{b_i\}) \geq -1 \geq v_i(A_j^t)$,
    and for all $i \in \{2,3\}$ and $k \in \{1,4\}$,
        $v_i(A^t_i \setminus \{b_i\}) \geq -1 = v_i(A_k^t)$.
    Thus, the allocation $\mathcal{A}$ that, for each $i \in \{1,2,3,4\}$, allocates $b_i$ to agent $i$ and for each $j \in \{2,3\}$, allocates $C_j$ to agent $j$, is TEF1.
    
    For the `only if' direction, suppose we have an allocation $\mathcal{A}$ satisfying TEF1.
    By \Cref{lem:chores_ef1_hard_agents14}, it must be that any chore in $\{c_1,\dots,c_m\}$ is allocated to either agent $2$ or $3$.
    Let $C_2,C_3$ be the subsets of chores in $\{c_1,\dots,c_m\}$ that are allocated to agents~$2$ and $3$ respectively, under $\mathcal{A}$.
    Then, by Lemma \ref{lem:chores_ef1_hard_agents23}, we have that $v_2(C_2) = v_3(C_3)$.
    By replacing the chores with their corresponding values, we get a partition of $S'$ into two subsets of equal sums, which in turn gives us a partition of $S$ into two subsets of equal sum.   
\end{proof}

\section{Compatibility of TEF1 and Efficiency} \label{sec:efficiency}
In traditional fair division, many papers have focused on the existence and computation of fair and efficient allocations for goods or chores, with a particular emphasis on simultaneously achieving EF1 and \emph{Pareto-optimality (PO)} \citep{barman2018fairandefficient,CaragiannisKuMo19}. 
In this section, we explore the compatibility between TEF1 and PO. We begin by defining PO as follows.

\begin{definition}[Pareto-optimality] \label{def:po}
We say that an allocation $\mathcal{A}$ is \emph{Pareto-optimal (PO)} if there does not exist another allocation $\mathcal{A}'$ such that for all $i\in N$, $v_i(A'_i)\geq v_i(A_i)$, and for some $j\in N$, $v_j(A'_j)> v_j(A_j)$. If such an allocation $\mathcal{A}'$ exists, we say that $\mathcal{A}'$ \emph{Pareto-dominates} $\mathcal{A}$.
\end{definition}
Observe that for any $\mathcal{A}$ that is PO, any partial allocation $\mathcal{A}^t$ for $t \leq [T]$ is necessarily PO as well.
We demonstrate that PO is incompatible with TEF1 in this setting, even under very strong assumptions (of two agents and two types of items), as illustrated by the following result.

\begin{proposition} \label{prop_tef1_po}
    For any $n \geq 2$, a \emph{TEF1} and \emph{PO} allocation for goods or chores may not exist, even when there are two types of items.
\end{proposition}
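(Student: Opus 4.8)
The plan is to produce, for every $n\ge 2$, an instance with exactly two item types in which \emph{every} TEF1 allocation admits a profitable exchange of two items between a fixed pair of agents; such an allocation is Pareto-dominated and hence not PO, so no allocation is simultaneously TEF1 and PO. For $n=2$ a TEF1 allocation does exist (\Cref{thm:2agents} for chores and \citet[Thm.~3.4]{he2019fairerfuturepast} for goods), so this genuinely separates TEF1 from TEF1$+$PO; for larger $n$ I will also exhibit a TEF1 allocation directly.

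\emph{Goods.} For $n=2$ I would take two type-$A$ goods arriving in rounds $1,2$ and four type-$B$ goods arriving in rounds $3,\dots,6$, with $v_1(A)=3,\ v_1(B)=1$ and $v_2(A)=1,\ v_2(B)=3$, so agent~$1$ relatively prefers type~$A$ and agent~$2$ relatively prefers type~$B$. The first step is a \emph{forced-split} claim: after round~$2$, EF1 forbids both type-$A$ goods from going to the same agent (the other agent then holds the empty bundle, and deleting one good still leaves positive envy), so in every TEF1 allocation agent~$1$ and agent~$2$ each keep exactly one type-$A$ good forever. The second step is the end-of-horizon EF1 inequality: if agent~$1$ received $j$ of the four type-$B$ goods, the best good for agent~$1$ to delete from agent~$2$'s bundle is the type-$A$ good it holds, which leaves value $4-j$, so EF1 requires $3+j\ge 4-j$, i.e.\ $j\ge 1$. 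Thus every TEF1 allocation has agent~$1$ holding a type-$B$ good and agent~$2$ holding a type-$A$ good; swapping these strictly improves both (each gains $3-1=2$), so the allocation is Pareto-dominated. To pass to $n\ge 3$, I would append $n-2$ agents who value every good at $0$. In any PO allocation such an agent holds nothing, since any good it holds could be reassigned to agent~$1$ (who values all goods strictly positively), strictly improving agent~$1$ and harming no one; hence all six goods stay with agents~$1$ and~$2$ and the argument above is unchanged, while the $n=2$ TEF1 allocation extended by giving the new agents nothing is still TEF1.

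\emph{Chores.} Here the dummy-agent reduction breaks, because an agent valuing every chore at $0$ is content to absorb chores and the resulting allocation can be PO. I would instead use $n$ type-$A$ chores in rounds $1,\dots,n$ and four type-$B$ chores in rounds $n+1,\dots,n+4$, with $v_1(A)=-3,\ v_1(B)=-1$ and $v_i(A)=-1,\ v_i(B)=-3$ for all $i\in\{2,\dots,n\}$. A pigeonhole argument gives the analogue of the forced split: within the first $n$ rounds no agent can ever hold two type-$A$ chores, because when an agent receives its second type-$A$ chore at some round $k\le n$ there is still an agent holding the empty bundle, and after deleting one chore the first agent's value is negative, violating EF1 towards the empty-handed agent; since there are exactly $n$ type-$A$ chores, after round~$n$ every agent holds exactly one of them for good. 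Next, agent~$1$ cannot end up with all four type-$B$ chores: after deleting its (heaviest) type-$A$ chore, agent~$1$ still has value $-4<-3=v_1(A_j)$ for any $j\ge 2$, violating EF1. So every TEF1 allocation has agent~$1$ holding a type-$A$ chore and some $j\ge 2$ holding a type-$B$ chore; swapping these two chores strictly improves both agents (each gains $3-1=2$), so the allocation is not PO. A TEF1 allocation exists: allocate the type-$A$ chores round-robin, then hand the four type-$B$ chores one at a time to up to four distinct agents, checking that EF1 survives each step.

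The main obstacles I anticipate are on the chores side: the failure of the dummy shortcut forces the parameterized construction that works uniformly in $n$, and because TEF1 is not guaranteed for $n\ge 3$ one must certify existence of a TEF1 allocation by hand. The other delicate point, in both cases, is making the forced-structure claims airtight across \emph{all} intermediate rounds (this is where the pigeonhole bookkeeping lives); once the structure is pinned down, the Pareto-domination step is just the routine two-item exchange.
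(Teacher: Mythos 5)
Your proposal is correct and follows essentially the same strategy as the paper's proof: use the prefix EF1 constraints to force every TEF1 allocation to split the two item types between agents~1 and~2 (with zero-valued dummy agents handling $n\ge 3$ for goods, and a pigeonhole argument over the early rounds for chores), then exhibit a cross-type exchange between agents~1 and~2 that Pareto-dominates it. The differences are only cosmetic---specific valuations, and the paper's chores instance uses $2n$ chores to force each agent to hold exactly one chore of each type rather than merely bounding agent~1's type-$B$ holdings---so no further comparison is needed.
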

\begin{proof}
    We first prove the result for goods. Consider an instance with two agents and four goods $O = \{g_1,g_2,g_3,g_4\}$, with the following valuation profile:
    \begin{center}
        \begin{tabular}{l||cccc}
            $\mathbf{v}$ &$g_1$ & $g_2$ & $g_3$ & $g_4$\\ \hline\hline
            1&$1.1$&  $1.1$&  $2$&  $2$\\
            2&$2$&  $2$&  $1.1$&  $1.1$\\
        \end{tabular}
    \end{center}
    Observe that the first two goods must be allocated to different agents, otherwise TEF1 will be violated after the second good is allocated. 
    Without loss of generality, suppose that agent $1$ receives $g_1$ and agent $2$ receives $g_2$.
    We have $v_1(g_1)<v_1(\{g_2,g_3,g_4\})-v_1(g_3)$ and $v_2(g_2)<v_2(\{g_1,g_3,g_4\})-v_2(g_1)$, thereby showing that EF1 will be violated if $g_3$ and $g_4$ are allocated to the same agent.
    
    Thus, in any TEF1 allocation $\mathcal{A}$, agent $1$ must receive one good from $\{g_1,g_2\}$ and one good from $\{g_3,g_4\}$. 
    However, observe that every such allocation $\mathcal{A}$ is Pareto-dominated by the allocation where agent $2$ receives bundle $\{g_1,g_2\}$ and agent $1$ receives bundle $\{g_3,g_4\}$. 
    This proof can be extended to the case of $n \geq 3$ simply by adding dummy agents who have zero value for each good, and observing that they cannot receive any item in a PO allocation.
    As such, a TEF1 and PO allocation cannot be guaranteed to exist, even when when there are two types of chores.

    Next, we prove the result for chores.
    Consider an instance with $n \geq 2$ agents and $2n$ chores $O = \{c_1,\dots,c_{2n} \}$, with the following valuation profile:
    \begin{table}[H]
        \centering
        \begin{tabular}{l||cccccc}
            $\mathbf{v}$ &$c_1$ & $\dots$ & $c_n$ & $c_{n+1}$ & $\dots$ & $c_{2n}$\\ \hline\hline
            $1$&$-1.1$&  $\dots$ & $-1.1$ & $-2$ & $\dots$ & $-2$\\
            $2$&$-2$&  $\dots$& $-2$ & $-1.1$ &$\dots$&  $-1.1$\\
            $3$&$-2$&  $\dots$& $-2$&$-2$&  $\dots$& $-2$\\
            $\vdots$ & $\vdots$ & & $\vdots$& $\vdots$ & & $\vdots$\\
            $n$&$-2$&  $\dots$& $-2$&$-2$&  $\dots$& $-2$
        \end{tabular}
    \end{table}
    In this instance, agent~$1$ has value $-1.1$ for each of the first $n$ chores, and value $-2$ for the last $n$ chores. 
    Agent~$2$ has value $-2$ for the first $n$ chores, and value $-1.1$ for the last $n$ chores.
    If $n \geq 3$, then agents $3,\dots,n$ have value $-2$ for all chores. 
    
    Observe that each agent must receive one of the first $n$ chores to avoid violating TEF1 within the first $n$ rounds. 
    We now show that each agent must also receive one of the final $n$ chores, otherwise TEF1 will be violated. 
    Suppose for contradiction that in the final allocation $\mathcal{A}$, some agent $i \in N$ is allocated at least two chores from $\{c_{n+1},\dots,c_{2n}\}$. Then for each $i \in N$, let $c'_i := \argmin_{c \in A_i} v_i(c)$. We get that
    \begin{equation}
    v_i(A_i \setminus \{c'_i\})\leq
        \begin{cases}
            -5.1 + 2 = -3.1 & \text{if }i=1,\\
            -4.2 + 2 = -2.2& \text{if }i=2,\\
            -6 + 2 = -4 & \text{if }i\in \{3,\dots,n\}.
        \end{cases}
    \end{equation}
    By the pigeonhole principle, there exists some other $j \in N \setminus\{i\}$ that receives no chore from $\{c_{n+1},\dots,c_{2n}\}$, giving us
    \begin{equation}
    v_i(A_j)=
        \begin{cases}
            -1.1 & \text{if }i=1,\\
            -2 & \text{if }i=2,\\
            -2 & \text{if }i\in \{3,\dots,n\}.
        \end{cases}
    \end{equation}
    Consequently, agent $i$ would envy agent $j$ even after removing one chore from her own bundle, and TEF1 is violated.
    Thus, in any TEF1 allocation, each agent must receive exactly one chore from $\{c_1,\dots,c_n\}$ and exactly one chore out of $\{c_{n+1},\dots,c_{2n}\}$. 
    
    However, any such allocation is Pareto-dominated by another allocation where agent $1$ receives exactly two chores from $\{c_{1},\dots,c_{n}\}$ and no chores from $\{c_{n+1},\dots,c_{2n}\}$, and agent $2$ receives no chores from $\{c_{1},\dots,c_{n}\}$ and exactly two chores from $\{c_{n+1},\dots,c_{2n}\}$.
    As such, a TEF1 and PO allocation cannot be guaranteed to exist, even when there are two types of chores.
\end{proof}
Despite this non-existence result, one may still wish to obtain a TEF1 and PO outcome when the instance admits one. However, the following results show that this is not computationally tractable.
\begin{theorem} \label{thm:tef1_po_nphard_goods}
    Determining whether there exists a \emph{TEF1} allocation that is \emph{PO} for goods is \emph{NP}-hard, even when $n = 2$.
\end{theorem}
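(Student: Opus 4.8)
The plan is to reduce from a standard NP-hard problem—most naturally \textsc{Partition} or \textsc{Subset-Sum}—and to use the structure already exploited in \Cref{prop_tef1_po}: with two agents and carefully chosen valuations, TEF1 forces the partial allocation at an early round to be a \emph{perfectly balanced} split of a designated block of items, while PO forces a monotone "all-or-nothing" structure on a later block. The interaction between these two constraints should encode exactly the existence of a balanced partition. Concretely, I would take a \textsc{Partition} instance $S=\{s_1,\dots,s_m\}$ with $\sum_i s_i = 2B$, and build an instance with two agents and roughly $m+2$ (or $m+4$) goods arriving in a prescribed order: a first block whose TEF1 constraints pin down that the two agents split it into two halves of equal value (so that the only freedom is \emph{which} half each agent gets), and a second ``gadget'' block of a few goods whose only purpose is to make every allocation that is not induced by a balanced partition Pareto-dominated—using valuations of the form $1.1$ versus $2$ as in \Cref{prop_tef1_po} so that PO demands each agent receive its high-value items. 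The order of arrival is what converts the offline constraint into a temporal one.

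The key steps, in order: (1) describe the reduction and the item order; (2) prove a ``forcing lemma'': in any TEF1 allocation of this instance, the partial allocation after the first block is an exact bipartition of that block into equal-value halves, mirroring the ``first two goods must be split'' argument in \Cref{prop_tef1_po} but scaled up so that the split encodes a subset-sum witness—here the temporal constraint at the round where the block completes is the crucial one; (3) prove a ``PO lemma'': any TEF1 allocation that additionally satisfies PO must allocate the gadget block so that agent~1 gets its favored items and agent~2 gets its favored items, and show this forces the value each agent derives from the first block to be exactly $B$; (4) conversely, given a balanced partition, exhibit the TEF1 allocation that is also PO by checking EF1 at every prefix (routine, since both agents stay within additive slack $1.1$ or $2$ of each other throughout) and checking PO directly (no allocation can improve one agent without hurting the other, by a short exchange argument). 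Combining (2)–(4) gives: the instance admits a TEF1 and PO allocation iff $S$ has a balanced partition.

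The main obstacle is calibrating the numbers so that all three pressures coexist: the TEF1 prefix constraints must be tight enough to \emph{force} a balanced split after the first block (and not merely an approximately balanced one), yet loose enough that a genuine balanced split \emph{survives} TEF1 at every later prefix once the gadget items arrive; and the PO constraint must bite on the gadget block without over-constraining the first block. I expect this to require a small constant-size gadget (analogous to the $g_1,\dots,g_{21}$ block in the goods hardness proof, or the $b_1,\dots,b_4$ chores in the chores hardness proof) whose EF1/PO behavior is verified by explicit case analysis—possibly, as the authors do elsewhere, delegated to a short computer-checked enumeration. A secondary subtlety is that PO for two additive agents has a clean characterization (an allocation is PO iff there is no improving swap of item-sets, equivalently iff it maximizes a weighted welfare $\lambda_1 v_1 + \lambda_2 v_2$ for some weights), which I would invoke to make the PO lemma and the converse both short; getting the weights to be consistent across the two blocks is where the parameter choices must be exactly right.
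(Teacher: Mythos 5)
Your architecture is essentially the paper's: a first block of identically-valued goods whose exactly-balanced split encodes the witness of an NP-hard problem, followed by a constant-size gadget block of goods with slightly asymmetric valuations, arranged so that TEF1 and PO conflict on the gadget unless the first block was split exactly evenly. (The paper reduces from \textsc{1-in-3-SAT} rather than \textsc{Partition}, but only as a means of building a \textsc{Partition}-like block of powers of $5$; that choice is inessential.) However, two of your steps have genuine problems. Your ``forcing lemma'' in step (2) --- that in any TEF1 allocation the partial allocation after the first block is an \emph{exact} equal-value bipartition --- cannot hold for any two-agent instance: for $n=2$ a TEF1 allocation always exists (the result of \citet{he2019fairerfuturepast} cited in Section~\ref{sec:twoagents}), and EF1 tolerates an imbalance of up to one item's value, so TEF1 alone can never force exact balance; if it did, every instance would admit a balanced bipartition. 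The forcing must come from TEF1 \emph{together with} PO, and the paper argues it in the contrapositive: if the first block is unbalanced (by more than $\varepsilon$, which is automatic for a nonzero integer imbalance), then the TEF1 constraint at each of the four gadget rounds forces each gadget good onto the agent who values it \emph{less}, and the resulting allocation is Pareto-dominated by swapping the gadget goods. Your step (3) gestures at the right statement, but for this forcing to work the gadget values must dominate the largest possible first-block bundle value (the paper uses $5^{m+n}$ versus $5^{m+n}-\varepsilon$); constants like $1.1$ versus $2$, as in Proposition~\ref{prop_tef1_po}, do not scale.

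The second gap is in the completeness direction, which you describe as ``routine'' but which is exactly where the construction does real work. Given a balanced partition, you must exhibit a split of the first block that is EF1 at \emph{every prefix} under the prescribed arrival order; for an arbitrary \textsc{Partition} instance this can fail, since the only balanced splits may force one agent to receive a long run of early, high-valued items, violating EF1 at an intermediate round. The paper avoids this by having the first block arrive as consecutive pairs $t_i,f_i$ of nearly equal, geometrically decreasing value, so that the balanced split alternates within each pair and every prefix is EF1 essentially for free. You correctly identify this calibration as the main obstacle, but without supplying such structure the reduction is incomplete; as written, the proposal is a plan whose two central lemmas are, respectively, false as stated and unproved.
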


\begin{theorem} \label{thm:tef1_po_nphard_chores}
    Determining whether there exists a \emph{TEF1} allocation that is \emph{PO} for chores is \emph{NP}-hard, even when $n=2$.
\end{theorem}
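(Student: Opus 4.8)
The plan is to reduce from \textsc{Partition}, following the template of the earlier hardness results (in particular the \textsc{Partition}-based reduction that precedes Section~\ref{sec:efficiency}) but now compressed to two agents and reinforced with a Pareto obstruction of the flavour used in Proposition~\ref{prop_tef1_po}. Given a \textsc{Partition} instance $S=\{s_1,\dots,s_m\}$ with $\sum_j s_j = 2$ after scaling, I would build a two-agent chores instance whose chores arrive in three blocks: a small fixed set of \emph{anchor} chores, then the \emph{partition chores} $c_1,\dots,c_m$ with values tied to $s_1,\dots,s_m$ (roughly $v_1(c_j)=v_2(c_j)\approx -s_j$, with tiny asymmetric perturbations), and finally a small fixed set of \emph{closing} chores. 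The anchor block is chosen exactly as in the rigid instances of Proposition~\ref{prop_tef1_po}, so that any TEF1 allocation must split the anchor chores evenly between the two agents; this pins down the cumulative disutilities of the two agents as the partition block begins. The closing block, together with the EF1 requirement at the last rounds, is designed to force the cumulative disutility each agent absorbs from $c_1,\dots,c_m$ to be (nearly) equal, i.e.\ to force the set of partition chores allocated to agent~$1$ to correspond to a subset of $S$ of sum exactly $1$. Simultaneously, the valuation profile is tuned so that PO forbids precisely the ``escape'' allocations (e.g.\ concentrating the anchor chores on one agent, as in Proposition~\ref{prop_tef1_po}), so PO and TEF1 can only be reconciled when the partition exists.

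Next I would verify the two directions. For the ``yes'' direction, from a balanced partition $S=S_1\cup S_2$ (with $S_1\cap S_2=\varnothing$) I would exhibit the allocation giving agent~$1$ the partition chores indexed by $S_1$, agent~$2$ those indexed by $S_2$, and the anchor/closing chores split in the one way dictated by the gadget; I would then check round-by-round that every prefix is EF1 (the two agents' running disutilities stay within one chore of each other throughout) and that the final allocation admits no Pareto-improving swap or transfer, hence is PO. For the ``no'' direction I would argue the contrapositive: by the anchor and closing gadgets, any TEF1 allocation must split the anchors evenly and split the partition chores into two almost-equal-disutility halves, which is impossible when no subset of $S$ sums to $1$ --- so either no TEF1 allocation exists, or any near-balanced TEF1 split is Pareto-dominated by relocating a suitably chosen partition chore and hence is not PO; conversely, every PO allocation concentrates the anchor chores in a way that violates EF1 at an early round, exactly as in Proposition~\ref{prop_tef1_po}. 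Combining both directions yields that a TEF1 and PO allocation exists if and only if $S$ is a yes-instance.

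The main obstacle is the simultaneous control of two very different constraints: TEF1 is a statement about \emph{every prefix} and is sensitive to the arrival order, whereas PO is a global property of the final bundles. Choosing an anchor/closing gadget that (i) is itself a legitimate partial TEF1 configuration, (ii) enforces the intended rigidity at exactly the right rounds, and (iii) creates no accidental Pareto-improving exchange with the partition chores requires a delicate choice of the perturbation constants; as in the goods hardness result of Section~\ref{sec:existence}, it may be cleanest to fix a concrete constant-size gadget and certify its relevant properties by exhaustive search over its finitely many allocations. A secondary subtlety is excluding degenerate PO violations that transfer a zero-valued chore, which I would handle by ensuring that every agent has strictly negative value for every chore in the construction.
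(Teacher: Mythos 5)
Your overall architecture --- a balance-forcing main block followed by a small endgame gadget that makes TEF1 and PO clash precisely when the main block cannot be split evenly --- is indeed the shape of the paper's proof. But the two steps that carry all of the difficulty are missing or, as described, would fail. First, the ``yes'' direction: with raw \textsc{Partition} numbers arriving one per round and (near-)identical valuations, a balanced final split need not be realizable as a TEF1 sequence of prefixes. For chores of values $-1,-1,-1,-3$ the only balanced split is $\{c_1,c_2,c_3\}$ versus $\{c_4\}$, but after round~$3$ the loaded agent sits at $-3$ against an empty bundle, and $-2<0$ even after dropping one chore, so EF1 already fails at that prefix. Your claim that ``the two agents' running disutilities stay within one chore of each other throughout'' is exactly what is not guaranteed for arbitrary \textsc{Partition} data. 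The paper sidesteps this by reducing from \textsc{1-in-3-SAT} and encoding each variable as a \emph{pair} of chores $t_i,f_i$ sharing the dominant term $-5^{m+n-i}$, so that the balanced allocation hands one chore of each pair to each agent and every prefix is EF1 automatically by geometric dominance; your plan contains no analogous pairing or padding device.

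Second, the endgame gadget and the ``no'' direction. You propose taking the anchor block ``exactly as in the rigid instances of Proposition~\ref{prop_tef1_po}''; but that instance admits no TEF1 and PO allocation at all, so importing it verbatim would make the reduction answer ``no'' unconditionally. The obstruction has to be \emph{conditional} on the imbalance of the preceding prefix. The paper achieves this with four closing chores whose two agents' values differ by a small $\varepsilon$ in a crossed pattern: it proves quantitative bounds on how unbalanced any TEF1 prefix can be, observes that an unbalanced prefix of the (integer-scaled) main block is unbalanced by more than $\varepsilon$, and shows that in that case the prefix-EF1 constraints force each closing chore onto the agent who dislikes it more, producing a Pareto-dominated outcome --- whereas a perfectly balanced prefix admits the efficient assignment of the closing chores while preserving TEF1. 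Your substitute mechanism for the ``no'' direction, Pareto domination ``by relocating a suitably chosen partition chore,'' cannot work if the partition chores are identically valued (moving one chore strictly helps one agent and strictly hurts the other), and if they carry asymmetric perturbations then PO starts constraining the partition block itself and the clean correspondence with \textsc{Partition} is lost. As written, neither direction of the equivalence is established; the ``delicate choice of perturbation constants'' you defer is precisely the content of the proof.
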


The above results essentially imply that even determining whether an instance admits a TEF1 and \emph{utilitarian-maximizing} (i.e., sum of agents' utilities) allocation is computationally intractable, since a utilitarian-welfare maximizing allocation is necessarily PO.
In fact, for the case of goods, we can make a stronger statement relating to the general class of \emph{p-mean welfares}, defined as follows.\footnote{Note that we cannot say the same for chores as when agents' valuations are negative, as the $p$-mean welfare may be ill-defined. Moreover, the fairness guarantees in the welfare measures that do work are not well-established, apart from concepts like utilitarian or egalitarian welfare.}
\begin{definition} \label{defn-pmean} 
    Given $p \in (-\infty, 1]$ and an allocation  $\mathcal{A} = (A_1,\dots, \allowbreak A_n)$ of goods, the
    \emph{$p$-mean welfare} 
    is 
        $\left( \frac{1}{n} \sum_{i \in N} v_i(A_i)^p \right)^{1/p}$.
\end{definition}
In the context of fair division, $p$-mean welfare has been traditionally and well-studied for the setting with goods \citep{barman2020pmean,chaudhury2021fairefficient}, although it has recently been explored for chores as well \citep{Eckart2024}.
Importantly, $p$-means welfare captures a spectrum of commonly studied fairness objectives in fair division.
For instance, setting $p = 1$ (resp. $p= -\infty$) would correspond to the utilitarian (resp. egalitarian) welfare. Setting $p \to 0$ corresponds to maximizing the geometric mean, which is also known as the Nash welfare \citep{CaragiannisKuMo19}. 

Then, from our construction in the proof of Theorem~\ref{thm:tef1_po_nphard_goods} (for goods), we have that an allocation is TEF1 and PO if and only if it also maximizes the $p$-mean welfare, for all $p \in (-\infty,1]$, thereby giving us the following corollary.

\begin{corollary}
    For all $p \in (-\infty,1]$, determining whether there exists a \emph{TEF1} allocation that maximizes $p$-means welfare is \emph{NP}-hard, even when $n = 2$.
\end{corollary}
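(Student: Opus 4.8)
The plan is to leverage the specific instance constructed in the proof of Theorem~\ref{thm:tef1_po_nphard_goods} directly, and to show that on that instance the property ``TEF1 and PO'' coincides with ``TEF1 and $p$-mean welfare maximizing'' for every $p \in (-\infty, 1]$. Since we may invoke Theorem~\ref{thm:tef1_po_nphard_goods}, we already have an NP-hardness reduction (from some NP-hard problem, via the construction in that proof) whose instances are ``yes'' exactly when a TEF1 and PO allocation exists. The goal is therefore to argue that the \emph{same} reduction works verbatim for the $p$-mean objective, i.e., that a TEF1 allocation maximizing $p$-mean welfare exists if and only if a TEF1 and PO allocation exists.

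First I would recall the structure of the instance from Theorem~\ref{thm:tef1_po_nphard_goods}: it has $n=2$ agents, and the key feature we need is that across \emph{all} TEF1 allocations the total value each agent receives is the same fixed pair of quantities (or more precisely, that the feasible TEF1 allocations all yield the same multiset of bundle values up to the combinatorial choice being decided). I would examine the construction and verify the crucial invariant: every TEF1 allocation of the constructed instance gives agent~$1$ and agent~$2$ the \emph{same total utility} $V := v_1(A_1) = v_2(A_2)$ regardless of which TEF1 allocation is chosen — this is exactly the ``balanced'' structure that the \textsc{Partition}-style gadget forces, analogous to Lemma~\ref{lem:tef1goodshard} and the $v(A_1)=v(A_2)$ condition appearing in Lemma~\ref{lemma-code}. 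Given this invariant, for $n=2$ any two TEF1 allocations are Pareto-incomparable (neither dominates the other, since their utility vectors are equal), so being TEF1 already makes an allocation ``locally'' PO among TEF1 allocations; what PO additionally rules out is domination by a \emph{non-TEF1} allocation. The content of the reduction in Theorem~\ref{thm:tef1_po_nphard_goods} is precisely that such outside domination is avoided iff the underlying combinatorial instance is a ``yes'' instance.

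Next I would observe that the $p$-mean welfare $\left(\tfrac12 (v_1(A_1)^p + v_2(A_2)^p)\right)^{1/p}$ is a symmetric, strictly Schur-concave function of the utility vector $(v_1(A_1), v_2(A_2))$ for every $p < 1$ (and the utilitarian sum for $p=1$), and it is monotone increasing in each coordinate. Consequently, among all allocations of the instance, the $p$-mean welfare is maximized exactly by the allocations whose utility vector is both (i) Pareto-undominated and (ii) as ``balanced'' as possible. By the balancedness invariant, every TEF1 allocation attains the perfectly balanced vector $(V,V)$; and by the argument in Theorem~\ref{thm:tef1_po_nphard_goods}, this vector $(V,V)$ is achievable by \emph{some} allocation iff the instance is a ``yes'' instance, in which case — because it is both the most balanced achievable utility vector and it is Pareto-optimal (being achieved by a PO allocation) — it is also the unique $p$-mean maximizer's value. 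Conversely, if the instance is a ``no'' instance, no TEF1 allocation is PO, hence no TEF1 allocation can be a $p$-mean maximizer (a $p$-mean maximizer is always PO, by monotonicity). Thus ``there is a TEF1 allocation maximizing $p$-mean welfare'' $\iff$ ``there is a TEF1 and PO allocation'' $\iff$ the original instance is a ``yes'' instance, and the corollary follows from Theorem~\ref{thm:tef1_po_nphard_goods}.

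The main obstacle I anticipate is verifying the balancedness invariant in full rigor — namely, that \emph{every} TEF1 allocation of the constructed instance yields utility vector exactly $(V,V)$, not merely that a TEF1 and PO allocation does. If the construction in Theorem~\ref{thm:tef1_po_nphard_goods} only forces balancedness on the PO-refined subset (rather than all TEF1 allocations), then I would instead argue more carefully: among TEF1 allocations, those that maximize $p$-mean welfare are a subset of those that are PO (since any TEF1 allocation dominated by another TEF1 allocation has strictly smaller $p$-mean welfare when $p\le 1$, by monotonicity together with the dominated vector being coordinatewise $\le$), and conversely a TEF1+PO allocation has a Pareto-undominated utility vector, so no allocation beats it on every coordinate, and one checks via the instance's numeric values that it indeed attains the global $p$-mean optimum — this last check is where one must be slightly careful that no \emph{non-TEF1} allocation has strictly larger $p$-mean welfare, but that is exactly ruled out by the PO requirement combined with the structure already established in the proof of Theorem~\ref{thm:tef1_po_nphard_goods}. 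Either way, the argument is short once the relevant property of the instance is pinned down; the bulk of the work is bookkeeping on the existing construction rather than new ideas.
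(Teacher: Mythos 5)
Your overall route is the same as the paper's: reuse the instance from Theorem~\ref{thm:tef1_po_nphard_goods} and argue that on that instance ``TEF1 and PO'' coincides with ``TEF1 and $p$-mean maximizing.'' Your backward direction is correct and is the easy half: a global $p$-mean maximizer has strictly positive utilities here and $M_p$ is strictly increasing in each coordinate on the positive orthant, so any $p$-mean maximizer is PO; since in a no-instance every TEF1 allocation is Pareto-dominated (that is what the case analysis in Theorem~\ref{thm:tef1_po_nphard_goods} establishes), no TEF1 allocation can maximize $p$-mean welfare.

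There are two soft spots in the forward direction. First, your ``balancedness invariant'' as initially stated is false: in a no-instance, TEF1 allocations still exist (by the two-agent existence result) but are not utility-balanced; you do flag this yourself and your fallback is the right move, so this is not fatal. Second, and more substantively, your fallback still rests on the inference ``the vector $(V,V)$ is the most balanced achievable vector and is Pareto-optimal, hence it is the $p$-mean maximizer.'' That implication is not valid in general: for $p$ near $1$, a less balanced vector with a strictly larger sum can have larger $p$-mean welfare, and Pareto-optimality of $(V,V)$ does not exclude such a vector (PO only excludes coordinatewise domination). The correct argument, which is what makes the paper's one-line claim go through, is that the yes-case TEF1-and-PO allocation \emph{also attains the maximum utilitarian welfare}: the first $2n+1$ goods are identically valued, so their contribution to $v_1(A_1)+v_2(A_2)$ is allocation-independent, and each of $o_1,\dots,o_4$ is given to its higher valuer. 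Writing $S_{\max}$ for this maximum sum, every achievable utility vector $(a,b)$ satisfies $M_p(a,b)\le M_1(a,b)=(a+b)/2\le S_{\max}/2=M_p(V,V)$ by the power-mean inequality, so the balanced max-sum allocation is a global $p$-mean maximizer for every $p\in(-\infty,1]$. With that one verification inserted, your argument is complete and matches the paper's intended proof.
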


\section{Multiple Items per Round} \label{sec:multigoods}
We now move to the setting where multiple items may arrive at each round. 
Recall that Lemma~\ref{lem:transform} states that instances where multiple items arrive per round can be transformed into instances where a single item arrives per round. 
Accordingly, in this section, we focus on results for restricted settings which are incompatible with Lemma~\ref{lem:transform}.
We begin by showing that when there are two timesteps, a TEF1 allocation can be computed efficiently.
\begin{theorem} \label{thm:multi_T=2}
    When $T = 2$, a \emph{TEF1} allocation for goods or chores exists and can be computed in polynomial time.
\end{theorem}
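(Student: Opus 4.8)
The plan is to exploit the fact that, when $T=2$, temporal EF1 only constrains the cumulative allocations after round $1$ and after round $2$: the intermediate states produced "inside'' a round are irrelevant. Hence it suffices to (i) produce an EF1 allocation $\mathcal{A}^1$ of $O_1$, and (ii) extend $\mathcal{A}^1$ to an EF1 allocation of $O=O_1\cup O_2$; note that Lemma~\ref{lem:transform} is useless here since flattening the instance would turn it into an $m$-round instance. For goods I would take $\mathcal{A}^1$ to be the output of round-robin on $O_1$ with agent order $1,2,\dots,n$, and then allocate $O_2$ by a \emph{reverse} round-robin with order $n,n-1,\dots,1$. Recall the standard round-robin facts for goods: under order $1,\dots,n$, agent $i$ never envies any $j>i$, while $j>i$ envies $i$ by at most the good $i$ picked first.

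Concretely, after round $1$ this gives, for every $i<j$, $v_i(A^1_i)\ge v_i(A^1_j)$ and some $g'\in A^1_i$ with $v_j(A^1_j)\ge v_j(A^1_i\setminus\{g'\})$; after the reverse round-robin on $O_2$ it gives (since $j$ now picks before $i$), for every $i<j$, $v_j(A^2_j)\ge v_j(A^2_i)$ and some $g\in A^2_j$ with $v_i(A^2_i)\ge v_i(A^2_j\setminus\{g\})$. Adding, for each ordered pair, the two inequalities that point the same way yields EF1 of $\mathcal{A}=\mathcal{A}^1\cup\mathcal{A}^2$: for $(i,j)$ with $i<j$ we get $v_i(A_i)\ge v_i(A_j\setminus\{g\})$ with $g\in A_j$, and for $(j,i)$ we get $v_j(A_j)\ge v_j(A_i\setminus\{g'\})$ with $g'\in A_i$. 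Since $\mathcal{A}^1$ is itself a round-robin output it is EF1, and both phases run in polynomial time, which settles the goods case.

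For chores the same template is the right one, but the second phase is genuinely more delicate and I expect it to be the main obstacle. A plain reverse round-robin on $O_2$ fails: by pigeonhole some agent receives chores in \emph{both} phases, and a short $n=3$, $|O_1|=|O_2|=2$ example shows this agent can end up envying two others by more than one chore simultaneously. The fix I would pursue is to leave $\mathcal{A}^1$ \emph{untouched} throughout round $2$ (so it remains EF1 verbatim) and insert the chores of $O_2$ one at a time, each time handing the current chore to an agent who presently envies nobody. This preserves EF1 of the whole partial allocation: every other agent's EF1 certificate can only improve when a chore is piled onto someone else, and the recipient, having been envy-free, is still EF1 after dropping the new chore. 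The crux is to guarantee that we never get stuck, i.e.\ that an envy-free agent always exists while $O_2$ is being inserted; this is where the chores analysis departs from the clean goods argument, and it requires choosing $\mathcal{A}^1$ so that its envy graph is acyclic (for instance a round-robin allocation of $O_1$, whose envy graph one checks is acyclic) together with a careful rule for which envy-free agent receives each incoming chore so that acyclicity is maintained.
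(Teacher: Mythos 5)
Your goods argument is correct and is essentially the paper's proof: forward round-robin on $O_1$, reverse round-robin on $O_2$, and for each ordered pair of agents one phase contributes full envy-freeness while the other contributes an EF1 certificate, so the two inequalities sum to EF1 of the cumulative allocation. You also correctly diagnose why the naive transfer of this to chores breaks: with $|O_t|$ not aligned to $n$, the pigeonholed agent can receive a chore in both phases while some other agent receives none, and the ``earlier picker does not envy later pickers'' claim fails for chores when the earlier picker ends up with \emph{more} chores.

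Where you go astray is in concluding that reverse round-robin must be abandoned for chores. The paper's fix for exactly the obstruction you found is much lighter than what you propose: pad each round with zero-valued dummy chores so that every agent receives the same number of chores in each round; then within each round the earlier picker's $k$-th pick is weakly preferred (by that picker) to the later picker's $k$-th pick, the two key inequalities $v_i(A^1_i)\ge v_i(A^1_j)$ for $i<j$ and $v_i(B_i)\ge v_i(B_j)$ for $i>j$ hold for chores as well, and the same pairwise summation (now removing a chore from one's \emph{own} bundle) goes through verbatim. Your counterexample only defeats the unpadded version. By contrast, your replacement scheme --- freeze $\mathcal{A}^1$ and insert each chore of $O_2$ into the bundle of an agent who currently envies nobody --- has a genuine gap at precisely the point you flag as the crux: you never establish that a non-envious agent exists at every step. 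The standard way to guarantee this for chores is envy-cycle elimination with bundle swaps, but swaps are unavailable here because they would move round-$1$ items and destroy the EF1 property of $\mathcal{A}^1$ (the allocation must satisfy $A^1_i\subseteq A^2_i$). Maintaining acyclicity without swaps is not obviously possible either: when the recipient $i$ (previously a sink of the envy graph) acquires new out-edges, any agent with a pre-existing path to $i$ closes a cycle, and you give no rule that provably avoids this. As written, the chores half of your proof is incomplete.
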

\begin{proof}
    Let $\mathcal{A}^1$ and $\mathcal{B} = \mathcal{A}^2\setminus\mathcal{A}^1$ be the allocations of item sets $O_1$ and $O_2$ respectively. 
    Note that while we are in the setting whereby $O_1 = O_2$, we can simply relabel item.
    
    We first address a special case. When allocating chores, for each $t\in \{1,2\}$ such that $|O_t|<n$ (i.e., there are less chores than agents in either round), add $n-|O_t|$ zero-valued dummy chores to $O_t$.
    
    To obtain $\mathcal{A}^1$, we allocate the items in the first round in a round-robin fashion, with picking sequence $(1,\dots,n)^*$. That is, agent $1$ picks their most preferred item, followed by agent $2$, and so on until agent $n$, after which the sequence restarts. The items arriving in the second round are also allocated in a round-robin fashion to obtain $\mathcal{B}$, but with picking sequence $(n,\dots,1)^*$. The round-robin algorithm is well-known to satisfy EF1 for both the goods and chores settings \citep{AzizCaIg22chores}, so we know that $\mathcal{A}^1$ and $\mathcal{B}$ are EF1. It remains to show that $\mathcal{A}^2=\mathcal{A}^1 \cup \mathcal{B}$ is EF1.

    Consider an arbitrary pair of agents $i,j$. If $i<j$, then 
    $$v_i(A^1_i) \geq v_i(A^1_j),$$
    because $i$ precedes $j$ in the picking sequence for allocation $\mathcal{A}$. Similarly, if $i>j$, then $$v_i(B_i) \geq v_i(B_j).$$ Note that these inequalities hold for both goods and chores.
    
    We now prove our result for goods. Consider an arbitrary agent $i$. Since $\mathcal{A}^1$ and $\mathcal{B}$ are EF1, we know that for any agent $j\neq i$, there exists a good $g_a\in A^1_j$ such that $v_i(A^1_i) \geq v_i(A^1_j\setminus \{g_a\})$, and there exists a good $g_b\in B_j$ such that $v_i(B_i) \geq v_i(B_j\setminus \{g_b\})$. Therefore for any agent $j<i$, there exists $g_a\in A^1_j$ such that
    \begin{equation*}
        v_i(A_i^2)
        =v_i(A^1_i \cup B_i)
        \geq v_i(A^1_j)-v_i(g_a) + v_i(B_j)
        = v_i(A_j^2)-v_i(g_a)=v_i(A_j^2\setminus \{g_a\}).
    \end{equation*}
    Similarly, for any $j>i$, there exists $g_b\in B_j$ such that $v_i(A_i^2)\geq v_i(A_j^2\setminus \{g_b\})$.

    We next prove our result for chores. Again consider an arbitrary agent $i$. Due to $\mathcal{A}^1$ and $\mathcal{B}$ satisfying EF1, for any agent $j\neq i$, there exists a chore $c_a\in A^1_i$ such that $v_i(A^1_i\setminus \{c_a\}) \geq v_i(A^1_j)$, and there exists a chore $c_b\in B_j$ such that $v_i(B_i\setminus \{c_b\}) \geq v_i(B_j)$. Therefore for any $j<i$, there exists $c_a\in A^1_i$ such that
    \begin{equation*}
        v_i(A_i^2\setminus \{c_a\})
        =v_i(A^1_i\setminus \{c_a\})+v_i(B_i)
        \geq v_i(A^1_j) + v_i(B_j)
        = v_i(A_j^2).
    \end{equation*}
    Similarly, for any $j>i$, there exists $c_b\in B_i$ such that $v_i(A_i^2\setminus \{c_b\})\geq v_i(A_j^2)$. This concludes the proof.
\end{proof}
For the remainder of Section~\ref{sec:multigoods}, we consider the \emph{repeated} setting (as similarly studied by \citet{igarashi2023repeatedfairallocation} and \citet{caragiannis2024repeatedmatching}), whereby the same set of items appears at each round.
In our setting of a single item per round, this reduces to the same item appearing at every round, and a simple round-robin algorithm over the agents will suffice in achieving TEF1 at every round.
However, a positive result in the single-item case does not translate to a positive result in the multi-item case here, since we are imposing a constraint on the set of items that can appear at each round.

It remains an open question whether a TEF1 allocation exists in general for this setting.
However, we can show that, perhaps surprisingly, determining whether repeating the \emph{same allocation} over the items at \emph{every round} results in a TEF1 allocation is NP-hard.

We first define the concept of a \emph{repetitive allocation} $\mathcal{A}$, whereby $\mathcal{A}^t \setminus \mathcal{A}^{t-1} = \mathcal{A}^{t'} \setminus \mathcal{A}^{t'-1}$ for each $t,t' \in [T]$. 
Then, for an initial allocation $\mathcal{A}^0 = (\varnothing, \dots, \varnothing)$, we have the following result.

\begin{theorem} \label{thm:multi_hardness}
    Determining whether there exists a repetitive allocation $\mathcal{A} = (A_1,\dots,A_n)$ which is \emph{TEF1} for goods or chores is \emph{NP}-hard, even when $T=2$ and agents have identical valuations.
\end{theorem}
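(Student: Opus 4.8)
The plan is to reduce from \textsc{Partition} (or a suitable equality-partition variant), exploiting the fact that with $T=2$ and a repetitive allocation, an item assigned to agent $i$ contributes its value \emph{once} after round~1 and \emph{twice} after round~2. So the EF1 constraints at round~1 and at round~2 are genuinely different linear constraints on the same partition, and we want to choose a partition $(A_1,\dots,A_n)$ for which both the ``single-copy'' allocation and the ``double-copy'' allocation are EF1. First I would fix a small number of agents (two should suffice, matching the ``identical valuations'' clause) and identical additive valuations, so that the only freedom is which agent gets which item; with two agents, EF1 of a partition $(A_1,A_2)$ at round~$t$ amounts to $|t\cdot v(A_1) - t\cdot v(A_2)| \le t\cdot \max(\text{droppable item})$, i.e.\ a bound on $|v(A_1)-v(A_2)|$ in terms of the largest item one can remove from the richer side. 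The key design idea is to make this ``droppable slack'' behave differently at the two rounds: after round~1 each agent holds one copy of their bundle, but after round~2 each holds two copies, so the structure of ``which single item can be removed'' is unchanged, yet the gap $|v(A^2_1)-v(A^2_2)| = 2|v(A_1)-v(A_2)|$ has doubled while the removable item value has \emph{not} doubled (it is still a single physical item). This asymmetry is what I would engineer a gap-gadget around.

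Concretely, I would take the ground set to consist of (a) a block of ``partition items'' whose values encode the \textsc{Partition} instance, scaled to be small, together with (b) one or two large ``anchor'' items whose role is to force a near-balanced split and to calibrate the removable slack. I want the following to hold: round-1 EF1 is easy to satisfy for any roughly balanced partition (because a single large anchor can always be dropped), but round-2 EF1 forces $2|v(A_1)-v(A_2)|$ to be at most the value of one anchor item, which after accounting for the anchors themselves pins $|v(\text{partition part of }A_1) - v(\text{partition part of }A_2)|$ down to zero --- i.e.\ forces an exact partition. The converse direction is routine: given an equal-sum partition of the \textsc{Partition} instance, distribute the anchors symmetrically (one large anchor to each agent, or split two anchors appropriately) and verify both rounds are EF1 by the arithmetic just described. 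I would write out the two directions as short lemmas analogous to Lemma~\ref{lem:chores_ef1_hard_agents14} and Lemma~\ref{lem:chores_ef1_hard_agents23} in the earlier chores reduction.

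For the chores version the construction is essentially a sign-flip: negate all values, and the same anchor/partition-item split works, with ``richer side'' replaced by ``side holding the more negative bundle'' and the removable item taken from one's own bundle rather than the other's. I would present the goods construction in full and then remark that replacing every $v(o)$ by $-v(o)$ and swapping the direction of the EF1 inequalities gives the chores result verbatim, so a single reduction template covers both.

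The main obstacle I anticipate is the calibration of constants so that round~1 is \emph{always} EF1 for the intended family of partitions (so it imposes no constraint beyond ``roughly balanced'') while round~2 is EF1 \emph{exactly} when the partition items split evenly --- the difficulty being that the same physical large item is the one getting dropped in both rounds, so its value enters the round-1 bound and the round-2 bound multiplied by $1$ in both cases even though the bundle gap scales by $t$. Making the ``$2\times$ gap vs.\ $1\times$ removable item'' squeeze bite precisely (no slack that would let an unbalanced partition sneak through, yet enough slack at round~1) is where the arithmetic has to be done carefully; I would handle it by choosing the partition-item scale $\varepsilon$ small relative to the gap between consecutive achievable anchor-configuration values, exactly as $K$ and $\varepsilon$ are used in the earlier \textsc{Partition} reduction in this paper.
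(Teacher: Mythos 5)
Your high-level intuition --- that a repetitive allocation must survive both the ``one copy'' and ``two copies'' EF1 tests, and that the doubling of the bundle gap against the non-doubling of the single removable item is the source of hardness --- is exactly the engine of the paper's proof. But the specific construction you propose (two agents, reduction from \textsc{Partition}, anchors to calibrate the slack) has a gap that I do not think your calibration plan can close. With two agents and identical valuations, EF1 of the round-$t$ allocation reduces to $t\,|v(A_1)-v(A_2)| \le \max_{g \in A_R} v(g)$, where $A_R$ is the richer bundle. The trouble is that the right-hand side is the largest item \emph{in the richer bundle}, and this slack scales together with the partition items: if you shrink the partition items by $\varepsilon$ to make the imbalance granularity fine, the largest partition item (which must sit in somebody's bundle, and the richer bundle will generically contain a large item) shrinks by the same $\varepsilon$, so the squeeze ``$2\Delta \le \max$'' never pins $\Delta$ to zero --- it only forces $\Delta$ to be at most half the largest item on the richer side, which an unbalanced partition can easily satisfy. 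Adding large anchors makes this worse, not better: an anchor in the richer bundle strictly increases the removable slack there, and you cannot control which side ends up richer. (A sanity check that something real is going on: with three unit-value goods and two agents, no repetitive TEF1 allocation exists at all, so the two-agent decision problem is nontrivial; but I was unable to make any \textsc{Partition} gadget force \emph{exact} balance under the constraint $2\Delta \le \max_{A_R}$, and your sketch does not supply one.)

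The paper's resolution is structurally different and is the idea your plan is missing: it reduces from \textsc{Multiway Number Partitioning} and uses $\kappa+1$ agents, where agent $\kappa+1$ is intended to hold \emph{only} a single anchor good of value $2W$ (with the partition items summing to $\kappa W$). At round $2$ the anchor agent holds two copies of the anchor, so removing one copy leaves exactly $2W$; TEF1 then forces every other agent to have round-$2$ value at least $2W$, i.e.\ round-$1$ value at least $W$, and since the total budget of partition items is exactly $\kappa W$, a counting argument forces each of the $\kappa$ agents to hit exactly $W$ --- an exact $\kappa$-way partition. Isolating the anchor with its own agent is what makes the removable-item threshold a fixed, instance-independent quantity ($2W$) rather than ``half of whatever large item happens to lie in the richer bundle,'' and the budget argument replaces the delicate $\varepsilon$-calibration you were worried about. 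Your treatment of chores as a sign flip matches the paper (it shifts values by $-K$ to make them non-positive and reverses the inequalities), so that part of your plan is fine once the goods construction is repaired.
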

\begin{proof}
    For both cases of goods and chores,
    we reduce from NP-hard problem \textsc{Multiway Number Partitioning} \citep{Graham1969}.
    An instance of this problem consists of a positive integer $\kappa$ and a multiset $S =\{s_1\dots s_m\}$ of $m$ non-negative integers whose sum is $\kappa W$; it is a yes-instance if $S$ can be partitioned into $\kappa$ subsets such that the sum of integers in each subset is $W$, and a no-instance otherwise.

    Consider an instance of \textsc{Multiway Number Partitioning} given by a positive integer $\kappa$ and a multiset $S = \{s_1,\dots,s_m\}$ of $m$ nonnegative integers.

    We first prove the result for goods.
    We construct an instance with $\kappa+1$ agents and $m+1$ goods in each round:
    $O_1 = \{g_1,\dots,g_{m+1}\}$ and $O_2 = \{g'_1,\dots,g'_{m+1} \}$, where agents have an identical valuation function $v$ defined as follows:
    \begin{equation*}
        v(g_j) = v(g'_j) = 
        \begin{cases}
        s_j, & \text{if } j \leq m, \\
      2W, & \text{if } j = m+1.         \end{cases}
    \end{equation*}

    We will now prove that there exists a repetitive TEF1 allocation $\mathcal{A}$ if and only if the set $S$ can be partitioned into $\kappa$ subsets with equal sums (of $W$ each).

    For the `if' direction, consider a $\kappa$-way partition $\mathcal{P} = \{P_1, \dots P_\kappa\}$ of $S$ with equal sums. 
    We construct allocations $\mathcal{A}^1$ and $\mathcal{A}^2$ such that the goods in both rounds are allocated identically, and show that $\mathcal{A}^2$ satisfies TEF1.
    
    For each $i \in \{1, \dots, \kappa\}$, allocate the goods corresponding to the elements of subset $P_i$ to agent $i$, and the good $g_{m+1} $ to agent $\kappa+1$. 
    Then, in $\mathcal{A}^1$, for each agent $i\in [\kappa]$, $v(A^1_i) = \sum_{g \in P_i}g = W$, and $v(A^1_{\kappa+1}) = v(\{g_{m+1}\}) = 2W$.
    It is easy to verify that $\mathcal{A}^1$ is TEF1: no agent $i \in [\kappa]$ will envy another agent $j \in [\kappa] \setminus \{i\}$, as they have the same bundle value, and agent $i$'s envy towards agent $\kappa+1$ can be removed by simply dropping $g_{m+1} \in A^1_{\kappa+1}$.
    Also, agent $\kappa+1$, having a strictly higher bundle value of $2W$, will not envy any agent $i \in [\kappa]$, whose bundle value is $W$.
    
    Next, we consider $\mathcal{A}^2$. Recall that $\mathcal{A}=\mathcal{A}^2$ is a repetitive allocation, and thus the items in the second round are allocated identically to the items in the first round. For each agent  $i\in [\kappa]$, $v(A^2_i) = 2W$, and  $v(A^2_{\kappa+1}) = 4W$.
    It is also easy to verify that $\mathcal{A}^2$ is TEF1: each pair of agents $i,j \in [\kappa]$ have the same bundle value, and the envy that any agent $i\in [\kappa]$ has towards agent $\kappa+1$ can be removed by dropping good $g_{m+1} \in A^2_{\kappa+1}$.
    Finally, agent $\kappa+1$ has a strictly higher bundle value than any agent $i \in [\kappa]$.

    For the `only if' direction, suppose we have a repetitive allocation $\mathcal{A}^2$ which satisfies TEF1.
    Since agents have identical valuation functions, without loss of generality, suppose that agent $\kappa+1$ receives good $g_{m+1}$ under $\mathcal{A}^1$.
    Then, $v(A^2_{\kappa+1} \setminus \{g_{m+1}\}) \geq 2W$.
    In order for $\mathcal{A}^2$ to be TEF1, we must have that $v(A^2_i) \geq 2W$ for each $i \in [\kappa]$ (so that an agent $i \in [\kappa]$ does not envy agent $\kappa+1$).
    This means that for each $i \in [\kappa]$, $v(A^1_i) \geq W$, but since $\sum_{j\in [m]}s_j=\kappa W$, this is only possible if there is a $\kappa$-way partition of $S$ such that each subset has a sum of $W$.

     We now prove the result for the case of chores.
    We construct a set $S' = \{s'_1,\dots, s'_m\}$ such that for each $j \in [m]$, $s'_j = -K + s_j$ where $K := \max\{s_1,\dots,s_m\}$.
    Observe that $S'$ contains non-positive integers.
    Let $W' := \frac{1}{\kappa}\sum_{j\in [m]}s'_j$
    
    Then, we construct an instance with $\kappa+1$ agents and $m+1$ chores in each round:
    $O_1 = \{c_1,\dots,c_{m+1} \}$ and $O_2 = \{c'_1,\dots,c'_{m+1} \}$, where agents have an identical valuation function $v$ defined as follows:
    \begin{equation*}
        v(c_j) = v(c'_j) = 
        \begin{cases}
        s'_j, & \text{if } j \leq m, \\
      2W', & \text{if } j = m+1.         \end{cases}
    \end{equation*}
    We will now prove that there exists a repetitive TEF1 allocation $\mathcal{A}$ if and only if the set $S$ can be partitioned into $\kappa$ subsets with equal sums (of $W$ each).

    For the `if' direction, consider a $\kappa$-way partition $\mathcal{P} = \{P_1,\dots,P_\kappa\}$ of $S$ with equal sums (of $W$ each).
    This means that $S'$ can also be partitioned into $\kappa$ subsets of equal sums (with the same partition $\mathcal{P}$; let the sum be $W'$).
    We construct allocations $\mathcal{A}^1$ and $\mathcal{A}^2$ such that the chores in both rounds are allocated identically, and show that $\mathcal{A}^2$ satisfies TEF1.

    For each $i \in \{1,\dots,\kappa\}$, allocate the chores corresponding to the elements of subset $P_i$ to agent $i$, and the chore $c_{m+1}$ to agent $\kappa+1$.
    Then, in $\mathcal{A}^1$, for each agent $i \in [\kappa]$, $v(A^1_i) = \sum_{c \in P_i} c = W'$, and $v(A^1_{\kappa+1}) = v(\{c_{m+1}\}) = 2W'$.
    It is easy to verify that $\mathcal{A}^1$ is TEF1: every pair of agents $i,j \in [\kappa]$ has the same bundle value, and each agent $i\in [\kappa]$ has a higher bundle value than agent $\kappa+1$.
    Also, agent $\kappa+1$ will not envy any agent $i \in [\kappa]$ after removing chore $c_{m+1} \in A^1_{\kappa+1}$.

    Next, we consider $\mathcal{A}^2$.
    For each agent $i \in [\kappa]$, $v(A^2_i) = 2W'$, and $v(A^2_{\kappa+1}) = 4W'$.
    We verify that $\mathcal{A}^2$ is TEF1: again, each pair of agents $i,j \in [\kappa]$ has the same bundle value, and each agent $i\in [\kappa]$ has a higher bundle value than agent $\kappa+1$.
    Also, agent $\kappa+1$ will not envy any agent $i \in [\kappa]$ after removing chore $c_{m+1} \in A^2_{\kappa+1}$. 

    For the `only if' direction, suppose we have a repetitive allocation $\mathcal{A}^2$ which satisfies TEF1.
    Since agents have identical valuation functions, without loss of generality, suppose that agent $\kappa+1$ receives chore $c_{m+1}$ under $\mathcal{A}^1$.
    Then, $v(A^2_{\kappa+1} \setminus \{c_{m+1}\}) \leq 2W'$.
    In order for $\mathcal{A}^2$ to be TEF1, we must have that $v(A^2_i) \leq 2W'$ for each $i \in [\kappa]$ (so that agent $\kappa+1$ will not envy any agent $i \in [\kappa]$).
    This means that for each $i \in [\kappa]$, $v(A^1_i) \leq W'$, but since $\sum_{j\in [m]}s_j'=\kappa W'$, this is only possible if there is a $\kappa$-way partition of $S'$ such that each subset has a sum of $W'$ (i.e. there is a $\kappa$-way partition of $S$ such that each subset has a sum of $W$).
\end{proof}

\section{Conclusion}
In this work, we studied the informed online fair division of indivisible items, with the goal of achieving TEF1 allocations. For both goods and chores, we demonstrated the existence of TEF1 allocations in four special cases and provided polynomial-time algorithms for each case. Additionally, we showed that determining whether a TEF1 allocation exists for goods is NP-hard, and presented a similar, though slightly weaker, intractability result for chores. We further established the incompatibility between TEF1 and PO, which extends to an incompatibility with $p$-mean welfare. Finally, we explored the special case of multiple items arriving at each round.

Numerous potential directions remain for future work, including revisiting variants of the standard fair division model.
Examples include studying the existence (and polynomial-time computability) of allocations satisfying a temporal variant of the weaker \emph{proportionality up to one item} property (as defined by \citet{CFS17}), which would be implied by EF1; studying group fairness \citep{aleksandrov2018groupef,aziz2020groupefchores,benabbou2019groupfairness,conitzer2019groupfairness,kyropoulou2019groupallocation,Scarlett2023}; considering the more general class of \emph{submodular} valuations \citep{ghodsi2022submodular,montanari2024submodular,SuksompongTe23,uziahu2023submodular}; examining the \emph{house allocation} model where each agent gets a single item \citep{Choo2024,gan2019envy}, which was partially explored by \citet{micheel2024repeatedhouse}, or even looking at more general settings with additional size constraints \citep{BarmanKhShSrAAAI2023,BarmanKhShSr2023,Elkind2024}.
It would also be interesting to extend our results, which hold for the cases of goods and chores separately, to the more general case of mixed manna, in which items can be simultaneously viewed as goods by some agents and as chores by others (see, e.g., \cite{AzizCaIg22chores}). In fact, with an appropriate modification of the instance, we can extend Theorem~\ref{thm:2agents} to show that a TEF1 allocation exists in the mixed manna setting when there are two agents, which we detail in Appendix~\ref{app:mixed_manna}.

\bibliographystyle{plainnat}
\bibliography{abb,sample}

\appendix

\section{Counterexample for Goods when $n \geq 3$} \label{app:counterexample}
As mentioned in the main text,
\citet[Thm. 4.2]{he2019fairerfuturepast} used the following counterexample to show that a TEF1 allocation may not exist for goods when $n=3$. Note that in this counterexample, one good arrives at each round.
\begin{table}[H]
    \centering
    \begin{tabular}{l||cccccccccccccccccccc}
        $\mathbf{v}$ &$g_1$ & $g_2$ & $g_3$ & $g_{4-6}$ & $g_7$ & $g_8$ & $g_9$ & $g_{10}$ \\ \hline\hline
        1& $0.9$& $0.8$& $0.7$ &$1$ & $0.15$ & $100$ & $110$ & $120$\\
        2& $0.9$& $0.7$& $0.8$&$1$ & $0.95$ & $100$ & $110$ & $120$ \\
        3& $0.8$& $0.9$& $0.7$&$1$ & $0.25$ & $100$ & $110$ & $120$\\
        \hline 
         & $g_{11-12}$&$g_{13}$ & $g_{14-17}$ & $g_{18-19}$ & $g_{20}$ & $g_{21-22}$ & $g_{23}$\\
        \hline\hline 
        $1$ & $200$  & $200$ & $200$ & $200$& $200$& $200$& $200$ & \\
        $2$ & $200$ & $200$ & $200$ & $120$& $200$& $120$& $200$ & \\
        $3$ & $200$ & $185$ & $200$ & $200$& $200$& $200$& $200$ &
    \end{tabular}
\end{table}
For completeness, we briefly explain the counterexample. There are three parts to this example, which ultimately ensure that in any TEF1 allocation, after $g_{22}$ is allocated, agent $2$ envies both other agents, and one other agent envies agent $2$. As a result, $g_{23}$ cannot be allocated to any agent without violating TEF1.

The first part consists of goods $g_1$ to $g_7$. The instance is constructed such that after all of the goods in this part have been allocated in a TEF1 manner, the possible envy relations are restricted. Specifically, we have that after round $7$, agent $3$ cannot envy agent $1$, and that agent $2$ cannot envy agents $1$ or $3$.

The second part consists of goods $g_8$ to $g_{16}$, and builds on top of the previous envy restriction to ensure that after round $16$, agent $2$ is envied by either agent $1$ or agent $3$ in any TEF1 allocation.

The final part consists of goods $g_{17}$ to $g_{23}$. Since agent $2$ is envied by some agent at the start, it cannot receive good $g_{17}$ or $g_{20}$, and must receive one of $\{g_{18},g_{19}\}$ and one of $\{g_{21},g_{22}\}$. This causes agent $2$ to envy both other agents, while one of the other agents continues to envy agent $2$. Therefore, TEF1 will be violated regardless of which agent receives $g_{23}$.

Note that this example cannot be modified to act as a counterexample for chores. We have found that in the first part, we cannot sufficiently restrict the possible envy relations. This is due to the fundamental difference in allocating goods and chores: goods cannot be allocated to an agent which is envied, whilst chores cannot be allocated to an agent which envies others.

\section{Omitted Proofs from Section \ref{sec:existence}} \label{app:omitted_existence}
\subsection{Proof of \Cref{thm:twotypes}}
Consider the following greedy algorithm (Algorithm \ref{alg:twotypes}).
    \begin{algorithm}[h!]
    \caption{Returns a TEF1 allocation for goods or chores when there are two types of items}
    \label{alg:twotypes}
    \begin{flushleft} \textbf{Input}: Set of agents $N=\{1,\dots,n\}$, set of items $O=\{o_1,\dots,o_m\}$, and valuation profile $\mathbf{v} = (v_1, \dots, v_n)$ \\
    \textbf{Output}: TEF1 allocation $\mathcal{A}$ of items in $O$ to agents in $N$
    \end{flushleft}
    \begin{algorithmic}[1]
        \STATE Initialize $\alpha \leftarrow 1$,  $\beta \leftarrow n$, and $\mathcal{A}^0 \leftarrow (\varnothing, \dots, \varnothing)$
        \FOR{$t = 1,2,\dots,m$}
        \IF{$\alpha = n+1$}
            \STATE $\alpha \leftarrow 1$
        \ELSIF{$\beta = 0$}
            \STATE $\beta \leftarrow n$
        \ENDIF
        \IF{$o_t \in T_1$}
            \STATE $A_\alpha^t \leftarrow A_\alpha^{t-1} \cup \{o_t\}$, $A_j^t \leftarrow A_j^{t-1}$ for all $j \in N\setminus \{\alpha\}$, and $\alpha \leftarrow \alpha + 1$
        \ELSE
            \STATE  $A_\beta^t \leftarrow A_\beta^{t-1} \cup \{o_t\}$, $A_j^t \leftarrow A_j^{t-1}$ for all $j \in N\setminus \{\beta\}$, and $\beta \leftarrow \beta-1$
        \ENDIF
        \ENDFOR
        \STATE \textbf{return} $\mathcal{A} = (A_1^m, \dots, A_n^m)$
    \end{algorithmic}
\end{algorithm}
    
    The polynomial runtime of the Algorithm \ref{alg:twotypes} is easy to verify: there is only one \textbf{for} loop which runs in $\mathcal{O}(m)$ time, and the other operations within run in $\mathcal{O}(mn)$ time.
    Thus, we focus on proving correctness.

    Intuitively, $\alpha$ and $\beta$ each keep a counter of which agent should be next allocated an item of type $T_1$ and $T_2$, respectively.
    For this reason, for each $r \in \{1,2\}$, we can observe that with respect to only items of type $T_r$, the algorithm allocates these items in a round-robin fashion.
    We can therefore make the following two observations: 
    \begin{enumerate}[(i)]
        \item for any pair of agents $i,j \in N$, if $|A_i^t \cap T_1| > |A_j^t \cap T_1|$, then $i < j$; if $|A_i^t \cap T_2| > |A_j^t \cap T_2|$, then $i > j$; and
        \item for any pair of agents $i,j \in N$, round $t \in [m]$, and $r \in \{1,2\}$, we have that $
        \left| |A_i^t \cap T_r| - |A_j^t \cap T_r| \right| \leq 1$.
    \end{enumerate}
    The first observation follows from fact that the $\alpha$ counter is increasing in agent indices whereas the $\beta$ counter is decreasing in agent indices.
    The second observation follows from the widely-known fact that, with respect to items of a specific type, a round-robin allocation always returns a balanced allocation, i.e., the bundle sizes of any two agents differ by no more than one.
    
    Next, we have that for any two agents $i,j \in N$, round $t \in [m]$, and $r,r' \in \{1,2\}$ where $r \neq r'$, if $|A_i^t \cap T_r| > |A_j^t \cap T_r|$, then $|A_i^t \cap T_{r'}| \leq |A_j^t \cap T_{r'}|$.
    To see this, suppose for a contradiction that there exists agents $i,j \in N$ and round $t \in [m]$ such that both $|A_i^t \cap T_1| > |A_j^t \cap T_1|$ and $|A_i^t \cap T_2| > |A_j^t \cap T_2|$.
    Then, observation (i) will give us $i < j$ and $i > j$ respectively, a contradiction.

    For the case of goods, we have that for any pair of agents $i,j \in N$ and round $t \in [m]$, if $i < j$, then
    \begin{equation} \label{eqn:2types_goods_1}
        v_i(A_i^t \cap T_1) \geq v_i(A_j^t \cap T_1)
    \end{equation}
    because $i$ precedes $j$ in the round-robin allocation order, and by the well-established EF1 property of the round-robin algorithm for goods, there exists a good $g \in A_j^t \cap T_2$ such that 
    \begin{equation} \label{eqn:2types_goods_2}
        v_i(A_i^t \cap T_2) \geq v_i(A_j^t \cap T_2 \setminus \{g\}).
    \end{equation}
    Combining (\ref{eqn:2types_goods_1}) and (\ref{eqn:2types_goods_2}), there exists a good $g \in A_j^t$ such that
    \begin{equation*}
        v_i(A_i^t)  = v_i(A_i^t \cap T_1) + v_i(A_i^t \cap T_2) 
         \geq v_i(A_j^t \cap T_1) + v_i(A_j^t \cap T_2 \setminus \{g\}) 
         = v_i(A_j^t \setminus \{g\}).
    \end{equation*}
    Moreover, if $i > j$, then
    \begin{equation}  \label{eqn:2types_goods_3}
        v_i(A_i^t \cap T_2) \geq v_i(A_i^t \cap T_2),
    \end{equation}
    and there exists a good $g \in A_j^t \cap T_1$ such that
    \begin{equation} \label{eqn:2types_goods_4}
        v_i(A_i^t \cap T_1) \geq v_i(A_j^t \cap T_1 \setminus \{g\}).
    \end{equation}
    Combining (\ref{eqn:2types_goods_3}) and (\ref{eqn:2types_goods_4}), there exists a good $g \in A_j^t$ such that
    \begin{equation*}
        v_i(A_i^t) = v_i(A_i^t \cap T_1) + v_i(A_i^t \cap T_2) 
         \geq v_i(A_j^t \cap T_1 \setminus \{g\}) + v_i(A_j^t \cap T_2) 
         = v_i(A_j^t \setminus \{g\}).
    \end{equation*}
    For the case of chores, we have that for any pair of agents $i,j \in N$ and round $t \in [m]$, if $i > j$, then
    \begin{equation} \label{eqn:2types_chores_1}
        v_i(A_i^t \cap T_1) \geq v_i(A_j^t \cap T_1),
    \end{equation}
    and again by the EF1 property of the round-robin algorithm for chores \citep{AzizCaIg22chores}, there exists a chore $c \in A_i^t \cap T_2$ such that 
    \begin{equation} \label{eqn:2types_chores_2}
        v_i(A_i^t \cap T_2 \setminus \{c\}) \geq v_i(A_j^t \cap T_2).
    \end{equation}
    Combining (\ref{eqn:2types_chores_1}) and (\ref{eqn:2types_chores_2}), there exists a chore $c \in A_i^t$ such that
    \begin{equation*}
        v_i(A^t \setminus \{c\})  = v_i(A_i^t \cap T_1) + v_i(A_i^t \cap T_2 \setminus \{c\}) 
         \geq v_i(A_j^t \cap T_1) + v_i(A_j^t \cap T_2) 
         = v_i(A_j^t).
    \end{equation*}
    Moreover, if $i < j$, then
    \begin{equation}  \label{eqn:2types_chores_3}
        v_i(A_i^t \cap T_2) \geq v_i(A_j^t \cap T_2),
    \end{equation}
    and there exists a chore $c \in A_i^t \cap T_1$ such that
    \begin{equation} \label{eqn:2types_chores_4}
        v_i(A_i^t \cap T_1 \setminus \{c\}) \geq v_i(A_j^t \cap T_1).
    \end{equation}
    Combining (\ref{eqn:2types_chores_3}) and (\ref{eqn:2types_chores_4}), there exists a chore $c \in A_i^t$ such that
    \begin{equation*}
        v_i(A^t \setminus \{c\})  = v_i(A_i^t \cap T_1 \setminus \{c\}) + v_i(A_i^t \cap T_2) 
         \geq v_i(A_j^t \cap T_1) + v_i(A_j^t \cap T_2) 
         = v_i(A_j^t).
    \end{equation*} 
    Thus, our result holds.
    
\subsection{Proof of \Cref{thm:generalizedbinary}}
We first prove the result for goods.
Consider the following greedy algorithm (\Cref{alg:generalized_binary_goods}) which iterates through the rounds and allocates each good to the agent who has the least value for their bundle.

\begin{algorithm}[h!]
    \caption{Returns a TEF1 allocation of goods under generalized binary valuations}
    \label{alg:generalized_binary_goods}
    \begin{flushleft} \textbf{Input}: Set of agents $N=\{1,\dots,n\}$, set of goods $O=\{g_1,\dots,g_m\}$, and valuation profile $\mathbf{v} = (v_1, \dots, v_n)$ \\
    \textbf{Output}: TEF1 allocation of goods $\mathcal{A}$ in $O$ to agents in $N$
    \end{flushleft}
    \begin{algorithmic}[1]
        \STATE Initialize the empty allocation $\mathcal{A}^0$ where $A_i^0 = \varnothing$ for all $i \in N$.
        \FOR{$t = 1,2,\dots,m$}
        \STATE Let $S := \{ i' \in N \ | \ v_{i'}(g_t) > 0 \}$ 
        \IF{$S = \varnothing$}
            \STATE Let $i$ be any agent in $N$
        \ELSE
            \STATE Let $i \in \argmin_{i' \in S} v_{i'}(A^{t-1}_{i'})$, with ties broken arbitrarily
        \ENDIF
        \STATE $A_i^t \leftarrow A_i^{t-1} \cup \{g_t\}$ and $A_j^t \leftarrow A_j^{t-1}$ for all $j \in N\setminus \{i\}$
        \ENDFOR
        \STATE \textbf{return} $\mathcal{A} = (A_1^m, \dots, A_n^m)$
    \end{algorithmic}
\end{algorithm}

We first show that for any $i,j \in N$ and $t \in [m]$, it holds that $v_i(A_i^t) \geq v_j(A_i^t)$.
    Suppose for a contradiction that there exists some $i,j \in N$ and $t \in [m]$ such that $v_i(A_i^t) < v_j(A_i^t)$.
    This means there exists some good $g \in A_i^t$ whereby $v_i(g) = 0$ and $v_j(g) > 0$.
    However, then the algorithm would not have allocated $g$ to $i$, a contradiction.

 Next, we will prove by induction that for every $t \in [m]$, $\mathcal{A}^t$ is TEF1.
 The base case is trivially true: when $t = 1$, if every agent values $g_1$ at $0$, then allocating it to any agent will satisfy TEF1, whereas if some agent values $g_1$, allocating it to any agent will also be TEF1: the envy by any other agent towards this agent will disappear with the removal of $g_1$ from the agent's bundle (every agent's bundle will then be the empty set).

 Then, we prove the inductive step.
 Assume that for some $k\in [m-1]$, $\mathcal{A}^k$ is TEF1.
 We will show that $\mathcal{A}^{k+1}$ is also TEF1. Due to the assumption, it suffices to show that for all $i,j\in N$, there exists a good $g \in A_j^{k+1}$ such that $v_i(A_i^{k+1}) \geq v_i(A_j^{k+1} \setminus \{g\})$. 
 Consider the agent $i \in N$ that is allocated $g_{k+1}$.

We first show agent $i$ must be unenvied before being allocated $g_{k+1}$.
Suppose towards a contradiction this is not the case, i.e., there exists some other agent $j \neq i$ whereby $v_j(A_j^k) < v_j(A_i^k)$.
Together with the fact that $v_j(A_i^k) \leq v_i(A_i^k)$ from the result above, we get that 
\begin{equation*}
    v_j(A_j^k) < v_j(A_i^k) \leq v_i(A_i^k),
\end{equation*}
contradicting the fact that $i$ is an agent with the minimum bundle value and thus chosen by the algorithm to receive $g_{k+1}$.
As such, $i$ must be unenvied before being allocated $g_{k+1}$, i.e., for any other agent $j \in N \setminus \{i\}$, we have that $v_j(A_j^k) \geq v_j(A_i^k)$.

Consequently, we get that
\begin{equation*}
    v_j(A_j^{k+1}) = v_j(A_j^k) \geq v_j(A_i^k) = v_j(A_i^{k+1} \setminus \{g_{k+1}\}).
\end{equation*}
Thus, by induction, the result holds.

Next, we prove the result for chores.
Consider the following greedy algorithm (Algorithm \ref{alg:generalized_binary_chores}) which iterates through the rounds, allocating each chore to an agent with zero value for it if possible, and otherwise, allocates the chore to an agent who does not envy any other agent.

\begin{algorithm}[h!]
    \caption{Returns an TEF1 allocation of chores under generalized binary valuations}
    \label{alg:generalized_binary_chores}
    \begin{flushleft} \textbf{Input}: Set of agents $N=\{1,\dots,n\}$, set of chores $O=\{c_1,\dots,c_m\}$, and valuation profile $\mathbf{v} = (v_1, \dots, v_n)$ \\
    \textbf{Output}: TEF1 allocation of chores $\mathcal{A}$ in $O$ to agents in $N$ 
    \end{flushleft}
    \begin{algorithmic}[1]
        \STATE Initialize the empty allocation $\mathcal{A}^0$ where $A_i^0 = \varnothing$ for all $i \in N$.
        \FOR{$t = 1,2,\dots,m$}
        \IF{there exists an agent $i \in N$ such that $v_i(c_t) = 0$}
            \STATE Let $i \in \{i' \in N \ | \ v_{i'}(c_t) = 0\}$ 
        \ELSE
            \STATE Let $i \in \argmax_{i' \in N}  v_{i'}(A^{t-1}_{i'})$\\
        \ENDIF
        \STATE $A_i^t \leftarrow A_i^{t-1} \cup \{c_t\}$ and $A_j^t \leftarrow A_j^{t-1}$ for all $j \in N\setminus \{i\}$
        \ENDFOR
        \STATE \textbf{return} $\mathcal{A} = (A_1^m, \dots, A_n^m)$
    \end{algorithmic}
\end{algorithm}

We first show that for any $i,j \in N$ and $t \in [m]$, it holds that 
\begin{equation} \label{eqn:genbin_contradiction}
    v_i(A_i^t) \geq v_j(A_i^t).
\end{equation} 
Suppose for a contradiction that there exists some $i,j\in N$ and $t\in [m]$ such that $v_i(A_i^t) < v_j(A_i^t)$. 
This means there exists some chore $c\in A_i^t$ whereby $v_i(c) < 0$ and $v_j(c) = 0$.
However, then the algorithm would not have allocated $c$ to $i$, a contradiction. 

Next, we will prove by induction that for every $t \in [m]$, $\mathcal{A}^t$ is TEF1.
The base case is trivially true: when $t=1$, if there exists an agent that values $c_1$ at $0$, then allocating it to any such agent will satisfy TEF1, whereas if all agents values $c_1$ negatively, allocating it to any agent will also be TEF1: the envy by this agent towards any other agent will disappear with the removal of $c_1$ from the former agent's bundle (every agent's bundle will then be the empty set).

Then, we prove the inductive step. 
Assume that for some $k\in [m-1]$, $\mathcal{A}^k$ is TEF1. 
We will show that $\mathcal{A}^{k+1}$ is also TEF1, i.e., for all $i,j\in N$, there exists a chore $c \in A_i^{k+1}$ such that $v_i(A_i^{k+1}\setminus \{c\}) \geq v_i(A_j^{k+1})$.

Suppose agent $i$ is allocated the chore $c_{k+1}$. If $v_i(c_{k+1})=0$, then each agents' valuation for every other agent's bundle (including his own) remains the same, and thus $\mathcal{A}^{k+1}$ remains TEF1. 
If $v_i(c_{k+1})<0$, then we know that $v_j(c_{k+1})<0$ for all $j\in N$. 
We then proceed to show that agent $i$ must not envy any other agent before being allocated $c_{k+1}$. Suppose for contradiction this is not the case, i.e., that there exists some other agent $j\neq i$ whereby $v_i(A_i^k) < v_i(A_j^k)$. 
Since $c_{k+1}$ is allocated to the agent with the highest bundle, we have that $v_i(A^k_i) \geq v_j(A^k_j)$, and therefore
\begin{equation*}
    v_i(A_j^k)>v_i(A_i^k)\geq v_j(A^k_j).
\end{equation*}
However, this contradicts (\ref{eqn:genbin_contradiction}).

Since agent $i$ does not envy another agent before being allocated $c_{k+1}$, we get that for any $j\neq i$,
\begin{equation*}
    v_i(A_i^{k+1}\setminus \{c_{k+1}\}) = v_i(A_i^k) \geq v_i(A_j^k)=v_i(A_j^{k+1}) \quad \text{and} \quad v_j(A_j^{k+1})=v_j(A_j^k).
\end{equation*}
Thus, by induction, we get that $\mathcal{A}^{t+1}$ is TEF1.

\subsection{Proof of \Cref{thm:singlepeaked_goods}}
Consider the following greedy algorithm (Algorithm \ref{alg:singlepeaked_items}).
Note that the same algorithm works for both settings for goods when valuations are single-peaked, and for chores when valuations are single-dipped. 

    \begin{algorithm}[H]
    \caption{Returns a TEF1 allocation for goods when valuations are single-peaked and chores when valuations are single-dipped}
    \label{alg:singlepeaked_items}
    \begin{flushleft} \textbf{Input}: Set of agents $N=\{1,\dots,n\}$, set of items $O=\{o_1,\dots,o_m\}$, and valuation profile $\mathbf{v} = (v_1, \dots, v_n)$ \\
    \textbf{Output}: TEF1 allocation $\mathcal{A}$ of items in $O$ to agents in $N$
    \end{flushleft}
    \begin{algorithmic}[1]
        \STATE Initialize $\mathcal{A}^0 \leftarrow (\varnothing, \dots, \varnothing)$
        \FOR{$t = 1,2,\dots,m$}
        \STATE Let $i := \argmin_{i \in N} |A_i^{t-1}|$, with ties broken lexicographically
        \STATE $A^t_i \leftarrow A^{t-1}_i \cup \{g_t\}$ and $A^t_i \leftarrow A^{t-1}_i$
        \ENDFOR
        \STATE \textbf{return} $\mathcal{A} = (A_1^m, \dots, A_n^m)$
    \end{algorithmic}
\end{algorithm}

    The polynomial runtime of the Algorithm \ref{alg:singlepeaked_items} is easy to verify: there is only one \textbf{for} loop which runs in $\mathcal{O}(m)$ time, and the other operations within run in $\mathcal{O}(n)$ time.
    Thus, we focus on proving correctness.

We first prove the case for goods, when valuations are single-peaked.

    For each $i \in [m]$, let $g_i =  o_i$, and thus $O = \{g_1,\dots,g_m\}$.
    We can assume that $m = \alpha n$ for some $\alpha \in \mathbb{Z}_{>0}$; otherwise we can simply add dummy goods to $O$ until that condition is fulfilled.
    Then, Algorithm~\ref{alg:singlepeaked_items} will return $\mathcal{A}$, where for each $i \in N$, $A_i = \{g_i, g_{i+n}, \dots, g_{i+(\alpha-1)n}$\}.

    For each $i \in N$ and $j \in [\alpha]$, let 
    \begin{itemize}
        \item $T_j := \{g_{(j-1)n + 1}, g_{(j-1)n + 2}, \dots, g_{jn}\}$,
        \item $g'_{i,j} \in A_i \cap T_j$ be the unique good in $T_j$ that was allocated to agent~$i$
        \item $g^* := \argmax_{g \in O} v_i(g)$ (with ties broken arbitrarily), and $g^* \in T_{i^*}$ for some $i^* \in [\alpha]$.
    \end{itemize}

    Then, we will show that for all $r \in [\alpha]$, $v_i(A_i^r) \geq v_i(A_j^r \setminus \{g\})$ for some $g \in A_j^r$.
    We split our analysis into two cases.
    \begin{description}
        \item[Case 1: $i < j$.]
        If $r < i^*$, then since agent $i$'s valuation for each subsequent good up to round $T_r$ is non-decreasing, we have that for all $k\in \{2,\dots,r\}$, 
        \begin{equation*}
            v_i(g'_{i,k}) \geq v_i(g'_{j,k-1}).
        \end{equation*}
        Consequently, we get that 
        \begin{equation*}
            v_i(A_i^r) \geq \sum_{k=2}^r v_i(g'_{i,k}) \geq \sum_{k=2}^r v_i(g'_{j,k-1})= v_i(A_j^r \setminus \{g'_{j,r}\}). 
        \end{equation*}
        If $r \geq i^*$, then we split our analysis into two further cases.

        \begin{description}
            \item[Case 1(a): $g'_{i,i^*}$ appears before $g^*$.] 
            Then, for all $k \in \{2,\dots, i^*\}$, 
            \begin{equation*}
                v_i(g'_{i,k}) \geq v_i(g'_{j,k-1}),
            \end{equation*}
            and for all $k \in \{i^*+1,\dots,r\}$,
            \begin{equation*}
                v_i(g'_{i,k}) \geq v_i(g'_{j,k}).
            \end{equation*}
            Consequently, we get that
            \begin{equation*}
                v_i(A_i^r) \geq \sum_{k=2}^{i^*} v_i(g'_{i,k}) + \sum_{k=i^*+1}^r v_i(g'_{i,k}) \geq \sum_{k=2}^{i^*} v_i(g'_{j,k-1}) + \sum_{k=i^*+1}^r v_i(g'_{j,k})  = v_i(A_j^r \setminus \{g'_{j,i^*}\}).
            \end{equation*}

        \item[Case 1(b): $g'_{i,i^*}$ appears after (or is) $g^*$.]
        Then, for all $k \in \{2,\dots,i^*-1\}$, 
        \begin{equation*}
            v_i(g'_{i,k}) \geq v_i(g'_{j,k-1}),
        \end{equation*}
        and for all $k \in \{i^*,\dots ,r\}$,
        \begin{equation*}
            v_i(g'_{i,k}) \geq v_i(g'_{j,k}).
        \end{equation*}
        Consequently, we get that
        \begin{align*}
                v_i(A_i^r) & \geq \sum_{k=2}^{i^*-1} v_i(g'_{i,k}) + \sum_{k=i^*}^r v_i(g'_{i,k})\geq \sum_{k=2}^{i^*-1} v_i(g'_{j,k-1}) + \sum_{k=i^*}^r v_i(g'_{j,k}) = v_i(A_j^r \setminus \{g'_{j,i^*-1}\}).
            \end{align*}
        \end{description}
        \item[Case 2: $i > j$.]
        If $r \leq i^*$, then since agent $i$'s valuation for each subsequent good up to round $T_r$ is nondecreasing, we have that for all $k \in [r]$,
        \begin{equation} \label{eqn:single_peaked_case2_1}
            v_i(g'_{i,k}) \geq v_i(g'_{j,k}).
        \end{equation}
        Consequently, we get that
        \begin{equation*}
            v_i(A_i^r) \geq \sum_{k \in [r-1]} v_i(g'_{i,k}) \geq \sum_{k \in [r-1]} v_i(g'_{j,k}) \quad \text{(by (\ref{eqn:single_peaked_case2_1}))} = v_i(A_j^r \setminus \{g'_{j,r}\}).
        \end{equation*}
        If $r > i^*$, then we split our analysis into two further cases.
        \begin{description}
            \item[Case 2(a): $g'_{i,i^*}$ appears before (or is) $g^*$.]
            Then for all $k \in [i^*]$,
            \begin{equation*}
                v_i(g'_{i,k}) \geq v_i(g'_{j,k}),
            \end{equation*}
            and for all $k \in \{i^*+1,\dots, r-1\}$,
            \begin{equation*}
                v_i(g'_{i,k}) \geq v_i(g'_{j,k+1}).
            \end{equation*}
            Consequently, we get that
            \begin{equation*}
                v_i(A_i^r) \geq \sum_{k \in [i^*]} v_i(g'_{i,k}) + \sum_{k= i^*+1}^{r-1} v_i(g'_{i,k}) \geq \sum_{k \in [i^*]} v_i(g'_{j,k}) + \sum_{k = i^*+1}^{r-1} v_i(g'_{j,k+1})  = v_i(A_j^r \setminus \{g'_{j,i^*+1}\}).
            \end{equation*}
            \item[Case 2(b): $g'_{i,i^*}$ appears after $g^*$.]
            Then, for all $k \in [i^*-1]$,
            \begin{equation*}
                v_i(g'_{i,k}) \geq v_i(g'_{j,k}),
            \end{equation*}
            and for all $k \in \{i^*,\dots, r-1\}$,
            \begin{equation*}
                v_i(g'_{i,k}) \geq v_i(g'_{j,k+1}).
            \end{equation*}
            Consequently, we get that
            \begin{equation*}
                v_i(A_i^r) \geq \sum_{k \in [i^*-1]} v_i(g'_{i,k}) + \sum_{k =i^*}^{r-1} v_i(g'_{i,k}) \geq \sum_{k \in [i^*-1]} v_i(g'_{j,k}) + \sum_{k =i^*}^{r-1} v_i(g'_{j,k+1}) = v_i(A_j^r \setminus \{g'_{j,i^*}\}).
            \end{equation*}
        \end{description}
    \end{description}
    Thus, our result follows.

    Next, we prove the case for chores, when valuations are single-dipped.
    
    For each $j \in [m]$, let $o_i = c_i$, and thus $O = \{c_1,\dots,c_m\}$.
    We can assume that $m = \alpha n$ for some $\alpha \in \mathbb{Z}_{>0}$; otherwise we can simply add dummy chores to $O$ until that condition is fulfilled.
    Then, Algorithm~\ref{alg:singlepeaked_items} will return $\mathcal{A}$, where for each $i \in N$, $A_i = \{c_i, c_{i+n}, \dots, c_{i+(\alpha-1)n}$\}.

    For each $i \in N$ and $j \in [\alpha]$, let 
    \begin{itemize}
        \item $T_j := \{c_{(j-1)n + 1}, c_{(j-1)n + 2}, \dots, c_{jn}\}$,
        \item $c'_{i,j} \in A_i \cap T_j$ be the unique chore in $T_j$ that was allocated to agent~$i$
        \item $c^* := \argmin_{c \in O} v_i(c)$ (with ties broken arbitrarily), and $c^* \in T_{i^*}$ for some $i^* \in [\alpha]$.
    \end{itemize}

        \begin{description}
        \item[Case 1: $i < j$.]
        If $r \leq j^*$, then since agent $i$'s valuation for each subsequent chore up to round $T_{r-1}$ is nonincreasing, we have that for all $k \in [r-1]$,
        \begin{equation*}
            v_i(c'_{i,k}) \geq v_i(c'_{j,k}).
        \end{equation*}
        Consequently, we get that 
        \begin{equation*}
            v_i(A_i^r \setminus \{c'_{i,r}\}) = \sum_{k \in [r-1]} v_i(c'_{i,k}) \geq \sum_{k \in [r-1]} v_i(c'_{j,k}) \geq v_i(A_j^r).
        \end{equation*}

        If $r > j^*$, then we split our analysis into two further cases.
        \begin{description}
            \item[Case 1(a): $c'_{j,j^*}$ appears before (or is) $c^*$.]
            Then for all $k \in [j^*]$,
            \begin{equation*}
                v_i(c'_{i,k}) \geq v_i(c'_{j,k})
            \end{equation*}
            and for all $k \in \{j^*+2,\dots, r\}$,
            \begin{equation*}
                v_i(c'_{i,k}) \geq v_i(c'_{j,k-1}).
            \end{equation*}
            Consequently, we get that 
            \begin{equation*}
                v_i(A_i^r \setminus \{c'_{i,j^*+1}\}) = \sum_{k \in [j^*]} v_i(c'_{i,k}) + \sum_{k =j^*+2}^r v_i(c'_{i,k}) \geq \sum_{k \in [j^*]} v_i(c'_{j,k}) + \sum_{k =j^*+2}^r v_i(c'_{j,k-1}) \geq v_i(A_j^r).
            \end{equation*}
            \item[Case 1(b): $c'_{j,j^*}$ appears after $c^*$.]
            Then for all $k \in [j^*-1]$,
            \begin{equation*}
                v_i(c'_{i,k}) \geq v_i(c'_{j,k})
            \end{equation*}
            and for all $k \in \{j^*+1,\dots, r\}$,
            \begin{equation*}
                v_i(c'_{i,k}) \geq v_i(c'_{j,k-1}).
            \end{equation*}
            Consequently, we get that
            \begin{equation*}
                v_i(A_i^r \setminus \{c'_{i,j^*}) = \sum_{k \in [1,j^*-1]} v_i(c'_{i,k}) + \sum_{k = j^*+1}^r v_i(c'_{i,k}) \geq \sum_{k \in [j^*-1]} v_i(c'_{j,k}) + \sum_{k = j^*+1}^r v_i(c'_{j,k-1}) \geq v_i(A_j^r).
            \end{equation*}
        \end{description}
        \item[Case 2: $j < i$.]
        If $r < j^*$, then since agent $i$'s valuation for each subsequent chore up to round $T_r$ is nondecreasing, we have that for all $k \in [r-1]$,
        \begin{equation*}
            v_i(c'_{i,k}) \geq v_i(c'_{j,k+1}).
        \end{equation*}
        Consequently, we get that 
        \begin{equation*}
            v_i(A_i^r \setminus \{c'_{i,r}\})= \sum_{k \in [r-1]} v_i(c'_{i,k}) \geq \sum_{k \in [r-1]} v_i(c'_{j,k+1}) \geq v_i(A_j^r). 
        \end{equation*}
        If $r \geq j^*$, then we split our analysis into two further cases.
        \begin{description}
            \item[Case 2(a): $c_{j,j^*}$ appears before (or is) $c^*$.]
            Then for all $k \in [j^*-1]$, 
            \begin{equation*}
                v_i(c'_{i,k}) \geq v_i(c'_{j,k+1})
            \end{equation*}
            and for all $k \in \{j^*+1,\dots, r\}$, 
            \begin{equation*}
                v_i(c'_{i,k}) \geq v_i(c'_{j,k}).
            \end{equation*}
            Consequently, we get
            \begin{equation*}
                v_i(A_i^r \setminus \{c'_{i,j^*}\}) = \sum_{k \in [j^*-1]} v_i(c'_{i,k}) + \sum_{k = j^*+1}^r v_i(c'_{i,k}) \geq \sum_{k \in [j^*-1]} v_i(c'_{j,k+1}) + \sum_{k = j^*+1}^r v_i(c'_{k,j}) \geq v_i(A_j^r).
            \end{equation*}
        \end{description}
        \item[Case 2(b): $c_{j,j^*}$ appears after $c^*$.]
        Then for all $k \in [j^*-2]$,
        \begin{equation*}
            v_i(c'_{i,k}) \geq v_i(c'_{j,k+1}
        \end{equation*}
        and for all $k \in \{j^*,\dots,r\}$,
        \begin{equation*}
            v_i(c'_{i,k}) \geq v_i(c'_{j,k}).
        \end{equation*}
        Consequently, we get that
        \begin{equation*}
            v_i(A_i^r \setminus \{c'_{i,j^*-1}\}) = \sum_{k \in [j^*-2]} v_i(c'_{i,k}) + \sum_{k =j^*}^r v_i(c'_{i,k}) \geq \sum_{k \in [j^*-2]} v_i(c'_{j,k+1}) + \sum_{k =j^*}^r v_i(c'_{j,k}) \geq v_i(A_j^r).
        \end{equation*}
    \end{description}
    Thus, our result follows.

\subsection{Proof of Lemma \ref{lemma-code}} \label{app:code}
\begin{lstlisting}
from itertools import combinations
from copy import deepcopy

# If there exists a partial allocation for first 2n+2 rounds such that bundle valuations are equal
if_some_envy_exists = False

def is_ef1(allocation, agents, valuations, if_some_envy_exists, partial_alloc_envy_from, partial_alloc_envy_to):
    """
    Check if the current allocation is EF1.

    Parameters:
    - allocation: List of lists, where allocation[i] is the list of goods allocated to agent i.
    - agents: List of agent identifiers.
    - valuations: Dictionary where valuations[agent][good] gives the value of a good for an agent.
    - if_some_envy_exists: If the partial allocation for the first 2n+2 rounds is EF (i.e., equal bundle values)
    - partial_alloc_envy_from: If if_some_envy_exists is True, then which agent envies
    - partial_alloc_envy_to: If if_some_envy_exists is True, then which agent is being envied

    Returns:
    - True if allocation is EF1, False otherwise.
    """
    num_agents = len(agents)

    # Compute the value each agent has for their own bundle
    agent_own_values = []
    for agent_idx in range(num_agents):
        agent = agents[agent_idx]
        total = sum(valuations[agent][good] for good in allocation[agent_idx])
        agent_own_values.append(total)

    # Check EF1 condition for every pair of agents (i, j)
    for i in range(num_agents):
        for j in range(num_agents):
            if i == j:
                continue
            agent_i = agents[i]
            agent_j_bundle = allocation[j]
            # Agent i's value for agent j's bundle
            lst = [valuations[agent_i][good] for good in agent_j_bundle]
            if lst:
                max_value = max(lst)
            else:
                max_value = 0
            value_i_for_j_less_one = sum(lst) - max_value
            # Agent i's own value
            value_i_own = agent_own_values[i]

            if if_some_envy_exists:
              if i == partial_alloc_envy_from:
                if j == partial_alloc_envy_to:
                  value_i_for_j_less_one += 1

            if value_i_own < value_i_for_j_less_one:
                return False
    return True

def find_ef1_allocations(agents, goods, valuations, if_some_envy_exists, partial_alloc_envy_from=0, partial_alloc_envy_to=0):
    """
    Find all allocations that are EF1 at each step of allocating goods one by one.

    Parameters:
    - agents: List of agent identifiers.
    - goods: List of goods to be allocated.
    - valuations: Dictionary where valuations[agent][good] gives the value of a good for an agent.
    - if_some_envy_exists: If the partial allocation for the first 2n+2 rounds is EF (i.e., equal bundle values)
    - partial_alloc_envy_from: If if_some_envy_exists is True, then which agent envies
    - partial_alloc_envy_to: If if_some_envy_exists is True, then which agent is being envied
    
    Returns:
    - List of allocations. Each allocation is a list of lists, where allocation[i] is the list of goods for agent i.
    """
    num_agents = len(agents)
    all_allocations = []

    def backtrack(current_allocation, index):
        """
        Recursive helper function to perform backtracking.

        Parameters:
        - current_allocation: Current allocation state.
        - index: Index of the next good to allocate.
        """
        if index == len(goods):
            # All goods allocated, add to results
            all_allocations.append(deepcopy(current_allocation))
            return

        current_good = goods[index]

        for agent_idx in range(num_agents):
            # Assign current_good to agent_idx
            current_allocation[agent_idx].append(current_good)

            # Check EF1 condition at this step
            if is_ef1(current_allocation, agents, valuations, if_some_envy_exists, partial_alloc_envy_from, partial_alloc_envy_to):
                # Continue to allocate the next good
                backtrack(current_allocation, index + 1)

            # Backtrack: remove the good from the agent's allocation
            current_allocation[agent_idx].pop()

    # Initialize allocation: list of empty lists for each agent
    initial_allocation = [[] for _ in agents]
    backtrack(initial_allocation, 0)

    return all_allocations

# Example Usage
if __name__ == "__main__":
    # Define agents and goods
    agents = ['A', 'B','C']
    goods = ['g1', 'g2', 'g3','g4', 'g5','g6', 'g7','g8', 'g9','g10', 'g11','g12', 'g13','g14', 'g15','g16', 'g17','g18', 'g19','g20','g21']

    # Define valuations for each agent
    valuations = {
        'A': {'g1': 90, 'g2': 80, 'g3': 70, 'g4' : 100, 'g5' : 100,'g6': 100, 'g7':15,'g8':10000, 'g9':11000,'g10':12000, 'g11':20000,'g12':20000, 'g13':20000,'g14':20000, 'g15':20000,'g16':20000, 'g17':20000,'g18':20000, 'g19':20000,'g20':19010, 'g21' :18005},
      'B': {'g1': 90, 'g2': 70, 'g3': 80, 'g4' : 100, 'g5' : 100,'g6': 100, 'g7':95,'g8':10000, 'g9':11000,'g10':12000, 'g11':20000,'g12':20000, 'g13':20000,'g14':20000, 'g15':20000,'g16':20000, 'g17':20000,'g18':12000, 'g19':12000,'g20':19085, 'g21' :14106},
      'C': {'g1': 80, 'g2': 90, 'g3': 70, 'g4' : 100, 'g5' : 100,'g6': 100, 'g7':25,'g8':10000, 'g9':11000,'g10':12000, 'g11':20000,'g12':20000, 'g13':18500,'g14':20000, 'g15':20000,'g16':20000, 'g17':20000,'g18':20000, 'g19':20000,'g20':19010, 'g21' :19496}
    }

    # Find all EF1 allocations
    if if_some_envy_exists:
        for partial_alloc_envy_from in range(3):
            for partial_alloc_envy_to in range(3):
                if partial_alloc_envy_from != partial_alloc_envy_to:
                    ef1_allocations = find_ef1_allocations(agents, goods, valuations,True, partial_alloc_envy_from, partial_alloc_envy_to)
                    # Each iteration considers different combinations of envy that exists in the partial allocation for the first 2n+2 rounds
                    
                    # Print the allocations
                    print(f"Total EF1 allocations: {len(ef1_allocations)}\n")
    else:
        ef1_allocations = find_ef1_allocations(agents, goods, valuations, False)
        # Print the allocations
        print(f"Total EF1 allocations: {len(ef1_allocations)}\n")
    for idx, alloc in enumerate(ef1_allocations, 1):
        print(f"Allocation {idx}:")
        for agent_idx, agent in enumerate(agents):
            print(f"  {agent}: {alloc[agent_idx]}")
        print(f" {sum(valuations['A'][good] for good in alloc[0]) - sum(valuations['A'][good] for good in alloc[1])},{sum(valuations['A'][good] for good in alloc[0]) - sum(valuations['A'][good] for good in alloc[2])}")
        print(f" {sum(valuations['B'][good] for good in alloc[1]) - sum(valuations['B'][good] for good in alloc[0])},{sum(valuations['B'][good] for good in alloc[1]) - sum(valuations['B'][good] for good in alloc[2])}")
        print(f" {sum(valuations['C'][good] for good in alloc[2]) - sum(valuations['C'][good] for good in alloc[0])},{sum(valuations['C'][good] for good in alloc[2]) - sum(valuations['C'][good] for good in alloc[1])}")
        print()

\end{lstlisting}
        
\section{Omitted Proofs from Section \ref{sec:efficiency}} \label{app:omitted_efficiency}
\subsection{Proof of \Cref{thm:tef1_po_nphard_goods}}
We reduce from the NP-hard problem \textsc{1-in-3-SAT}.
    An instance of this problem consists of conjunctive normal form $F$ with three literals per clause; it is a yes-instance if there exists a truth assignment to the variables such that each clause has exactly one \texttt{True} literal, and a no-instance otherwise.
    
    Consider an instance of \textsc{1-in-3-SAT} given by the CNF $F$ which contains $n$ variables $\{x_1, \dots, x_n\}$ and $m$ clauses $\{C_1, \dots, C_m\}$.

    We construct an instance $\mathcal{I}$ with two agents and $2n+1$ goods.
    For each $i \in [n]$, we introduce two goods $t_i$, $f_i$. 
    We also introduce an additional good $r$.
    Let agents' (identical) valuations be defined as follows:
    \begin{equation*}
        v(g) = 
        \begin{cases} 
            5^{m+n-i} + \sum_{j \, : \, x_i \in C_j} 5^{m-j}, & \text{if } g = t_i, \\ 
            5^{m+n-i} + \sum_{j \, : \, \neg x_i \in C_j}  5^{m-j}, & \text{if } g = f_i, \\
            \sum_{j \in[m]} 5^{j-1}, & \text{if } g = r.
        \end{cases}
    \end{equation*}
    Intuitively, for each variable index $i \in [n]$, we associate with it a unique value $5^{m+n-i}$.
    For each clause index $j \in [m]$, we also associate with it a unique value $5^{m-j}$.
    Note that no two indices (regardless of whether its a variable or clause index) share the same value, hence the uniqueness of the values.
    Then, the value for each good $t_i$ comprises of the unique value associated with $i$, and the sum over all unique values of clauses $C_j$ which $x_i$ appears as a \emph{positive literal} in; whereas the value for each good $f_i$ comprises of the unique value associated with $i$, and the sum over all unique values of clauses $C_j$ which $x_i$ appears as a \emph{negative literal} in.
    We will utilize this in our analysis later.

    Then, we have the set of goods $O = \{t_1,f_1,t_2,f_2,\dots,t_n,f_n,r\}$.
    Note that
    \begin{equation*}
        v(O) = v(r) + \sum_{i \in [n]} v(t_i) + \sum_{i \in [n]} v(f_i).
    \end{equation*}
    Also observe that
    \begin{equation*}
        \sum_{i \in [n]} 5^{m+n-i} = \sum_{i \in [n]} 5^{m+i-1}.
    \end{equation*}
    Now, as each clause contains exactly three literals,
    \begin{equation*}
        \sum_{i \in [n]}\sum_{j : x_i \in C_j} 5^{m-j} + \sum_{i \in [n]}\sum_{j : \neg x_i \in C_j} 5^{m-j} = 3 \times \sum_{j \in [m]} 5^{j-1}.
    \end{equation*}
    Then, combining the equations above, we get that
    \begin{equation} \label{eqn:lem_vO_goods}
        v(O) = 2 \times \sum_{i \in [n]} 5^{m+i-1} + 4 \times \sum_{j \in [m]} 5^{j-1}.
    \end{equation}
        
    Let the goods appear in the following order: 
    \begin{equation*}
        t_1,f_1,t_2,f_2,\dots,t_n,f_n,r.
    \end{equation*}
    We first prove the following result.
    \begin{lemma}\label{lem:potef1goodshard}
        There exists a truth assignment $\alpha$ such that each clause in $F$ has exactly one \texttt{True} literal 
        if and only if 
        there exists an 
        allocation $\mathcal{A} = (A_1$, $A_2)$ such that $v(A_1) = v(A_2)$ for instance $\mathcal{I}$.
    \end{lemma}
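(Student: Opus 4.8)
The plan is to mirror the structure used for Lemma~\ref{lem:tef1goodshard}, exploiting the base-$5$ encoding. First I would record the elementary reduction: since $v(A_1)+v(A_2)=v(O)$, the condition $v(A_1)=v(A_2)$ is equivalent to $v(A_1)=v(A_2)=\tfrac12 v(O)$, and by (\ref{eqn:lem_vO_goods}) this common target value is $\sum_{i\in[n]}5^{m+i-1}+2\sum_{j\in[m]}5^{j-1}$. So it suffices to characterize when some bundle $B\subseteq O$ has $v(B)=\tfrac12 v(O)$.

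The key tool is a ``no-carry'' unique-representation argument in base $5$. I would observe that the exponents occurring in the construction split into two disjoint blocks: the \emph{variable positions} $\{m,\dots,m+n-1\}$, populated only by the $t_i,f_i$ through their term $5^{m+n-i}$ (using $\sum_i 5^{m+n-i}=\sum_i 5^{m+i-1}$), and the \emph{clause positions} $\{0,\dots,m-1\}$, populated by $r$ (value $\sum_j 5^{j-1}$) and by the clause-terms $5^{m-j}$ of the $t_i,f_i$. For any $B\subseteq O$, the coefficient of $5^{m+i-1}$ in $v(B)$ is $|B\cap\{t_i,f_i\}|\in\{0,1,2\}$, and the coefficient of the clause position $m-j$ is $[\,r\in B\,]+\#\{t_i\in B:x_i\in C_j\}+\#\{f_i\in B:\neg x_i\in C_j\}\le 1+3=4$. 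Since every such coefficient is strictly below $5$ and the two blocks do not overlap, no carries occur, so $v(B)$ equals precisely the base-$5$ integer with these digits. Consequently $v(B)=\tfrac12 v(O)$ forces each variable-position coefficient to equal $1$ and each clause-position coefficient to equal $2$.

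For the `only if' direction I would take a truth assignment $\alpha$ with exactly one \texttt{True} literal per clause, set $\ell_i:=t_i$ if $x_i$ is \texttt{True} and $\ell_i:=f_i$ otherwise, and let $A_1:=\{r,\ell_1,\dots,\ell_n\}$, $A_2:=O\setminus A_1$. Then the variable-position coefficients of $A_1$ are all $1$, the good $r$ contributes $\sum_j 5^{j-1}$, and the clause-terms of $\ell_1,\dots,\ell_n$ together contribute exactly $\sum_j 5^{m-j}=\sum_j 5^{j-1}$ (a clause position $m-j$ receives a $+1$ from $\ell_i$ precisely when $\ell_i$'s literal is a \texttt{True} literal of $C_j$, and there is exactly one such literal per clause), so $v(A_1)=\sum_i 5^{m+i-1}+2\sum_j 5^{j-1}=\tfrac12 v(O)$, whence $v(A_2)=\tfrac12 v(O)$ as well.

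For the `if' direction I would start from a balanced allocation, and since valuations are identical assume w.l.o.g.\ $r\in A_1$, then apply the no-carry characterization to $B=A_1$: each variable-position coefficient being $1$ means exactly one of $\{t_i,f_i\}$ lies in $A_1$ for every $i$, so $x_i:=\texttt{True}\iff t_i\in A_1$ is well defined; and each clause-position coefficient being $2$, i.e.\ $1+(\#\text{ literals of }C_j\text{ whose good is in }A_1)=2$, says exactly one literal of $C_j$ is \texttt{True} under this assignment. The main point to handle with care is the bookkeeping translating ``good lies in $A_1$'' into ``literal is \texttt{True}'', together with the explicit bound showing the digit sums never reach $5$ and the two blocks stay disjoint; once those are nailed down, the rest is routine arithmetic.
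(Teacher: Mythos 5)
Your proposal is correct and follows essentially the same route as the paper's proof: reduce balancedness to hitting the target value $\tfrac12 v(O)$, then read off the variable- and clause-digits of the base-$5$ encoding. The only difference is that you make the no-carry/unique-representation argument explicit (all digit sums are at most $4$, and the variable and clause exponent blocks are disjoint), which the paper leaves implicit in its ``this is only possible if'' step, so your write-up is if anything slightly more rigorous.
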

    \begin{proof}
        For the `if' direction, consider an allocation $\mathcal{A}$ such that $v(A_1) = v(A_2)$. 
        Since agents have identical valuations, without loss of generality, let $r \in A_1$. 
        Since $O = A_1 \cup A_2$ and $v(A_1) = v(A_2) = \frac{1}{2}v(O)$, we have that
        \begin{equation*}
            v(A_1 \setminus \{r\}) = \left( \sum_{i \in [n]} 5^{m + i -1} + 2 \times \sum_{j \in [m]} 5^{j - 1} \right) - \sum_{j \in [m]} 5^{j - 1}
            = \sum_{i \in [n]} 5^{m + i -1} + \sum_{j \in [m]} 5^{j - 1}.
        \end{equation*} 
        Note that this is only possible if for each $i \in [m]$, $t_i$ and $f_i$ are allocated to different agents.
        The reason is because the only way agent~$1$ can obtain the $\sum_{i \in [n]} 5^{m + i -1}$ term of the above bundle value is if he is allocated exactly one good from each of $\{t_i,f_i\}$ for all $i \in [n]$.

        Then, from the goods that exist in bundle $A_1$, we can construct an assignment $\alpha$:
        for each $i \in [n]$, let $x_i= \texttt{True}$ if $t_i \in A_1$ and $x_i = \texttt{False}$ if $f_i \in A_1$.
        Then, from the second term in the expression of $v(A_1 \setminus\{r\})$ above, we can observe that each clause must have exactly one \texttt{True} literal.

        For the `only if' direction,  consider a truth assignment $\alpha$ such that each clause in $F$ has exactly one \texttt{True} literal. 
        Then, for each $i \in [n]$, let 
        \begin{equation*}
            \ell_i = 
            \begin{cases} 
                t_i & \text{if } x_i = \texttt{True} \text{ under } \alpha, \\
                f_i  & \text{if } x_i = \texttt{False} \text{ under } \alpha. 
            \end{cases}
        \end{equation*}
        We construct the allocation $\mathcal{A} = (A_1,A_2)$ where 
        \begin{equation*}
            A_1 = \{\ell_1, \dots, \ell_n, r\} \quad \text{and} \quad A_2 = O \setminus A_1.
        \end{equation*}
        Again, observe that
        \begin{equation*}
            \sum_{i \in [n]} 5^{m+n-i} = \sum_{i \in [n]} 5^{m+i-1}.
        \end{equation*}
        Then, as each clause has exactly one \texttt{True} literal, 
        \begin{equation*}
            v(A_1) = \sum_{i \in [n]}  5^{m + i -1} + 2 \times \sum_{j \in [m]} 5^{j - 1},
        \end{equation*}
        and together with (\ref{eqn:lem_vO_goods}), we get that
        \begin{equation*}
            v(A_2) = v(O) - v(A_1) = v(A_1),
        \end{equation*}
        as desired.
    \end{proof}
    Note that for all values of $m,n \geq 1$, and some $\varepsilon < \frac{1}{3}$,
    \begin{equation} \label{eqn:tef1po_kcondition_goods_2}
        5^{m+n} - 2\varepsilon > 5^{m+n-1} + \frac{5^m-1}{4} = 5^{m+n-1} + \sum_{j \in [m]} 5^{j-1} \geq \max_{g \in O} v(g).
    \end{equation}
    
    Now consider another instance $\mathcal{I}'$ that is similar to $\mathcal{I}$, but with an additional four goods $o_1,o_2,o_3,o_4$.
    Let the agents' valuations over these four new goods be defined as follows, for some $\varepsilon < \frac{1}{3}$:
    \begin{center}
        \begin{tabular}{l||cccc}
            $\mathbf{v}$ &$o_1$ & $o_2$ & $o_3$ & $o_4$\\ \hline\hline
            1 & $5^{m+n}$ & $5^{m+n} - \varepsilon$ & $5^{m+n} - \varepsilon$ & $5^{m+n}$\\
            2 & $5^{m+n}-\varepsilon$ & $5^{m+n}$ & $5^{m+n}$ & $5^{m+n}-\varepsilon$\\
        \end{tabular}
    \end{center}
    Then, we have the set of goods $O' = O \cup\{o_1,o_2,o_3,o_4\}$.
    
    Let the goods be in the following order:
    \begin{equation*}
        t_1,f_1,t_2,f_2,\dots,t_n,f_n,r, o_1,o_2,o_3,o_4.
    \end{equation*}
    If there is a partial allocation $\mathcal{A}^{2n+1}$ over the first $2n+1$ goods
    such that $v(A^{2n+1}_1) = v(A^{2n+1}_2)$, then by giving $o_1,o_4$ to agent 1 and $o_2,o_3$ to agent 2, we obtain an allocation that is TEF1 and PO (note that any allocation for the first $2n+1$ goods will be PO, since agents have identical valuations over them). 
    
    However, if there does not exist a partial allocation $\mathcal{A}^{2n+1}$ over the first $2n+1$ goods
    such that $v(A^{2n+1}_1) = v(A^{2n+1}_2)$, then let $\mathcal{A}^{2n+1}$ be any partial allocation of the first $2n+1$ goods that is TEF1 but $v(A^{2n+1}_1) \neq v(A^{2n+1}_2)$. We will show that if $v(A^{2n+1}_1) \neq v(A^{2n+1}_2)$, any TEF1 allocation of $O'$ cannot be PO.

    Note that in order for $\mathcal{A}^{2n+1}$ to be TEF1, we must have that for any agent $i \in \{1,2\}$,
    \begin{equation} \label{eqn:tef1po_kcondition_goods_3}
        v(A^{2n+1}_i) \geq \frac{v(O) - \max_{g \in O} v(g)}{2}.
    \end{equation}
    This also means that for any agent $i \in \{1,2\}$,
    \begin{equation}\label{eqn:tef1po_kcondition_goods_4}
        v(A^{2n+1}_i) \leq v(O) - \frac{v(O) - \max_{g \in O} v(g)}{2} = \frac{v(O) + \max_{g \in O} v(g)}{2}.
    \end{equation}
    Also observe that since $\min_{g \in O} v(g) > \varepsilon$ and $v(A^{2n+1}_1) \neq v(A^{2n+1}_2)$,  
    \begin{equation}\label{eqn:tef1po_kcondition_goods_5}
        \left| v(A^{2n+1}_1) - v(A^{2n+1}_2) \right| > \varepsilon.
    \end{equation}

    We split our analysis into two cases.

    \begin{description}
        \item[Case 1: $v(A^{2n+1}_1) > v(A^{2n+1}_2)$.] 
    If we give $o_1$ to agent~$1$, since by (\ref{eqn:tef1po_kcondition_goods_2}), $v_2(o_1) > \max_{g \in A^{2n+1}_1} v(g)$, we get that
    \begin{equation*}
        v_2(A^{2n+2}_2) = v(A^{2n+1}_2) <  v(A^{2n+1}_1) = v_2(A^{2n+2}_1 \setminus \{o_1\}),
    \end{equation*}
    and agent~$2$ will still envy agent~$1$ after dropping $o_1$ from agent~$1$'s bundle.
    Thus, we must give $o_1$ to agent~$2$.

    Next, if we give $o_2$ to agent~$2$, then since $v_1(o_1) > \max_{g \in O} v(g)$ and $v_1(o_1) > v_1(o_2)$, we have that
    \begin{align*}
        v_1(A_1^{2n+3}) & = v(A_1^{2n+1}) \\
        & \leq \frac{v(O) + \max_{g \in O} v(g)}{2} \quad (\text{by } (\ref{eqn:tef1po_kcondition_goods_4}))\\
        & < \frac{v(O) - \max_{g \in O} v(g)}{2} + v_1(o_2) \quad (\text{by } (\ref{eqn:tef1po_kcondition_goods_2})) \\
        & \leq v(A_2^{2n+1}) + v_1(o_2) \quad (\text{by } (\ref{eqn:tef1po_kcondition_goods_3}))\\
        & = v_1(A_2^{2n+3} \setminus \{o_1\}),
    \end{align*}
    and agent~$1$ will still envy agent~$2$ after dropping $o_1$ from agent~$2$'s bundle.
    Thus, we must give $o_2$ to agent~$1$.
    However, such a partial allocation (and thus $\mathcal{A}$) will fail to be PO, as giving $o_1$ to agent~$1$ and $o_2$ to agent~$2$ instead will strictly increase the utility of both agents.

    \item[Case 2: $v(A^{2n+1}_1) < v(A^{2n+1}_2)$.]
    If we give $o_1$ to agent~$2$, since by (\ref{eqn:tef1po_kcondition_goods_2}), $v_1(o_1) > \max_{g \in A^{2n+1}_2} v(g)$, we get that
    \begin{equation*}
        v_1(A_1^{2n+2}) = v(A_1^{2n+1}) < v(A_2^{2n+1}) = v_1(A^{2n+2} \setminus \{o_1\}),
    \end{equation*}
    and agent~$1$ will still envy agent~$2$ after dropping $o_1$ from agent~$2$'s bundle.
    Thus, we must give $o_1$ to agent~$1$.

    Next, if we give $o_2$ to agent~$1$, then since $v_2(o_2) > \max_{g \in O} v(g)$ and $v_2(o_2) > v_2(o_1)$, we have that
    \begin{align*}
        v_2(A_2^{2n+3}) & = v(A_2^{2n+1}) \\
        & \leq \frac{v(O) + \max_{g \in O} v(g)}{2} \quad (\text{by } (\ref{eqn:tef1po_kcondition_goods_4}))\\
        & < \frac{v(O) - \max_{g \in O} v(g)}{2} + v_1(o_1) \quad (\text{by } (\ref{eqn:tef1po_kcondition_goods_2})) \\
        & \leq v(A_1^{2n+1}) + v_1(o_1) \quad (\text{by } (\ref{eqn:tef1po_kcondition_goods_3}))\\
        & = v_2(A_1^{2n+3} \setminus \{o_2\}),
    \end{align*}
    and agent~$2$ will still envy agent~$1$ after dropping $o_2$ from agent~$1$'s bundle.
    Thus, we must give $o_2$ to agent~$2$.

    Now, if we give $o_3$ to agent~$2$, then since $v_1(o_3) >  \max_{g \in O} v(g)$ and $v_1(o_3) = v_1(o_2)$, we have that
    \begin{align*}
        v_1(A_1^{2n+4}) & = v(A_1^{2n+1}) + v_1(o_1)\\
        & < v(A_2^{2n+1}) - \varepsilon + v_1(o_1) \quad (\text{by } (\ref{eqn:tef1po_kcondition_goods_5}))\\
        & = v(A_2^{2n+1}) + v_1(o_2)\\
        & = v_1(A_2^{2n+4} \setminus \{o_3\}),
    \end{align*}
    and agent~$1$ will still envy agent~$2$ after dropping $o_3$ from agent~$2$'s bundle.
    Thus, we must give $o_3$ to agent~$1$.

    Finally, if we give $o_4$ to agent~$1$, then since $v_2(o_3) > \max_{g \in O} v(g)$ and $v_2(o_3) > v_2(o_1) = v_2(o_4)$, we have that
    \begin{align*}
        v_2(A_2) & = v(A_2^{2n+1}) + v_2(o_2)\\
        & = v(A_2^{2n+1}) + 5^{m+n} \\
        & \leq \frac{v(O) + \max_{g \in O} v(g)}{2} + 5^{m+n} \quad (\text{by } (\ref{eqn:tef1po_kcondition_goods_4})) \\
        & < \frac{v(O) - \max_{g \in O} v(g)}{2} + 2 \times 5^{m+n} - 2\varepsilon \quad (\text{by } (\ref{eqn:tef1po_kcondition_goods_2})) \\
        & \leq v(A^{2n+1}_1) + 2\times5^{m+n} - 2\varepsilon \quad (\text{by } (\ref{eqn:tef1po_kcondition_goods_3}))\\
        & = v(A^{2n+1}_1) + v_2(\{o_1,o_4\}) \\
        & = v_2(A_1 \setminus \{o_3\}),
    \end{align*}
    and agent~$2$ will still envy agent~$1$ after dropping $o_3$ from agent~$1$'s bundle.
    Thus, we must give $o_4$ to agent~$2$.
    However, again, this is not PO as giving $o_3$ to agent~$2$ and $o_4$ to agent~$1$ will strictly increase the utility of both agents.
    \end{description}
    
    By exhaustion of cases, we have shown that if $v(A^{2n+1}_1) \neq v(A^{2n+1}_2)$, there does not exist a TEF1 and PO allocation over $O'$. Thus, a TEF1 and PO allocation over $O'$ exists if and only if $v(A^{2n+1}_1) \neq v(A^{2n+1}_2)$. By Lemma~\ref{lem:potef1goodshard}, this implies that a TEF1 and PO allocation over $O'$ exists if and only if there is a truth assignment $\alpha$ such that each clause in $F$ has exactly one \texttt{True} literal.

\subsection{Proof of \Cref{thm:tef1_po_nphard_chores}}
We reduce from the NP-hard problem \textsc{1-in-3-SAT}.
    An instance of this problem consists of conjunctive normal form $F$ with three literals per clause; it is a yes-instance if there exists a truth assignment to the variables such that each clause has exactly one \texttt{True} literal, and a no-instance otherwise.

    Consider an instance of \textsc{1-in-3-SAT} given by the CNF $F$ which contains $n$ variables $\{x_1,\dots,x_n\}$ and $m$ clauses $\{C_1,\dots,C_m\}$.

    We construct an instance $\mathcal{I}$ with two agents and $2n+1$ chores.
    For each $i \in [n]$, we introduce two chores $t_i, f_i$.
    We also introduce an additional chore $r$.
    Let agents' (identical) valuations be defined as follows:
    \begin{equation*}
        v(c) = 
        \begin{cases} 
            -5^{m+n-i} - \sum_{j \, : \, x_i \in C_j} -5^{m-j}, & \text{if } c = t_i, \\ 
            - 5^{m+n-i} - \sum_{j \, : \, \neg x_i \in C_j}  5^{m-j}, & \text{if } c = f_i, \\
            - \sum_{j \in[m]} 5^{j-1}, & \text{if } c = r.
        \end{cases}
    \end{equation*}
    Intuitively, for each variable index $i \in [n]$, we associate with it a unique value $-5^{m+n-i}$.
    For each clause index $j \in [m]$, we also associate it with a unique number $-5^{m-j}$.
    Note that no two indices (regardless of whether its a variable or clause index) share the same value, hence the term unique value.
    Then, the value for each chore $t_i$ comprises of the unique value associated with $i$, and the sum over all unique values of clauses $C_j$ which $x_i$ appears as a \emph{positive literal} in; whereas the value for each chore $f_i$ comprises of the unique value associated with $i$, and the sum over all unique values of clauses $C_j$ which $x_i$ appears as a \emph{negative literal} in.
    We will utilize this in our analysis later.

    Then, we have that the set of chores $O = \{t_1,f_1,t_2,f_2,\dots,t_n,f_n,r\}$.
    Note that
    \begin{equation*}
        v(O) = v(r) + \sum_{i \in [n]} v(t_i) + \sum_{i \in [n]} v(f_i).
    \end{equation*}
    Also observe that
    \begin{equation*}
        -\sum_{i \in [n]} 5^{m+n-i} = -\sum_{i \in [n]} 5^{m+i-1}.
    \end{equation*}
    Now, as each clause contains exactly three literals,
    \begin{equation*}
        -\sum_{i \in [n]}\sum_{j : x_i \in C_j} 5^{m-j} - \sum_{i \in [n]}\sum_{j : \neg x_i \in C_j} 5^{m-j} = 3 \times - \sum_{j \in [m]} 5^{j-1}.
    \end{equation*}
    Then, combining the equations above, we get that
        \begin{equation} \label{eqn:lem_vO_chores}
            v(O) = 2 \times -\sum_{i \in [n]} 5^{m+i-1} + 4 \times -\sum_{j \in [m]} 5^{j-1}.
        \end{equation}

    Let the chores appear in the following order:
    \begin{equation*}
        t_1,f_1,t_2,f_2,\dots,t_n,f_n,r.
    \end{equation*}
    We first prove the following result.
    \begin{lemma}\label{lem:potef1choreshard}
        There exists a truth assignment $\alpha$ such that each clause in $F$ has exactly one \texttt{True} literal if and only if there exists an allocation $\mathcal{A} = (A_1,A_2)$ such that $v(A_1) = v(A_2)$ for instance $\mathcal{I}$.
    \end{lemma}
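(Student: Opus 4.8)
The plan is to mirror the proof of the goods analogue, Lemma~\ref{lem:potef1goodshard}, replacing every value by its negation; the underlying base-$5$ digit argument is unaffected, since only the magnitudes of the coefficients matter. The two directions are straightforward sign-reflections of the goods case, so I expect no genuine obstacle; the only point requiring care is checking that no carrying occurs in the base-$5$ bookkeeping.

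For the ``if'' direction, suppose $\mathcal{A}=(A_1,A_2)$ satisfies $v(A_1)=v(A_2)$. Since $O = A_1 \cup A_2$, both bundles have value $\tfrac12 v(O)$. Because the agents' valuations are identical, assume without loss of generality that $r \in A_1$. Using~(\ref{eqn:lem_vO_chores}), one computes
\[
    v(A_1 \setminus \{r\}) = -\sum_{i \in [n]} 5^{m+i-1} - \sum_{j \in [m]} 5^{j-1}.
\]
Reading the right-hand side in base $5$, the digit at position $m+n-i$ equals $1$ and the digit at position $m-j$ equals $1$, and these two families of positions are disjoint. On the other hand, $-v(A_1\setminus\{r\})$ has digit $[t_i\in A_1]+[f_i\in A_1]\le 2<5$ at position $m+n-i$, and at position $m-j$ it has digit equal to the number of ``selected'' literals of $C_j$ (i.e., $t_i\in A_1$ with $x_i\in C_j$, or $f_i\in A_1$ with $\neg x_i\in C_j$), which is at most $3<5$. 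Hence there are no carries, and by uniqueness of base-$5$ representations, $A_1$ contains exactly one of $\{t_i,f_i\}$ for every $i$, and exactly one literal of each clause is selected. Setting $x_i=\texttt{True}$ iff $t_i\in A_1$ then yields a truth assignment in which each clause has exactly one \texttt{True} literal.

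For the ``only if'' direction, given an assignment $\alpha$ with exactly one \texttt{True} literal per clause, set $\ell_i = t_i$ if $x_i=\texttt{True}$ under $\alpha$ and $\ell_i=f_i$ otherwise, and put $A_1=\{\ell_1,\dots,\ell_n,r\}$ and $A_2 = O\setminus A_1$. Since each clause contributes exactly one selected literal, and since $\sum_{i\in[n]}5^{m+n-i}=\sum_{i\in[n]}5^{m+i-1}$ and $\sum_{j\in[m]}5^{m-j}=\sum_{j\in[m]}5^{j-1}$,
\[
    v(A_1) = -\sum_{i\in[n]} 5^{m+i-1} - 2\sum_{j\in[m]} 5^{j-1},
\]
which by~(\ref{eqn:lem_vO_chores}) equals $\tfrac12 v(O)$; hence $v(A_2) = v(O) - v(A_1) = v(A_1)$, as required. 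The main thing to verify along the way is precisely the no-carry claim used above: each variable position receives at most two unit contributions and each clause position at most three (its number of literals), and $r$ — the only other chore touching the clause positions — has been removed in the first step, so all digit coefficients stay strictly below $5$.
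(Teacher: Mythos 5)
Your proof is correct and follows essentially the same route as the paper's: place $r$ in $A_1$ without loss of generality, compute $v(A_1\setminus\{r\})$, and read the truth assignment off the unique way of realizing that value, with the converse direction being the same direct computation. Your explicit base-$5$ no-carry argument (each variable position receives at most $2$ unit contributions, each clause position at most $3$) rigorously justifies the uniqueness step that the paper merely asserts, so nothing is missing.
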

    \begin{proof}
        For the `if' direction, consider an allocation $\mathcal{A}$ such that $v(A_1) = v(A_2)$.
        Since agents have identical valuations, without loss of generality, let $r \in A_1$.
        Since $O = A_1 \cup A_2$ and $v(A_1) = v(A_2) = \frac{1}{2} v(O)$, we have that
        \begin{equation*}
            v(A_1 \setminus \{r\}) = \left(- \sum_{i \in [n]} 5^{m+i-1} + 2 \times -\sum_{j \in [m]} 5^{j-1} \right) + \sum_{j \in [m]} 5^{j-1} = - \sum_{i \in [n]} 5^{m+i-1} -\sum_{j \in [m]} 5^{j-1}.
        \end{equation*}
        Note that this is only possible if for each $I \in [m]$, $t_i$ and $f_i$ are allocated to different agents.
        The reason is because the only way agent~$1$ can obtain the first term of the above bundle value (less chore $r$) is if she is allocated exactly one chore from each of $\{t_i,f_i\}$ for each $i \in [n]$.

        Then, from the chores that exists in bundle $A_1$, we can construct an assignment $\alpha$: for each $i \in [n]$, let $x_i = \texttt{True}$ if $t_i \in A_1$ and $x_i = \texttt{False}$ if $f_i \in A_1$.
        Then, from the second term in the expression of $v(A_1 \setminus \{r\})$ above, we can observe that each clause has exactly one \texttt{True} literal (because the sum is only obtainable if exactly one literal appears in each clause, and our assignment will cause each these literals to evaluate \texttt{True}.

        For the `only if' direction, consider a truth assignment $\alpha$ such that each clause in $F$ has exactly one \texttt{True} literal.
        Then, for each $i \in [n]$, let 
        \begin{equation*}
            \ell_i = 
            \begin{cases} 
                t_i & \text{if } x_i = \texttt{True} \text{ under } \alpha, \\
                f_i  & \text{if } x_i = \texttt{False} \text{ under } \alpha. 
            \end{cases}
        \end{equation*}
        We construct the allocation $\mathcal{A} = (A_1,A_2)$ where 
        \begin{equation*}
            A_1 = \{\ell_1, \dots, \ell_n, r\} \quad \text{and} \quad A_2 = O \setminus A_1.
        \end{equation*}
        Again, observe that
        \begin{equation*}
            -\sum_{i \in [n]} 5^{m+n-i} = -\sum_{i \in [n]} 5^{m+i-1}.
        \end{equation*}
        Then, as each clause has exactly one \texttt{True} literal, 
        \begin{equation*}
            v(A_1) = -\sum_{i \in [n]}  5^{m + i -1} + 2 \times -\sum_{j \in [m]} 5^{j - 1},
        \end{equation*}
        and together with (\ref{eqn:lem_vO_goods}), we get that
        \begin{equation*}
            v(A_2) = v(O) - v(A_1) = v(A_1),
        \end{equation*}
        as desired.
    \end{proof}
    Note that for all values of $m,n \geq 1$, and some $\varepsilon < \frac{1}{3}$,
    \begin{equation}\label{eqn:tef1po_kcondition_chores}
        \frac{-5^{m+n}+2\varepsilon}{2} < -5^{m+n-1} - \frac{5^m-1}{4} = -5^{m+n-1} - \sum_{j \in [m]} 5^{j-1} \leq \min_{c \in O} v(c).
    \end{equation}
    
    Now, consider another instance $\mathcal{I}'$ that is similar to $\mathcal{I}$, but with an additional four chores $o_1,o_2,o_3,o_4$.
    Let agents' valuations over these four new chores be defined as follows, for some $\varepsilon < \frac{1}{3}$:
    \begin{center}
        \begin{tabular}{l||cccc}
            $\mathbf{v}$ &$o_1$ & $o_2$ & $o_3$ & $o_4$\\ \hline\hline
            1 & $-5^{m+n}$ & $-5^{m+n} + \varepsilon$ & $-5^{m+n} + \varepsilon$ & $-5^{m+n}$\\
            2 & $-5^{m+n}+\varepsilon$ & $-5^{m+n}$ & $-5^{m+n}$ & $-5^{m+n}+\varepsilon$\\
        \end{tabular}
    \end{center}
    Then, we have the set of chores $O' = O \cup\{o_1,o_2,o_3,o_4\}$.
    
    Let the chores be in the following order:
    \begin{equation*}
        t_1,f_1,t_2,f_2,\dots,t_n,f_n,r, o_1,o_2,o_3,o_4.
    \end{equation*}
    If there is a partial allocation $\mathcal{A}^{2n+1}$ over the first $2n+1$ chores such that $v(A^{2n+1}_1) = v(A^{2n+1}_2)$, then by giving $o_1,o_4$ to agent 1 and $o_2,o_3$ to agent 2, we obtain an allocation that is TEF1 and PO (note that any allocation for the first $2n+1$ chores will be PO, since agents have identical valuations over them). 
    
    However, if there does not exist a partial allocation $\mathcal{A}^{2n+1}$ over the first $2n+1$ goods such that $v(A^{2n+1}_1) = v(A^{2n+1}_2)$, then let $\mathcal{A}^{2n+1}$ be any partial allocation of the first $2n+1$ goods that is TEF1 but $v(A^{2n+1}_1) \neq v(A^{2n+1}_2)$.

    Note that in order for $\mathcal{A}^{2n+1}$ to be TEF1, we must have that for any agent $i \in \{1,2\}$,
    \begin{equation} \label{eqn:tef1po_kcondition_chores_2}
        v(A^{2n+1}_i) - \min_{c \in O} v(c) \geq \frac{v(O)}{2}.
    \end{equation}
    This also means that for any agent $i \in \{1,2\}$,
    \begin{equation}\label{eqn:tef1po_kcondition_chores_3}
        v(A^{2n+1}_i) \leq v(O) - \left(\frac{v(O)}{2} +  \min_{c \in O} v(c)\right) = \frac{v(O)}{2} - \min_{c \in O} v(c).
    \end{equation}
    Also observe that since $\min_{g \in O} v(g) > \varepsilon$ and $v(A^{2n+1}_1) \neq v(A^{2n+1}_2)$, 
    \begin{equation}\label{eqn:tef1po_kcondition_chores_4}
        \left| v(A^{2n+1}_1) - v(A^{2n+1}_2) \right| > \varepsilon.
    \end{equation}

    We split our analysis into two cases.

    \begin{description}
        \item[Case 1: $v(A^{2n+1}_1) > v(A^{2n+1}_2)$.] 
    If we give $o_1$ to agent~$2$, since by (\ref{eqn:tef1po_kcondition_chores}), $v_2(o_1) < \min_{c \in A^{2n+1}_2} v(c)$, we get that
    \begin{equation*}
        v_2(A^{2n+2}_2 \setminus \{o_1\}) = v(A^{2n+1}_2) <  v(A^{2n+1}_1) = v_2(A^{2n+2}_1),
    \end{equation*}
    and agent~$2$ will still envy agent~$1$ after dropping $o_1$ from his own bundle.
    Thus, we must give $o_1$ to agent~$1$.

    Next, if we give $o_2$ to agent~$1$, then since $v_1(o_1) < \max_{c \in O} v(c)$ and $v_1(o_1) < v_1(o_2)$, we have that
    \begin{align*}
        v_1(A_1^{2n+3} \setminus \{o_1\}) & = v(A_1^{2n+1}) + v_1(o_2) \\
        & \leq \frac{v(O)}{2} - \min_{c \in O} v(c) + v_1(o_2) \quad (\text{by } (\ref{eqn:tef1po_kcondition_chores_3}))\\
        & < \frac{v(O)}{2} + \min_{c \in O} v(c) \quad (\text{by } (\ref{eqn:tef1po_kcondition_chores})) \\
        & \leq v(A_2^{2n+1}) \quad (\text{by } (\ref{eqn:tef1po_kcondition_chores_2}))\\
        & = v_1(A_2^{2n+3}),
    \end{align*}
    and agent~$1$ will still envy agent~$2$ after dropping $o_2$ from her own bundle.
    Thus, we must give $o_2$ to agent~$2$.
    However, such a partial allocation (and thus $\mathcal{A}$) will fail to be PO, as giving $o_1$ to agent~$2$ and $o_2$ to agent~$1$ will strictly increase the utility of both agents.

    \item[Case 2: $v(A^{2n+1}_1) < v(A^{2n+1}_2)$.]
    If we give $o_1$ to agent~$1$, since by (\ref{eqn:tef1po_kcondition_chores}), $v_1(o_1) \min_{c \in A^{2n+1}_1} v(c)$, we get that
    \begin{equation*}
        v_1(A_1^{2n+2} \setminus \{o_1\}) = v(A_1^{2n+1}) < v(A_2^{2n+1}) = v_1(A^{2n+2}),
    \end{equation*}
    and agent~$1$ will still envy agent~$2$ after dropping $o_1$ from her own bundle.
    Thus, we must give $o_1$ to agent~$2$.

    Next, if we give $o_2$ to agent~$2$, then since $v_2(o_2) < \min_{c \in O} v(c)$ and $v_2(o_2) < v_2(o_1)$, we have that
    \begin{align*}
        v_2(A_2^{2n+3} \setminus \{o_2\}) & = v(A_2^{2n+1}) + v_2(o_1)\\
        & \leq \frac{v(O)}{2} - \min_{c \in O} v(c) + v_2(o_1) \quad (\text{by } (\ref{eqn:tef1po_kcondition_chores_3}))\\
        & < \frac{v(O)}{2} + \min_{c \in O} v(c) \quad (\text{by } (\ref{eqn:tef1po_kcondition_chores})) \\
        & \leq v(A_1^{2n+1}) \quad (\text{by } (\ref{eqn:tef1po_kcondition_chores_2}))\\
        & = v_2(A_1^{2n+3}),
    \end{align*}
    and agent~$2$ will still envy agent~$1$ after dropping $o_2$ from his own bundle.
    Thus, we must give $o_2$ to agent~$1$.

    Now, if we give $o_3$ to agent~$1$, then since $v_1(o_3) < \min_{c \in O} v(c)$ and $v_1(o_3) = v_1(o_2)$, we have that
    \begin{align*}
        v_1(A_1^{2n+4} \setminus \{o_3\}) & = v(A_1^{2n+1}) + v_1(o_2)\\
        & < v(A_2^{2n+1}) - \varepsilon + v_1(o_2) \quad (\text{by } (\ref{eqn:tef1po_kcondition_chores_4}))\\
        & = v(A_2^{2n+1}) + v_1(o_1)\\
        & = v_1(A_2^{2n+4}),
    \end{align*}
    and agent~$1$ will still envy agent~$2$ after dropping $o_3$ from her own bundle.
    Thus, we must give $o_3$ to agent~$2$.

    Finally, if we give $o_4$ to agent~$2$, then since $v_2(o_3) < \min_{c \in O} v(c)$ and $v_2(o_3) < v_2(o_1) = v_2(o_4)$, we have that
    \begin{align*}
        v_2(A_2 \setminus \{o_3\}) & = v(A_2^{2n+1}) + v_2(\{o_1,o_4\})\\
        & = v(A_2^{2n+1}) - 2 \times 5^{m+n} + 2\varepsilon \\
        & \leq \frac{v(O)}{2} - \min_{c \in O} v(c) - 2\times 5^{m+n} + 2\varepsilon \quad (\text{by } (\ref{eqn:tef1po_kcondition_chores_3})) \\
        & < \frac{v(O)}{2} + \min_{c \in O} v(c) - 5^{m+n} + \varepsilon \quad (\text{by } (\ref{eqn:tef1po_kcondition_chores})) \\
        & \leq v(A^{2n+1}_1) + v_2(o_1) \quad (\text{by } (\ref{eqn:tef1po_kcondition_chores_2}))\\
        & = v(A^{2n+1}_1) + v_2(o_1) \\
        & = v_2(A_1),
    \end{align*}
    and agent~$2$ will still envy agent~$1$ after dropping $o_3$ from his own bundle.
    Thus, we must give $o_4$ to agent~$1$.
    However, again, this is not PO as giving $o_3$ to agent~$1$ and $o_4$ to agent~$2$ will strictly increase the utility of both agents.
    \end{description}
    By exhaustion of cases, we have shown that if $v(A^{2n+1}_1) \neq v(A^{2n+1}_2)$, there does not exist a TEF1 and PO allocation over $O'$. Thus, a TEF1 and PO allocation over $O'$ exists if and only if $v(A^{2n+1}_1) \neq v(A^{2n+1}_2)$. By Lemma~\ref{lem:potef1choreshard}, this implies that a TEF1 and PO allocation over $O'$ exists if and only if there is a truth assignment $\alpha$ such that each clause in $F$ has exactly one \texttt{True} literal.

\section{TEF1 for Mixed Manna} \label{app:mixed_manna}
    We first define TEF1 for mixed manna.
    \begin{definition}[Temporal EF1 for mixed manna]
    In the case of with both goods and chores, an allocation $\mathcal{A}^t = (A^t_1, \dots, A^t_n)$ is said to be \emph{temporal envy-free up to one item (TEF1)} if for all $t'\leq t$ and $i,j \in N$, there exists an item $o \in  A_i^{t'} \cup A_j^{t'}  $ such that $v_i(A_i^{t'} \setminus \{o\}) \geq v_i(A_j^{t'} \setminus \{o\})$. 
 \end{definition}
 Then, we can extend the result of \Cref{thm:2agents} to the more general mixed manna setting, with the following result.

\begin{theorem}\label{thm:mixedmanna}
    When $n=2$, a \emph{TEF1} allocation exists in the mixed manna setting, and can be computed in polynomial time.
\end{theorem}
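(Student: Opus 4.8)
The plan is to reduce to the case where a single item arrives per round (which is without loss of generality: the proof of \Cref{lem:transform} only uses that the set of realized prefixes $\{O^t\}$ is contained in $\{\widetilde O^t\}$ together with the fact that EF1 is a property of a single allocation, so it transfers verbatim to the mixed‑manna TEF1 notion), and then to adapt \Cref{alg:twoagents_chores} and the inductive argument of \Cref{thm:2agents}. As there, I would maintain a counter $s$ equal to the last round after which the cumulative allocation $\mathcal{A}^s$ is envy‑free, and reason only about the ``delta'' allocation $\mathcal{A}^{t}\setminus\mathcal{A}^{s}$ of the items in $O^t\setminus O^s$. The first thing to record is that the basic reduction still holds: if $\mathcal{A}^s$ is EF and $\mathcal{A}^{t'}\setminus\mathcal{A}^{s}$ is EF1 in the mixed sense, then $\mathcal{A}^{t'}$ is EF1, since adding $v_i(A^s_i)\ge v_i(A^s_j)$ to $v_i((A^{t'}_i\setminus A^s_i)\setminus\{o\})\ge v_i((A^{t'}_j\setminus A^s_j)\setminus\{o\})$ yields exactly $v_i(A^{t'}_i\setminus\{o\})\ge v_i(A^{t'}_j\setminus\{o\})$, with $o$ lying in one of the two delta‑bundles and hence in $A^{t'}_i\cup A^{t'}_j$. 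So it again suffices to keep every delta EF1.

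The modified rule for the incoming item $o_t$ splits into three cases by the signs of $v_1(o_t)$ and $v_2(o_t)$. (i) If $o_t$ is a good for exactly one agent $i$ and a chore for the other agent $j$, give it to $i$; since $v_i(o_t)\ge 0$ and $v_j(o_t)\le 0$, the quantity $v_i(A_i)$ weakly increases and $v_j(A_i)$ weakly decreases, so every up‑to‑one envy comparison within the delta can only improve, and the invariant ``the delta is EF1 with at most one‑directional envy'' is preserved (and if the delta becomes EF we set $s\leftarrow t$). (ii) If $o_t$ is a good for both agents, use the goods rule of \citet{he2019fairerfuturepast}: give it to whichever agent currently envies the other in the delta (either agent if neither does). (iii) If $o_t$ is a chore for both agents, use the rule of \Cref{alg:twoagents_chores}: give it to an agent who does not envy the other in the delta. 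In cases (ii)--(iii), if after the allocation both agents envy each other in the delta, swap the two delta‑bundles; a one‑line check shows the swap always restores envy‑freeness of the delta regardless of item types, because from $v_1(B_1)<v_1(B_2)$ and $v_2(B_2)<v_2(B_1)$ we get that after the swap neither $v_1(B_2)<v_1(B_1)$ nor $v_2(B_1)<v_2(B_2)$ holds, and we then set $s\leftarrow t$. Each round takes $\mathcal{O}(m)$ time.

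Correctness then follows by an induction mirroring that of \Cref{thm:2agents}: for every $t$, with $r_t$ the latest round $\le t$ after which the cumulative allocation is EF, the delta $\mathcal{A}^t\setminus\mathcal{A}^{r_t}$ is EF1 and at most one agent envies the other in it. The base case (a single item in the delta) is immediate. For the inductive step one checks, in each of the three cases above and according to whether a swap occurred at the relevant round, that EF1 for the agent who did not receive $o_t$ follows from the induction hypothesis (that agent's bundle is unchanged, and the other agent's delta‑bundle moved in the helpful direction for that comparison), while EF1 for the agent who received $o_t$ is witnessed by $o_t$ itself when $o_t$ is a chore for that agent and is inherited from the hypothesis otherwise; the case analysis closely parallels Cases~1 and~2 in the proof of \Cref{thm:2agents}.

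I expect the main obstacle to be the bookkeeping in cases (ii)--(iii) when the delta $\mathcal{A}^t\setminus\mathcal{A}^{r_t}$ simultaneously contains items of several types (some goods‑for‑both, some chores‑for‑both, and some mixed): removing ``one item'' to certify EF1 then requires an item of a specific sign from a specific bundle, so one must argue that such a witness always exists --- the natural candidates being $o_t$ itself when it is a chore for the complaining agent, and an item supplied by the induction hypothesis otherwise. The robustness of the swap (it restores EF of the delta independently of item types) is exactly what keeps the number of cases bounded and makes the invariant sustainable; a secondary point to verify is that the per‑item rule never leaves the delta violating EF1 \emph{before} the optional swap, which rests on the monotonicity observations of case (i) and on the EF1 guarantees of the round‑robin/\citet{he2019fairerfuturepast} rule and of \Cref{alg:twoagents_chores} when restricted to a single item type.
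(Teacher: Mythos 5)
Your overall architecture (window since the last envy-free prefix, greedy placement, swap on mutual envy) is the right family of ideas, but there is a genuine gap exactly at the point you flag as ``the main obstacle'': the retroactive swap does \emph{not} preserve EF1 of the intermediate prefixes once the window contains a mixed item, and your rules admit executions that violate TEF1. Concretely, take an empty delta and three items: $g_1$ with $v_1(g_1)=v_2(g_1)=10$ (good for both; neither agent envies, so your rule (ii) may give it to agent $1$); then $o_2$ with $v_1(o_2)=+1$, $v_2(o_2)=-1$ (mixed; rule (i) forces it to agent $1$, and agent $2$ still envies since $v_2(\{g_1,o_2\})=9>0$, so the window stays open); then $g_3$ with $v_1(g_3)=100$, $v_2(g_3)=8$ (good for both; rule (ii) forces it to the envious agent $2$). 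Now $v_1(\{g_1,o_2\})=11<100$ and $v_2(\{g_3\})=8<9$, so both agents envy and you swap, ending with $A_1=\{g_3\}$, $A_2=\{g_1,o_2\}$. The round-$2$ prefix of this final allocation is $(\varnothing,\{g_1,o_2\})$, and both items in $A_2^2$ are strictly positive for agent $1$, so no single removal from $A_1^2\cup A_2^2$ can eliminate agent $1$'s envy: EF1 fails at round $2$. The pure-goods and pure-chores swap analyses (Case~2 of \Cref{thm:2agents} and its analogue in \citet{he2019fairerfuturepast}) each rely on all items in the window having one fixed sign for both agents; a mixed item that is swept into the swap lands with the agent for whom it is a chore, and the ``one-line check'' that the swap restores EF of the delta at the swap round says nothing about the earlier prefixes. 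Your invariant as stated (the pre-swap delta is EF1 with one-directional envy) is true but insufficient.

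The paper avoids this entirely by a different decomposition. First, items valued positively by exactly one agent are peeled off and permanently assigned to that agent \emph{outside} any swap; since $v_i(S_i^t)\ge 0$ and $v_i(S_j^t)\le 0$, adding these sets on top of any TEF1 allocation of the remaining items preserves TEF1 (this is the sound part of your case (i), but it must be kept out of the swappable delta). Second, for the remaining items --- each a good for both or a chore for both --- it sets $v_i'(o)=|v_i(o)|$, runs the known two-agent TEF1 goods algorithm under $\mathbf{v}'$, and then hands each agent the \emph{other} agent's chores; a short computation converts each up-to-one-good inequality under $\mathbf{v}'$ into the mixed-manna up-to-one-item inequality under $\mathbf{v}$. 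If you want to salvage your direct approach, the minimal fixes are (a) excluding mixed items from the delta, and (b) proving from scratch that a window mixing goods-for-both and chores-for-both still has EF1 swapped intermediate prefixes --- neither of which is covered by citing the existing single-sign analyses.
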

    \begin{proof}
    For an agent $i\in \{1,2\}$ and round $t\in [T]$, we define $S^t_i \subseteq O^t$ as the set of items that have arrived up to round $t$ which only agent $i$ has a positive value for. 
    Then, for any $t\in [t]$ and $i,j\in \{1,2\}$ where $i\neq j$, $v_i(S_i^t)\geq 0$ and $v_i(S_j^t)\leq 0$. Clearly, if some allocation $\mathcal{A}^t$ is TEF1 over $O^t \setminus (S^t_1 \cup S^t_2)$, then $\mathcal{B}^t = (S^t_1 \cup A^t_1, S^t_2 \cup A^t_2)$ is a TEF1 allocation over $O^t$. Furthermore, for any $t \in [T]$ and $i,j\in \{1,2\}$ where $i\neq j$, if there exists an item $o \in  A_i^t \cup A_j^t  $ such that $v_i(A_i^t \setminus \{o\}) \geq v_i(A_j^t \setminus \{o\})$, then
    \begin{align*}
    v_i(B_i^t \setminus \{o\}) &= v_i(A_i^t \setminus \{o\}) + v_i(S_i^t)\\
    &\geq  v_i(A_i^t \setminus \{o\})\\
    &\geq v_i(A_j^t \setminus \{o\})\\
    &\geq  v_i(A_j^t \setminus \{o\}) +  v_i(S_j^t)\\
    &= v_i(B_j^t \setminus \{o\}),
    \end{align*}
    where the first and third inequalities are due to the fact that $ v_i(S_i^t) \geq 0$ and $ v_i(S_j^t) \leq 0$. It therefore suffices to assume that for each item $o \in O$, either $v_1(o)\leq 0$ and $v_2(o)\leq 0$, or $v_1(o)\geq 0$ and $v_2(o)\geq 0$, and we make this assumption for the remainder of the proof.

    Let $v'_i(o) = |v_i(o)|$ for all $i \in \{1,2\}$ and $o \in O$. Note that $v'_i(o) \geq 0$ for all $i \in \{1,2\}$ and $o \in O$ and thus, with respect to the augmented valuations, each $o \in O$ is a good. We use Algorithm 2 in \citet{he2019fairerfuturepast}, which returns a TEF1 allocation for goods in polynomial time, to compute an allocation $\mathcal{B}$ which is TEF1 with respect to the augmented valuations $\mathbf{v}'$. 

     For a round $t\in [T]$, let $G^t,C^t \subseteq O^t$ be, respectively, the subsets of goods and chores (with respect to the original valuation profile  $\mathbf{v}=(v_1,v_2)$) that have arrived up to round $t$. Then, for each $t\in [T]$ and $i \in \{1,2\}$, let $G^t_i = G^t \cap B^t_i$ and $C^t_i = C^t \cap B^t_i$. We construct allocation  $\mathcal{A}= (G^T_1 \cup C^T_2, G^T_2 \cup C^T_1)$ from $\mathcal{B}$ by swapping the agents' bundles of chores. We now show that $\mathcal{A}$ is TEF1. 

Recall that all items are goods with respect to $\mathbf{v}'$. Since $\mathcal{B}$ is TEF1, we know that for any $t\in [T]$ and $i,j \in \{1,2\}$ where $i\neq j$, there exists an item $o \in B_j^t$ such that 
\begin{align*}
    &v'_i(B_i^t) \geq v'_i(B_j^t \setminus \{o\}) \\
    &\implies v_i(G_i^t) - |v_i(C_i^t)| \geq v_i(G_j^t) - |v_i(C_j^t)| - |v_i(o)| \\
     &\implies v_i(G_i^t) + v_i(C_j^t) \geq v_i(G_j^t) + v_i(C_i^t) - |v_i(o)| \\
     & \implies \begin{cases}
         v_i(G_i^t) + v_i(C_j^t) \geq v_i(G_j^t \setminus\{o\}) + v_i(C_i^t)& \text{if $o \in G^t_j$}\\
      v_i(G_i^t) + v_i(C_j^t \setminus\{o\})  \geq v_i(G_j^t) + v_i(C_i^t) & \text{if $o \in C^t_j$}
     \end{cases}\\
     &\implies v_i(G_i^t \cup C_j^t \setminus\{o\}) \geq v_i(G_i^t \cup C_j^t \setminus\{o\}) \\
     &\implies  v_i(A_i^t  \setminus\{o\}) \geq v_i(A_j^t  \setminus\{o\}).
\end{align*}
Thus, $\mathcal{A}$ is TEF1. 
\end{proof}
\end{document}